\newtheorem{theorem}{Theorem}
\newtheorem{lemma}[theorem]{Lemma}
\newcommand{\atan}{\mathrm{atan2}}
\newcommand{\diag}{\mathrm{diag}}
\newcommand{\ilc}[1]{\tcp*[h]{#1}}
\newtheorem{prop}{Proposition}
\title{A mixed effects cosinor modelling framework for circadian gene expression}
\author{%
	Michael T. Gorczyca \\
	MTG Research Consulting\\
	{\texttt{mtg62@cornell.edu}}
}
\begin{document}

\maketitle
\begin{abstract}
\noindent The cosinor model is frequently used to represent the oscillatory behavior of different genes over time. When data are collected from multiple individuals, the cosinor model is estimated with recorded gene expression levels and the 24 hour day-night cycle time at which gene expression levels are observed. However, the timing of many biological processes are based on individual-specific internal timing systems that are offset relative to day-night cycle time. When these individual-specific offsets are unknown, they pose a challenge in performing statistical analyses with a cosinor model. Specifically, when each individual participating in a study has a different offset, the parameter estimates of a population cosinor model obtained with day-night cycle time are attenuated. These attenuated parameter estimates also attenuate test statistics, which inflate type II error rates in identifying genes with oscillatory behavior. To address this attenuation bias, this paper proposes a method when data are collected in a longitudinal design. This method involves first estimating individual-specific and population cosinor models for each gene, and then translating the times at which an individual's gene expression levels are recorded based on the parameter estimates of these models. Simulation studies confirm that this method mitigates bias in estimation and inference. Illustrations with data from three circadian biology studies highlight that this method produces parameter estimates and test statistics akin to those obtained when each individual's offset is known.

 \end{abstract}

{\bf Keywords: } Circadian biology; Dim-light melatonin onset;  Measurement error; Mixed effects models; Two-stage methods

\section{Introduction} \label{sec:1}

Human physiology, metabolism, and behavior all exhibit diurnal variations consistent with a circadian rhythm \citep{Marcheva2013}. These rhythms are generated by an internal self-sustained oscillator located in the hypothalamic suprachiasmatic nucleus, which is often referred to as a circadian clock \citep{Bae2001, Hastings2018, Herzog2017, Yamaguchi2003}. An understanding of a patient's circadian clock has implications in maximizing that patient's quality of care, as survival rates for open-heart surgery \citep{Montaigne2018}, effectiveness of chemotherapy \citep{Dallmann2016}, and response to a vaccine \citep{Long2016} each vary based on the time of day that treatment is administered. However, there is currently limited use of the circadian clock in the timing of treatment administration \citep{Dallmann2014, Wittenbrink2018}.

One challenge with integrating the circadian clock into timing treatment administration is that an individual's circadian clock time (internal circadian time, or ICT) is often offset relative to day-night cycle time (Zeitgeber time, or ZT) \citep{Duffy2011, Lewy1999, Wittenbrink2018}. Laboratory tests can determine the time at which melatonin onset occurs under dim-light conditions (dim-light melatonin onset time, or DLMO time), which is the gold-standard marker of this offset \citep{Hughey2017, Kennaway2023, Lewy1999, Ruiz2020, Wittenbrink2018}. However, these laboratory tests require that multiple tissue samples are collected from an individual under controlled conditions and analyzed \citep{Kantermann2015, Kennaway2019, Kennaway2020, Reid2019}. This process places a costly and labor-intensive burden on investigators who study how different experimental interventions effect the oscillatory behavior of multiple genes, which has implications in identifying treatment strategies based on the circadian clock. These tests can also fail to produce precise DLMO time estimates depending on the devices used in DLMO time determination \citep{Kennaway2019, Kennaway2020}. However, the use of ZT instead of ICT in developing statistical models to represent the oscillatory behavior of different genes can introduce bias in model parameter estimates, which would bias statistical inferences \citep{Gorczycab2024, Gorczycaa2024, Sollberger1962, Weaver1995}.

This paper is motivated by the challenge of determining DLMO time in order to model the oscillatory behavior of different genes, and considers a scenario where a longitudinal design is utilized to record gene expression levels from each individual. Initially, we assess the suitability of a linear mixed effects cosinor model, which is commonly used to represent gene expression over time \citep{Archer2014, delolmo2022, Fontana2012, Hou2021, MllerLevet2013}, in accounting for the offset of ICT relative to ZT when each individual's offset is unknown. This assessment reveals that fixed parameter estimates for this model and test statistics computed from it are biased towards zero, or suffer from attenuation bias, when each individual possesses a unique offset. To mitigate this bias without performing laboratory tests, this paper proposes a method to identify an individual's offset prior to regression. Specifically, given multiple samples collected from each individual over time, this method involves first estimating individual-specific and population cosinor models, and then adjusting the ZTs recorded for that individual based on the parameter estimates of these models. 

The remainder of this paper is organized as follows. In Section \ref{sec:2}, background on the mixed effects cosinor model, motivating results, and an overview of the proposed method are presented. In Section \ref{sec:3}, Monte Carlo simulation studies are performed to assess the utility of the proposed method. In Section \ref{sec:4}, the proposed method is applied on data from three circadian biology studies, where it is found to produce parameter estimates and inferences comparable to those where each individual's DLMO time has been determined from laboratory tests. Finally, in Section \ref{sec:5}, the proposed method and directions for future work are discussed.

\section{Methodology} \label{sec:2}

\subsection{Background and notation} \label{sec:2.1}

Suppose a longitudinal circadian biology experiment is conducted with $M$ individuals, where $n_i$ tissue samples are collected from the $i$-th individual over time. In this paper, it is assumed that each collected tissue sample is processed to extract the expression levels of $G$ different genes. Specifically, let $\boldsymbol{Y_i^{(g)}}$ denote an $n_i\times 1$ vector $[Y^{(g)}_{i,1},\ldots, Y^{(g)}_{i,n_i}]^T$, in which $Y_{i,j}^{(g)}$ denotes the $j$-th recording of the $g$-th gene for the $i$-th individual. Further, let $\boldsymbol{X_i}$ denote an $n_i\times 1$ vector $[X_{i,1},\ldots, X_{i,n_i}]^T$, in which $X_{i,j}$ denotes the Zeitgeber time at which $Y_{i,j}^{(g)}$ occurs for all $g = 1,\ldots, G$. It is emphasized that this paper adopts a notation convention where estimators, statistics, and other quantities related to the $g$-th gene are denoted with $(g)$ in their superscript.

When each sample is independent of every other sample, a cosinor model is often specified for modelling gene expression given time, which has the nonlinear amplitude-phase representation
\begin{align} \label{eq:cos}
    Y^{(g)}_{i,j} &= \mu_0^{(g)} + \theta^{(g)}_1\cos\left(\frac{\pi X_{i,j}}{12} + \theta_2^{(g)}\right)+\epsilon^{(g)}_{i,j}.
\end{align} 
Here, $\mu_0^{(g)}$ denotes the mean expression levels of the $g$-th gene; $\theta_1^{(g)}$ denotes the amplitude of the $g$-th gene, or the deviation from mean expression levels to peak expression levels; and $\theta_2^{(g)}$ denotes the phase-shift of the $g$-th gene, which relates to the time at which gene expression levels peak \citep{Cornelissen2014}. In this paper, the gene-specific random noise  $\epsilon_{i,j}^{(g)}\sim \mathcal{N}(0, (\sigma^{(g)})^2)$. When the model in (\ref{eq:cos}) is specified, it can be transformed into the linear model
\begin{equation} \label{eq:3}
    Y^{(g)}_{i,j} = \mu_0^{(g)} + \beta^{(g)}_1\sin\left(\frac{\pi X_{i,j}}{12}\right)+\beta^{(g)}_2\cos\left(\frac{\pi X_{i,j}}{12}\right) + \epsilon^{(g)}_{i,j},
\end{equation}
where the identities
\begin{align}
    \theta_1^{(g)} = \sqrt{(\beta^{(g)}_{1})^2+(\beta^{(g)}_{2})^2}, \ \quad & \theta_2^{(g)} = \atan(-\beta^{(g)}_{1}, \beta^{(g)}_{2}), \nonumber \\
    \beta^{(g)}_{1} = -\theta_1^{(g)}\sin(\theta_2^{(g)}), \quad \quad \quad  & \beta^{(g)}_{2} = \theta_1^{(g)}\cos(\theta_2^{(g)}) \label{eq:alt_to_orig}
\end{align}
can be used to transform the linear model in (\ref{eq:3}) into the nonlinear model in (\ref{eq:cos}), and vice versa \citep{Tong1976}. It is noted that transforming this nonlinear model into a linear model enables unbiased estimation of its parameters without additional technical assumptions \citep[Theorem 6.7]{Boos2013}.

A common extension of the model in (\ref{eq:cos}) is to enable each individual to influence their repeated outcomes. Specifically, mixed effects models can enable this influence \citep{Davidian1995, Hedeker2006, McCulloch2000}, where the cosinor model would instead be specified as
\begin{align} \label{eq:cos_m}
    Y^{(g)}_{i,j} &= \mu_0^{(g)} + m_{i,0}^{(g)} + (\theta^{(g)}_1+c^{(g)}_{i,1})\cos\left(\frac{\pi X_{i,j}}{12} + \theta_2^{(g)}+c_{i,2}^{(g)}\right)+\epsilon^{(g)}_{i,j}.
\end{align} 
Here, the parameters $\mu_0^{(g)}$, $\theta^{(g)}_1$, and $\theta^{(g)}_2$ from (\ref{eq:cos}) maintain their interpretation, and represent constant parameter estimands for the population, or fixed effects. Additionally, $m_{i,0}^{(g)}$ denotes the $i$-th individual's influence on their mean expression levels of the $g$-th gene; $c^{(g)}_{i,1}$ denotes the $i$-th individual's influence on their amplitude of the $g$-th gene; and $c^{(g)}_{i,2}$ denotes the $i$-th individual's influence on their phase-shift of the $g$-th gene. It is emphasized that $[m_{i,0}^{(g)}, c^{(g)}_{i,1}, c^{(g)}_{i,2}]^T \sim P^{(g)}$, or parameters related to an individual's influence on their repeated outcomes are considered random effects generated from a gene-specific probability distribution $P^{(g)}$. In this paper, it is assumed that the expectation of each random effect is zero. Further, it is not assumed that each gene-specific $c^{(g)}_{i,2}$ equals an individual's dim-light melatonin onset (DLMO) time. 

Many circadian biology studies assume that (\ref{eq:cos_m}) can be expressed as the linear mixed effects model
\begin{align} \label{eq:lin_m}
    Y^{(g)}_{i,j} &= \left\{\mu^{(g)}_0+\beta^{(g)}_1\sin(X_{i,j})+\beta^{(g)}_2\cos(X_{i,j})\right\} \nonumber \\
    & \quad + \left\{m^{(g)}_{i,0}+b^{(g)}_{i,1}\sin(X_{i,j})+b^{(g)}_{i,2}\cos(X_{i,j}) \right\} + \epsilon^{(g)}_{i,j},
\end{align} 
where $[m^{(g)}_{i,0}, b^{(g)}_{i,1}, b^{(g)}_{i,2}]^T \sim \mathcal{N}(0, \Psi^{(g)})$ and are independent of the random noise $\epsilon^{(g)}_{i,j}$ for all $g$ \citep{Archer2014, delolmo2022, Fontana2012, Hou2021, MllerLevet2013}. Under these assumptions,
\begin{align*}
    \boldsymbol{Y_i^{(g)}} \sim \mathcal{N}(\boldsymbol{W_i}\beta^{(g)}, \boldsymbol{V_i^{(g)}}),
\end{align*} 
where $\beta^{(g)} =[{\mu}^{(g)}_0, {\beta}^{(g)}_1,  {\beta}^{(g)}_2]^T$ denotes the vector of fixed effect estimands for the $g$-th gene,
\begin{align*}
    \boldsymbol{W_i} &= \begin{bmatrix} 1 & \sin\left(\frac{\pi X_{i,1}}{12}\right) & \cos\left(\frac{\pi X_{i,1}}{12}\right) \\
    \vdots & \vdots & \vdots \\
    1 & \sin\left(\frac{\pi X_{i,n_i}}{12}\right) & \cos\left(\frac{\pi X_{i,n_i}}{12}\right)
    \end{bmatrix}
\end{align*} 
denotes the design matrix for estimating fixed effects, and
\begin{align*}
\boldsymbol{V_i^{(g)}} = (\sigma^{(g)})^2I_{n_i} + \boldsymbol{W_i} \Psi^{(g)}\boldsymbol{W_i}^T
\end{align*} 
denotes the covariance matrix of the $g$-th gene's expression levels for the $i$-th individual, with $I_{n_i}$ denoting an $n_i \times n_i$ identity matrix. The corresponding likelihood function that is maximized for estimating the parameters of this model is defined as
\begin{align*}
    L(\beta, \Psi, \sigma^2 \mid  \boldsymbol{W_1},\ldots,\boldsymbol{W_M}, \boldsymbol{Y_1^{(g)}},\ldots,\boldsymbol{Y_M^{(g)}}) = \prod_{i=1}^M\frac{\exp\left\{-\frac{(\boldsymbol{Y_i^{(g)}}-\boldsymbol{W_i}\beta)^T(\boldsymbol{V_i})^{-1}(\boldsymbol{Y_i^{(g)}}-\boldsymbol{W_i}\beta)}{2} \right\}}{(2\pi)^{n_i/2} \sqrt{|\boldsymbol{V_i}|}},
\end{align*} 
where $\boldsymbol{V_i} = \sigma^2I_{n_i} + \boldsymbol{W_i} \Psi\boldsymbol{W_i}^T$. When $(\sigma^{(g)})^2$ and $\Psi^{(g)}$ are known, the expression
\begin{align*}
    \hat{\beta}^{(g)} = \begin{bmatrix}
    \hat{\mu}^{(g)}_0 \\
    \hat{\beta}^{(g)}_1 \\
    \hat{\beta}^{(g)}_2
    \end{bmatrix} = \left(\sum_{i=1}^M\boldsymbol{W_i}^T(\boldsymbol{V_i^{(g)}})^{-1}\boldsymbol{W_i}\right)^{-1}\left(\sum_{i=1}^M\boldsymbol{W_i}^T(\boldsymbol{V_i^{(g)}})^{-1}\boldsymbol{Y_i^{(g)}}\right)
\end{align*} 
yields maximum likelihood estimates of the fixed effects [\citealp[Page 78]{Davidian1995}; \citealp[Equation 6.19]{McCulloch2000}]. When $(\sigma^{(g)})^2$ and $\Psi^{(g)}$ are unknown, it is not possible to obtain a closed-form expression of these parameter estimates, and an expectation-maximization algorithm is utilized to obtain them [\citealp[Chapter 4]{Hedeker2006}; \citealp{Laird1987}; \citealp[Chapter 14]{McCulloch2000}].

After parameter estimation, an investigator could perform hypothesis tests to assess the statistical significance of a gene's oscillatory behavior. This paper considers hypothesis testing with the fixed effects parameter vector $\hat{\beta}^{(g)}$, with which Wald-type test statistics can be computed as
\begin{align}
    \tau^{(g)} = (\hat{\beta}^{(g)} - \hat{\beta}^{(g)}_{\text{Null}})^TL^T\left[L\left\{\sum_{i=1}^M\boldsymbol{W_i}^T(\boldsymbol{\hat{V}_i^{(g)}})^{-1}\boldsymbol{W_i}\right\}^{-1}L^T\right]^{-1}L(\hat{\beta}^{(g)} - \hat{\beta}^{(g)}_{\text{Null}}). \label{eq:wald}
\end{align} 
Here, $L$ denotes a $d\times 3$ matrix, where $d\leq 3$; $\boldsymbol{\hat{V}_i^{(g)}}$ denotes an empirical estimate of the matrix $\boldsymbol{V_i^{(g)}}$; and $\hat{\beta}^{(g)}_{\text{Null}}$ denotes the empirical parameter vector estimate under the null hypothesis, which is $H_0: L(\beta^{(g)} - \beta^{(g)}_{\text{Null}}) = 0$ \citep{Halekoh2014}. This paper defines 
\begin{align*}
    L = \begin{bmatrix}
        0 & 1 & 0 \\
        0 & 0 & 1
    \end{bmatrix}
\end{align*} 
and $\hat{\beta}^{(g)}_{\text{Null}} = [0, 0, 0]^T$, which can be considered a determination of whether or not a gene displays oscillatory behavior \citep{Zong2023}. This test statistic has an asymptotic $\chi^2_2$ distribution, and the null hypothesis would be rejected at a pre-determined $\alpha$-level if $\tau^{(g)}$ surpasses the $1-\alpha$ percentile of this distribution \citep{Halekoh2014}. 

\subsection{Fixed effects estimation in the presence of random phase-shifts} \label{sec:2.2}
A long-standing challenge in estimating a cosinor model with data from multiple individuals is that these quantities are biased when each individual has a distinct, unknown phase-shift parameter, or there is ``phase variation'' \citep{Marron2015, Ramsay2003, Sollberger1962, Weaver1995}. Recent investigations into this phase variation were driven by the discrepancy between an individual's internal circadian time (ICT) and their recorded time (ZT), which was interpreted as covariate measurement error or misrecorded time data. Based on this interpretation, empirical and theoretical analyses were conducted to determine how cosinor model parameter estimates and test statistics are biased when ZT is used for regression \citep{Gorczycab2024, Gorczycaa2024}. Building upon these insights, this paper evaluates the accuracy of approximating the nonlinear mixed effects cosinor model in (\ref{eq:cos_m}) with the linear mixed effects cosinor model in (\ref{eq:lin_m}) when each individual's random phase-shift parameter is unknown. Specifically, we first introduce the following proposition, which quantifies this accuracy.

\begin{prop} \label{prop:1}
Suppose each $n_i = n$ and $X_{i,j} = 24(j-1)/n$. Assume the model in (\ref{eq:lin_m}) is specified to estimate the response in (\ref{eq:cos_m}), with both $(\sigma^{(g)})^2$ and $\Psi^{(g)}$ known such that $\Psi^{(g)} = \diag(\psi^{(g)}_1, \psi^{(g)}_2, \psi^{(g)}_3)$, or equals a $3\times 3$ diagonal matrix where $\psi^{(g)}_k$ denotes the $k$-th element along the diagonal. Further, assume $[m_{i,0}^{(g)}, c^{(g)}_{i,1}, c^{(g)}_{i,2}]^T \sim P^{(g)}$, where the corresponding probability density function of $P^{(g)}$ is symmetric around zero. Then the expectation of the fixed parameter estimates
\begin{align*}
    \mathbb{E}(\hat{\beta}^{(g)}) = \begin{bmatrix}
    \mathbb{E}(\hat{\mu}^{(g)}_0) \\
    \mathbb{E}(\hat{\beta}^{(g)}_1) \\
    \mathbb{E}(\hat{\beta}^{(g)}_2)
    \end{bmatrix} = \begin{bmatrix}
    \mu^{(g)}_0 \\
    \phi_{c^{(g)}_2}(1)\beta^{(g)}_1 \\
    \phi_{c^{(g)}_2}(1)\beta^{(g)}_2
    \end{bmatrix},
\end{align*} 
where $\phi_{c^{(g)}_2}(t)$ denotes the characteristic function for $c^{(g)}_{i,2}$ evaluated at $t$. Further, the Wald test statistic in (\ref{eq:wald}) computed with the expectation of these parameter estimates can be expressed as
\begin{align*}
    \tau^{(g)} 
    &= \frac{Mn\phi^2_{c^{(g)}_2}(1)}{2} \left\{\left(\frac{1}{(\sigma^{(g)})^2} - \frac{n\psi^{(g)}_2}{(\sigma^{(g)})^2\{n\psi^{(g)}_2+2(\sigma^{(g)})^2\}}\right)(\beta_1^{(g)})^2 \right. \\
    &\quad \quad + \left. \left(\frac{1}{(\sigma^{(g)})^2} - \frac{n\psi^{(g)}_3}{(\sigma^{(g)})^2(n\psi^{(g)}_3+2(\sigma^{(g)})^2\}}\right)(\beta^{(g)}_2)^2 \right\}.
\end{align*} 
\end{prop}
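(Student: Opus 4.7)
The plan is to split the claim into two parts---the expectation of $\hat{\beta}^{(g)}$ and the evaluation of $\tau^{(g)}$ at that expectation---and to exploit the fact that the equispaced design $X_{i,j} = 24(j-1)/n$ makes $\boldsymbol{W_i}$ identical across $i$ with orthogonal columns. Since the generalized least squares estimator is linear in the responses, one has $\mathbb{E}(\hat{\beta}^{(g)}) = \bigl(\sum_i \boldsymbol{W_i}^T(\boldsymbol{V_i^{(g)}})^{-1}\boldsymbol{W_i}\bigr)^{-1}\sum_i \boldsymbol{W_i}^T(\boldsymbol{V_i^{(g)}})^{-1}\mathbb{E}(\boldsymbol{Y_i^{(g)}})$, where the outer expectation is taken under the data-generating mechanism (\ref{eq:cos_m}). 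The work reduces to (i) identifying $\mathbb{E}(\boldsymbol{Y_i^{(g)}})$ and (ii) simplifying the information matrix so that step (i) composes cleanly with it.

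For step (i), I would expand
\[(\theta_1^{(g)} + c^{(g)}_{i,1})\cos(\pi X_{i,j}/12 + \theta_2^{(g)} + c^{(g)}_{i,2})\]
via the cosine sum formula into eight scalar-times-random-moment terms. The symmetry of $P^{(g)}$ about the origin kills $\mathbb{E}(m^{(g)}_{i,0})$, $\mathbb{E}(\sin c^{(g)}_{i,2})$, $\mathbb{E}(c^{(g)}_{i,1}\cos c^{(g)}_{i,2})$, and, under the operative reading of ``symmetric around zero'' as componentwise symmetry of the marginals, also $\mathbb{E}(c^{(g)}_{i,1}\sin c^{(g)}_{i,2})$; what survives is $\mathbb{E}(\cos c^{(g)}_{i,2}) = \phi_{c^{(g)}_2}(1)$, real-valued by symmetry. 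Collecting terms and applying the identities in (\ref{eq:alt_to_orig}) writes $\mathbb{E}(Y^{(g)}_{i,j}) = \mu^{(g)}_0 + \phi_{c^{(g)}_2}(1)\bigl\{\beta_1^{(g)}\sin(\pi X_{i,j}/12) + \beta_2^{(g)}\cos(\pi X_{i,j}/12)\bigr\}$, which is exactly $\boldsymbol{W_i}\,[\mu^{(g)}_0, \phi_{c^{(g)}_2}(1)\beta_1^{(g)}, \phi_{c^{(g)}_2}(1)\beta_2^{(g)}]^T$. Substituting into the closed-form estimator immediately gives the stated $\mathbb{E}(\hat{\beta}^{(g)})$---the information-matrix prefactor cancels without any further calculation.

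For step (ii), the equispaced grid sends $\pi X_{i,j}/12$ through the arguments of the $n$-th roots of unity, so the usual summation identities give $\boldsymbol{W_i}^T\boldsymbol{W_i} = \diag(n,n/2,n/2)$. Because $\Psi^{(g)}$ is diagonal as well, the Woodbury identity applied to $\boldsymbol{V_i^{(g)}} = (\sigma^{(g)})^2 I_n + \boldsymbol{W_i}\Psi^{(g)}\boldsymbol{W_i}^T$ yields the diagonal form
\[\sum_{i=1}^M \boldsymbol{W_i}^T(\boldsymbol{V_i^{(g)}})^{-1}\boldsymbol{W_i} = M\,\diag\!\left(\frac{n}{(\sigma^{(g)})^2 + n\psi^{(g)}_1},\ \frac{n}{2(\sigma^{(g)})^2 + n\psi^{(g)}_2},\ \frac{n}{2(\sigma^{(g)})^2 + n\psi^{(g)}_3}\right).\]
With the specified $L$ and $\hat{\beta}^{(g)}_{\text{Null}} = 0$, the quadratic form in (\ref{eq:wald}) extracts the lower-right $2 \times 2$ block and inverts it, producing diagonal weights $Mn/(2(\sigma^{(g)})^2 + n\psi^{(g)}_k)$ for $k = 2, 3$. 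Combined with $L\,\mathbb{E}(\hat{\beta}^{(g)}) = \phi_{c^{(g)}_2}(1)[\beta_1^{(g)},\beta_2^{(g)}]^T$, this produces
\[\tau^{(g)} = Mn\phi_{c^{(g)}_2}^2(1)\!\left\{\frac{(\beta_1^{(g)})^2}{n\psi^{(g)}_2 + 2(\sigma^{(g)})^2} + \frac{(\beta_2^{(g)})^2}{n\psi^{(g)}_3 + 2(\sigma^{(g)})^2}\right\},\]
which matches the stated expression after invoking the algebraic identity $\tfrac{1}{\sigma^2} - \tfrac{n\psi}{\sigma^2(n\psi + 2\sigma^2)} = \tfrac{2}{n\psi + 2\sigma^2}$.

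The main obstacle, in my view, is the symmetry step in (i): the ``pdf symmetric around zero'' hypothesis, read literally as joint reflection invariance $(m^{(g)}_{i,0}, c^{(g)}_{i,1}, c^{(g)}_{i,2}) \stackrel{d}{=} (-m^{(g)}_{i,0}, -c^{(g)}_{i,1}, -c^{(g)}_{i,2})$, does not kill the even cross-moment $\mathbb{E}(c^{(g)}_{i,1}\sin c^{(g)}_{i,2})$ on its own, so the stated conclusion requires the stronger componentwise (or independence) reading, which I would make explicit before proceeding. Everything downstream---the Woodbury reduction, the inversion of a $2 \times 2$ diagonal block, and the final algebraic identity---is mechanical.
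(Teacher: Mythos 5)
Your proposal is correct and follows essentially the same route as the paper: the expected response $\mathbb{E}(Y^{(g)}_{i,j}) = \mu_0^{(g)} + \phi_{c_2^{(g)}}(1)\{\beta_1^{(g)}\sin(\pi X_{i,j}/12)+\beta_2^{(g)}\cos(\pi X_{i,j}/12)\}$ is the paper's Lemma \ref{lem:3}, the Woodbury/orthogonality reduction of $\sum_i \boldsymbol{W_i}^T(\boldsymbol{V_i^{(g)}})^{-1}\boldsymbol{W_i}$ to a diagonal matrix is Lemma \ref{lem:2}, and your cleaner forms $n/(n\psi_k^{(g)}+2(\sigma^{(g)})^2)$ agree with the paper's unsimplified entries via the identity you cite. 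Your caveat about the symmetry hypothesis is well taken and is in fact exactly how the paper's Lemma \ref{lem:1} operates --- its proof reflects $c_1 \mapsto -c_1$ while holding $c_2$ fixed, i.e., it implicitly adopts the componentwise reading you say should be made explicit, since joint central symmetry alone would not annihilate $\mathbb{E}\{c^{(g)}_{i,1}\sin(c^{(g)}_{i,2})\}$.
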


\noindent A derivation for this result is provided in \ref{app:B}. To clarify the setup for this result, when each $X_{i,j} = 24(j-1)/n$, tissue samples are collected from an equispaced design. This design is optimal for minimizing the estimation variance of a cosinor model in (\ref{eq:3}) under multiple statistical criteria \citep[Pages 241-243]{Pukelsheim2006} as well as for the mixed effects cosinor model in (\ref{eq:lin_m}) \citep{Mentre1997}. Further, when each $n_i = n$, no samples are removed prior to estimation, which can occur when a collected tissue sample is poor quality \citep{Laurie2010}. It is noted that symmetry is assumed for the probability density function of $P^{(g)}$. As a consequence, the marginal density function for $c^{(g)}_{i,2}$ is also symmetric, which results in $\phi_{c_2^{(g)}}(t)$ being a real-valued function that is bounded by one in magnitude.

The significance of Proposition \ref{prop:1} is it presents a scenario where an investigator employs a common experimental design and has additional knowledge of $(\sigma^{(g)})^2$ and $\Psi^{(g)}$. In this scenario, the magnitude of the fixed parameter estimates $\hat{\beta}_1^{(g)}$ and $\hat{\beta}_2^{(g)}$ are biased towards zero, as the characteristic function $\phi_{c_2^{(g)}}(t)$ is bounded by one in magnitude. Further, application of (\ref{eq:alt_to_orig}) yields
\begin{align*}
    \sqrt{\mathbb{E}(\hat{\beta}^{(g)}_1)^2 + \mathbb{E}(\hat{\beta}^{(g)}_2)^2}  = |\phi_{c^{(g)}_2}(1)|\theta^{(g)}_1, \quad \atan\{-\mathbb{E}(\hat{\beta}^{(g)}_1), \mathbb{E}(\hat{\beta}^{(g)}_2)\}  = \theta^{(g)}_2,
\end{align*} 
which indicates that the population amplitude parameter estimate is biased, while the population phase-shift parameter estimate is unbiased. These results also indicate that the test statistic $\tau^{(g)}$ is attenuated by $\phi^2_{c^{(g)}_2}(1)$ relative to the test statistic obtained in the scenario where there is no phase variation. It is noted that recent investigations derived the same bias in model parameter estimates for the cosinor model in (\ref{eq:3}; \citealt{Gorczycaa2024}). Further, when there is no phase-shift variation and each sample is independent, the test statistic simplifies to $\tau^{(g)} = \{Mn(\theta^{(g)}_1)^2\}/\{2(\sigma^{(g)})^2\}$, which has been identified in other research efforts \citep{Silverthorne2024, Zong2023}.

To understand when the linear mixed effects model in (\ref{eq:lin_m}) produces unbiased parameter estimates if the random phase-shifts are unknown, an alternate representation that is equivalent to it is
\begin{align} \label{eq:res_2}
Y^{(g)}_{i,j} &= (\mu^{(g)}_0 + m^{(g)}_{i, 0}) + \theta^{(g)}_1\cos(X_{i,j}+\theta^{(g)}_{2}) + c^{(g)}_{i,1}\cos(X_{i,j}+c^{(g)}_{i,2}) + \epsilon^{(g)}_{i,j},
\end{align} 
which separates fixed and random effects into different nonlinear components \citep{Mikulich2003}. The following proposition transforms the model in (\ref{eq:res_2}) into the model in (\ref{eq:cos_m}).
\begin{prop} \label{prop:2}
The model in (\ref{eq:res_2}) can be expressed as 
\begin{align*}
Y^{(g)}_{i,j} &= (\mu^{(g)}_0 + m^{(g)}_{i, 0}) + \sqrt{\left\{\theta^{(g)}_1\sin(\theta^{(g)}_2) + c^{(g)}_{i,1}\sin(c^{(g)}_{i,2})\right\}^2 + \left\{\theta^{(g)}_1\cos(\theta^{(g)}_2) + c^{(g)}_{i,1}\cos(c^{(g)}_{i,2})\right\}^2} \nonumber \\
    & \quad \quad \times \cos\left[\frac{\pi X_{i,j}}{12} + \atan\left\{\theta^{(g)}_1\sin(\theta^{(g)}_2) + c^{(g)}_{i,1}\sin(c^{(g)}_{i,2}), \theta^{(g)}_1\cos(\theta^{(g)}_2) + c^{(g)}_{i,1}\cos(c^{(g)}_{i,2})\right\}\right] + \epsilon^{(g)}_{i,j},
\end{align*} 
where $\atan(a, b)$ denotes the two argument arctangent function.
\end{prop}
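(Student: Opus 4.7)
The plan is to treat this as a direct trigonometric identity: the right-hand side of (\ref{eq:res_2}) contains two cosines sharing the same argument frequency $\pi X_{i,j}/12$ but with different amplitudes and phase shifts, and any such sum collapses into a single amplitude-phase cosine. The fixed-effect and random-effect pieces $(\mu^{(g)}_0 + m^{(g)}_{i,0})$ and $\epsilon^{(g)}_{i,j}$ play no role in the identity, so the task reduces to rewriting $\theta^{(g)}_1\cos(\pi X_{i,j}/12 + \theta^{(g)}_2) + c^{(g)}_{i,1}\cos(\pi X_{i,j}/12 + c^{(g)}_{i,2})$ in the single-cosine form stated.

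First, I would apply the angle-addition formula $\cos(u+v) = \cos(u)\cos(v) - \sin(u)\sin(v)$ to both cosine terms, with $u = \pi X_{i,j}/12$ and $v$ taken to be $\theta^{(g)}_2$ and $c^{(g)}_{i,2}$ respectively. Collecting the coefficients of $\cos(\pi X_{i,j}/12)$ and $\sin(\pi X_{i,j}/12)$ gives the representation
\begin{align*}
A^{(g)}_i \cos\!\left(\tfrac{\pi X_{i,j}}{12}\right) - B^{(g)}_i \sin\!\left(\tfrac{\pi X_{i,j}}{12}\right),
\end{align*}
where $A^{(g)}_i = \theta^{(g)}_1\cos(\theta^{(g)}_2) + c^{(g)}_{i,1}\cos(c^{(g)}_{i,2})$ and $B^{(g)}_i = \theta^{(g)}_1\sin(\theta^{(g)}_2) + c^{(g)}_{i,1}\sin(c^{(g)}_{i,2})$.

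Next, I would invoke the identity $A\cos(u) - B\sin(u) = \sqrt{A^2+B^2}\,\cos(u + \atan(B, A))$, which follows from setting $R = \sqrt{A^2+B^2}$ and using $(A,B) = (R\cos\phi, R\sin\phi)$ with the two-argument arctangent returning the unique $\phi$ in $(-\pi,\pi]$ consistent with that sign pattern. Substituting $A = A^{(g)}_i$ and $B = B^{(g)}_i$ yields exactly the amplitude $\sqrt{(B^{(g)}_i)^2 + (A^{(g)}_i)^2}$ and phase $\atan(B^{(g)}_i, A^{(g)}_i)$ displayed in the claim, and restoring $(\mu^{(g)}_0 + m^{(g)}_{i,0})$ and $\epsilon^{(g)}_{i,j}$ completes the rewrite.

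There is no real obstacle here; the proof is a short algebraic manipulation. The only point requiring care is the sign/quadrant convention: because $\cos(u+\phi)$ expands with a minus sign on the $\sin$ term, the argument order $\atan(B, A)$ (rather than $\atan(A, B)$) must be used so that the resulting $\phi$ has $\sin(\phi) = B/R$ and $\cos(\phi) = A/R$ with the correct quadrant. Matching this convention to the statement's ordering of arguments inside $\atan$ is the single consistency check that needs to be verified explicitly.
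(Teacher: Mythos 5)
Your proof is correct and follows essentially the same route as the paper: expand both cosines via the angle-addition formula, collect the $\sin(\pi X_{i,j}/12)$ and $\cos(\pi X_{i,j}/12)$ coefficients, and recombine into a single amplitude-phase cosine (the paper invokes the identities in (\ref{eq:alt_to_orig}) for this last step, which is the same computation you carry out directly). Your explicit check of the $\atan$ argument ordering is the right consistency point to verify, and your conclusion matches the proposition as stated.
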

\noindent A derivation for this result is provided in \ref{app:C}. Proposition \ref{prop:2} implies that when each individual's random phase-shift is unknown, population parameter estimates obtained from specifying the linear mixed effects model in (\ref{eq:lin_m}) are biased unless the conditions
\begin{align*}
\theta_1^{(g)} &= \mathbb{E}\left[\sqrt{\left\{\theta^{(g)}_1\sin(\theta^{(g)}_2) + c^{(g)}_{i,1}\sin(c^{(g)}_{i,2})\right\}^2 + \left\{\theta^{(g)}_1\cos(\theta^{(g)}_2) + c^{(g)}_{i,1}\cos(c^{(g)}_{i,2})\right\}^2} \right], \\
\theta_2^{(g)} &= \mathbb{E}\left[\atan\left\{\theta^{(g)}_1\sin(\theta^{(g)}_2) + c^{(g)}_{i,1}\sin(c^{(g)}_{i,2}), \theta^{(g)}_1\cos(\theta^{(g)}_2) + c^{(g)}_{i,1}\cos(c^{(g)}_{i,2})\right\} \right]
\end{align*} 
are satisfied.

\subsection{Overview of method} \label{sec:2.3}
Propositions \ref{prop:1} and \ref{prop:2} highlight that the parameter estimates of a linear mixed effects cosinor model and test statistics computed with this model are biased when there is phase variation and each individual's phase-shift parameter is unknown. A practical consequence of not addressing these biases is that study conclusions made from a linear mixed effects cosinor model would be inaccurate. For example, the attenuation bias in parameter estimates and test statistics presented in Proposition \ref{prop:1} would result in genes being misclassified as having statistically insignificant oscillatory behavior in a circadian biology study.

To address this bias, an investigator could collect a sufficient number of samples from each individual to estimate individual-specific cosinor models for each gene. In this scenario, an investigator would then utilize two-stage methods, which leverage individual-specific nonlinear model parameter estimates as building blocks for computing population model parameters and test statistics \citep[Chapter 5]{Davidian1995}. A challenge in applying these methods to the cosinor model is that random phase-shifts are distinct across genes and not necessarily indicative of dim-light melatonin onset (DLMO) time across individuals. For example, approximately 50\% of mammalian genes display tissue-dependent oscillatory behavior \citep{Mure2018, Ruben2018, Zhang2014}. Additionally, some biological processes, such as the secretion of some hormones relevant to thyroid function or pregnancy, exhibit minimal diurnal variations \citep{Kennaway2019, Kennaway2020}. This lack of oscillatory behavior implies that individual-specific phase-shift estimates for these genes could represent random noise and not align with each individual's DLMO time. Further, a small sample size can exacerbate the unreliability of these parameter estimates \citep{Brooks2023}.

To recover parameter estimates and inferences that would be obtained when each individual's DLMO time is known, we propose is a method based on two-stage methods that involves the following steps. For each step, relevant comments are provided to clarify their significance and implications, and an algorithmic summary is provided in Algorithm \ref{salgo:1}:
\begin{description}
    \item[Step 1.] For each gene (indexed by $g$), estimate a linear mixed effects cosinor model using every individual's data, or $(\boldsymbol{X_i}, \boldsymbol{Y^{(g)}_i}; i = 1,\ldots, M)$.
    \item[Step 2.] For each gene-specific linear mixed effects cosinor model estimated in Step 1, compute the population phase-shift parameter estimate ($\hat{\theta}_{2}^{(g)}$) with the identities in (\ref{eq:alt_to_orig}). 
    \item[Step 3.] For each gene (indexed by $g$), estimate a linear cosinor model with each individual's data (indexed by $i$), or $(\boldsymbol{X_i}, \boldsymbol{Y^{(g)}_i})$.
    \item[Step 4.] For each individual-specific and gene-specific cosinor model estimated in Step 3, compute the individual-specific phase-shift parameter estimate, denoted as $\hat{\theta}_{i,2}^{(g)}$, with the identities in (\ref{eq:alt_to_orig}). 
    \item[Step 5.] Define $$\hat{w}^{(g)}_i = \frac{\frac{1}{\mathrm{Var}(\hat{\theta}^{(g)}_{i,2})}}{\frac{1}{\mathrm{Var}(\hat{\theta}^{(g)}_2)}+\frac{1}{\mathrm{Var}(\hat{\theta}^{(g)}_{i,2})}}$$and 
    \begin{align*}
        \hat{d}^{(g)}_{i} &= \left[\mathrm{atan2}\left\{\hat{w}_i^{(g)}\sin(\hat{\theta}^{(g)}_{i,2}) + (1-\hat{w}_i^{(g)})\sin(\hat{\theta}^{(g)}_{2}), \right. \right. \\
        & \quad \left. \left. \hat{w}_i^{(g)}\cos(\hat{\theta}^{(g)}_{i,2}) + (1-\hat{w}_i^{(g)})\cos(\hat{\theta}^{(g)}_{2})\right\} - \hat{\theta}^{(g)}_2\right] \mathrm{mod} \ 2\pi.
    \end{align*} 
    Here, $\mathrm{Var}(\hat{\theta}^{(g)}_2)$ denotes the estimated variance of $\hat{\theta}^{(g)}_2$, and $\mathrm{Var}(\hat{\theta}^{(g)}_{i,2})$ denotes the estimated variance of $\hat{\theta}^{(g)}_{i,2}$.
    \item[Step 6.]  Define $\boldsymbol{\tilde{X}_i} = \boldsymbol{X_i}+\tilde{d}_{i}$, where $$\tilde{d}_{i} = \frac{12}{\pi}\mathrm{atan2}\left\{\left(\frac{1}{\sum_{g=1}^G\frac{1}{1-\Omega^{(g)}_{ }}}\right)\sum_{g=1}^G\frac{\sin(\hat{d}_{i}^{(g)})}{1-\Omega^{(g)}_{ }},\ \left(\frac{1}{\sum_{g=1}^G\frac{1}{1-\Omega^{(g)}_{ }}}\right)\sum_{g=1}^G\frac{\cos(\hat{d}_{i}^{(g)})}{1-\Omega^{(g)}_{ }}\right\},$$with \begin{align} \Omega^{(g)}_{ } = \sqrt{\left\{\frac{1}{M}\sum_{i=1}^M\sin(\theta_{i,2}^{(g)})\right\}^2+\left\{\frac{1}{M}\sum_{i=1}^M\cos(\theta_{i,2}^{(g)})\right\}^2}. \label{eq:R_val} \end{align} 
    \item[Step 7.] Estimate a linear mixed effects cosinor model with gene-specific translated data from every individual, or $\{(\boldsymbol{\tilde{X}_i}, \boldsymbol{Y_i^{(g)}}), i = 1,\ldots, M\}$ .
\end{description}
\noindent Steps 1-4 obtain individual-specific and population phase-shift parameter estimates, denoted as $\hat{\theta}_{i,2}^{(g)}$ and $\hat{\theta}_{i}^{(g)}$, respectively. Step 5 then computes a weighted circular average of $\hat{\theta}_{i,2}^{(g)}$ and $\hat{\theta}_2^{(g)}$, where the corresponding unweighted average $\bar{\omega}$ would be defined as
\begin{align*}
    \bar{\omega} = \mathrm{atan2}\left\{\frac{1}{N}\sum_{j=1}^N\sin(\omega_j), \frac{1}{N}\sum_{j=1}^N\cos(\omega_j)\right\}
\end{align*} 
given measurements $\omega_j$, with $j=1,\ldots, N$ \citep[Section 2.2]{Mardia1999}. Here, $\hat{w}^{(g)}_i$ is an inverse-variance weight based on the phase-shift estimates \citep{Cochran1953}, where the variance of a phase-shift estimate is derived in \ref{app:D}. The difference between this weighted average and $\hat{\theta}_2^{(g)}$ is then computed modulo $2\pi$. The motivation for Step 5 is that if $\hat{d}_i^{(g)}$ is computed without accounting for the circular nature of the phase-shift parameter, then
\begin{align}
    \hat{d}^{(g)}_{i} &= \hat{w}_i^{(g)}\hat{\theta}^{(g)}_{i,2} + (1-\hat{w}_i^{(g)})\hat{\theta}^{(g)}_{2} - \hat{\theta}^{(g)}_2 \nonumber \\
    &= \hat{w}_i^{(g)}(\hat{\theta}^{(g)}_{i,2}-\hat{\theta}^{(g)}_{2}). \label{eq:s5}
\end{align} 
If it is also assumed that $\theta_{i,2}^{(g)} = \theta_{2}^{(g)} + c_{i,2}^{(g)}$, which is common in nonlinear mixed effects modelling \citep[Equation 4.4]{Davidian1995}, then (\ref{eq:s5}) implies that $d^{(g)}_i = w_i^{(g)}c_{i,2}^{(g)}$, where the weight $w_i^{(g)}$ increases as variance in estimation of $\hat{\theta}^{(g)}_{i,2}$ decreases. It is noted that this weighting has been interpreted as shrinkage estimation of individual-specific parameter estimates toward population estimates \citep{Davidian1995, Steimer1984}. In Step 6, an additional inverse-variance weighted circular average of each $\hat{d}_{i}^{(g)}$ is computed across all $G$ genes. To clarify, the quantity $1-\Omega^{(g)}_{ }$ is the circular variance of each $\theta^{(g)}_{i,2}$ across all $M$ individuals \citep[Section 2.3]{Mardia1999}, where $\Omega^{(g)}_{ }$ is defined in (\ref{eq:R_val}). The times of sample collection for the $i$-th individual are then translated by the quantity obtained from this weighted average, which is denoted as $\tilde{d}_{i}$. In Step 7, a linear mixed effects model is estimated with this translated data.

It is noted that the population phase-shift estimates $\theta_2^{(g)}$ obtained with this method can be different from those obtained with a linear mixed effects model where the times of sample collection are not translated. If an investigator is interested in analyzing estimates of $\theta_2^{(g)}$ after translation, this paper recommends that an additional adjustment is performed to ensure each $\hat{\theta}_2^{(g)}$ is the same regardless of whether $\boldsymbol{\tilde{X}_i}$ or $\boldsymbol{X_i}$ is used for estimation. This recommendation is supported by Proposition \ref{prop:1}, which implies that phase-shift estimates are unbiased when generated by a symmetric probability distribution. 

\section{Simulation Study} \label{sec:3}

A simulation study is conducted to evaluate the method proposed in Section \ref{sec:2.3}. Briefly, six settings from the $\text{BioCycle}_{\text{Form}}$ simulation software are considered, where each setting represents a scenario in which a simulated gene displays periodic behavior over a 24 hour interval \citep{Agostinelli2016, Ceglia2018}. It is noted that for the first simulation setting in this study, the cosinor model correctly represents the oscillatory behavior of a gene. For the remaining five simulation settings, the cosinor model is mis-specified. Additional details about each simulation setting are provided in \ref{app:E}.

For each setting in this simulation study, 2,000 simulation trials are performed. Further, in each simulation trial, two data sets are generated: one in which each individual has a random phase-shift parameter, and one in which there are no random phase-shift parameters. It is emphasized that this simulation study assumes each individual's dim-light melatonin onset (DLMO) time equals that individual's random phase-shift parameter. As a consequence, the second data set represents a scenario where each individual's DLMO time is known. 

This simulation study considers three estimation frameworks to obtain linear mixed effects cosinor models from these two data sets:
\begin{description}
    \item[Framework 1.] A linear mixed effects cosinor model is estimated using the method in Section \ref{sec:2.3} with the data set where each individual has an unknown random phase-shift parameter.
    \item[Framework 2.] A linear mixed effects cosinor model is estimated with the data set where each individual has an unknown random phase-shift parameter.
    \item[Framework 3.] A linear mixed effects cosinor model is estimated with the data set where there is no phase variation.
 \end{description}
It is also emphasized that only one gene is generated in each simulation trial. When a single gene is available, Step 6 of the method presented in Section \ref{sec:2.3} (Framework 1) would set $\tilde{d}_{i} = \hat{d}^{(g)}_{i}$. 

After conducting 2,000 simulation trials, the mean and standard deviation of the following quantities are reported:
\begin{itemize}
    \item[1.] $\hat{\theta}_1^{(g)}$, or the estimated amplitude in a simulation trial.
    \item[2.] $\tau^{(g)}$, or the computed Wald test statistic from (\ref{eq:wald}) in a simulation trial.
\end{itemize} 
It is noted that the phase-shift estimate is not considered, as Framework 1 can be modified to produce the same phase-shifts as Framework 2.

Table \ref{tab:sim1} presents the results for each simulation setting. Framework 1, or the method in Section \ref{sec:2.3}, consistently mitigates bias in amplitude estimates and test statistics produced in the presence of phase variation. Framework 2, on the other hand, produces corresponding quantities that suffer from attenuation bias.

\section{Real Data Illustrations} \label{sec:4}

Data from three longitudinal circadian biology studies are considered for illustration: Archer and colleagues produced data from two cohorts (a control cohort and an intervention cohort) \cite{Archer2014}; Braun and colleagues produced data from a single cohort \cite{Braun2018}; and M\"{o}ller-Levet and colleagues produced data from two cohorts (a control cohort and an intervention cohort) \cite{MllerLevet2013}. Each data set has been described in detail in their respective studies and have been previously summarized \cite{Gorczycab2024}. Each data set has also been processed in subsequent studies and made publicly available \cite{Huang2024}. The significance of each data set is that their corresponding studies performed laboratory tests to determine the time that each participant starts to produce melatonin under dim-light conditions (dim-light melatonin onset time, or DLMO time). Further, each study considered a participant's DLMO time to be the offset of their internal circadian time (ICT) relative to Zeitgeber time (ZT). It is emphasized that gene expression measurements and offsets are unavailable for some participants in these processed data sets. In this illustration, genes with missing expression measurements and samples from participants with missing offsets within cohort-specific data are excluded before cosinor model estimation and inference. It is also noted that this illustration constructs two additional cohorts: one where data from both cohorts produced by Archer and colleagues are combined, and another where data from both cohorts produced by M\"{o}ller-Levet and colleagues are combined. 

The illustrations follow the evaluation procedure from Section \ref{sec:3}. Specifically, for each sample population, fixed amplitude parameters from (\ref{eq:cos_m}) and Wald test statistics from (\ref{eq:wald}) are computed from Frameworks 1 and 2 given ZT. Framework 3 computes these same quantities given ICT, which are considered the true quantities in this illustration. It is emphasized that each cohort-specific data set consists of multiple genes. To account for this, a linear model-based evaluation is presented in this illustration \citep{Gorczycab2024}. In this evaluation, the covariate of this linear model is specified as a quantity obtained from a framework given ZT, while the response variable is the corresponding true quantity obtained from Framework 3. No intercept term is specified in this linear model, which results in a single regression parameter estimate, $\hat{\gamma}$. The parameter estimate $\hat{\gamma}$ is then used to assess the relationship between the quantities obtained from each framework. If $\hat{\gamma}=1$, a linear relationship exists between a quantity obtained with ZT and a quantity obtained with ICT. If $\hat{\gamma} > 1$, then quantities obtained with ICT are consistently larger than quantities obtained with ZT. If $\hat{\gamma} < 1$, then quantities obtained with ZT are consistently larger than quantities obtained with ICT. This assessment also reports the coefficient of determination ($R^2$) for each linear model, where higher $R^2$ values signify greater precision in linear model fit. In this illustration, a stronger performing framework would obtain $\hat{\gamma}$ a value closer to one and a large $R^2$ value. It is noted that this evaluation is twice. The first evaluation is performed with quantities obtained using every single gene available in a data set. The second evaluation is performed with every ``gold standard circadian gene'' known to display oscillatory behavior available in a data set \citep{Mei2020}.

Table \ref{tab:app} presents linear model parameter estimates $\hat{\gamma}$ and coefficients of determination $R^2$ computed from each cohort's data when the linear model is computed with every gene available. Framework 1, or the method proposed in Section \ref{sec:2.3}, consistently reduces bias in amplitude estimation and Wald test statistic calculation, or produced $\hat{\gamma}$ quantities that are closer to one in numeric value than those produced with Framework 2. Framework 1 outperforms Framework 2 in amplitude estimation for six of the seven sample populations (Framework 2 outperforms Framework 1 on intervention cohort data produced by Archer and colleagues), and Wald test statistic calculation on every data set. Table \ref{tab:app} highlights that the estimated $\hat{\gamma}$ values are consistently larger than one for Framework 2, which implies that estimation of a linear mixed effects model given ZT results in attenuated parameter estimates and test statistics. Corresponding scatter plots for linear model fits are provided in Figures \ref{fig:amp_w} (for amplitude estimates) and \ref{fig:test_w} (for Wald test statistics).

Table \ref{tab:app2} presents linear model parameter estimates $\hat{\gamma}$ and coefficients of determination $R^2$ computed from each cohort's data when the linear model is computed with every gold standard circadian gene available. Similar to Table \ref{tab:app}, Framework 1 obtains $\hat{\gamma}$ values closer to one on five sample populations for amplitude estimation, and six of the seven sample populations for Wald test statistic calculation. Corresponding scatter plots for linear model fits are provided in Figures \ref{fig:amp_g} (for amplitude estimates) and \ref{fig:test_g} (for Wald test statistics).

\section{Discussion} \label{sec:5}

In this paper, a method is introduced that addresses the offset between an individual's internal circadian time (ICT) and the 24 hour day-night cycle time (Zeitgeber time, or ZT) when using a cosinor model to represent a gene's oscillatory behavior. Specifically, this method is designed for scenarios where multiple samples are gathered from each study participant over time, and utilizes phase-shift estimates from both population cosinor models and individual-specific cosinor models to adjust each individual's ZTs of sample collection prior to estimation and inference. To motivate this method, Proposition \ref{prop:1} reveals that the linear mixed effects cosinor model, which is prevalent in circadian biology studies \citep{Archer2014, delolmo2022, Fontana2012, Hou2021, MllerLevet2013}, fails to account for each individual's offset. As a consequence, the linear mixed effects cosinor model yields biased parameter estimates and inferences when specified. Empirical findings from Tables \ref{tab:app} and \ref{tab:app2} empirically validate this finding, as these tables show amplitude estimates and Wald test statistics computed with ZT are consistently smaller than when they are computed with ICT. The method proposed in this paper, on the other hand, consistently mitigates bias in amplitude estimation and test statistic calculation and produces quantities that are more similar to those produced with ICT. 

Despite the success of the method in simulation studies and real data illustrations, it is important to recognize the method's limitations. First, not every gene displays oscillatory behavior consistent with the cosinor model, and the true oscillatory behavior of many genes are unknown. As a consequence, the cosinor model only correctly represents genes that display sinusoidal oscillations. However, this sinusoidal behavior is commonly assumed in many studies \citep{Archer2014, delolmo2022, Fontana2012, Hou2021, MllerLevet2013}, and it enables interpretable parameter estimates and hypothesis tests. Second, this method does not guarantee that the translated times obtained from it will equal each individual's ICTs of sample collection. To clarify, the method translates each individual's ZTs of sample collection based on data-driven phase-shift estimates. However, this method consistently yields parameter estimates and inferences akin to those obtained when an investigator performs laboratory tests to determine each individual's dim-light melatonin onset time (DLMO time), a gold-standard biomarker for the offset of an individual's ICT relative to ZT.

The results presented in this paper open up avenues for future research. First, other frameworks have been proposed for identifying genes with oscillatory behavior \citep{Mei2020}. There could be value in studying how statistical analyses with these frameworks are affected by each study participant having an unknown DLMO time. Second, depending on the assumptions an investigator makes about the oscillatory behavior of a gene over time, there could be utility in extending this methodology to other modelling frameworks.

\section*{Acknowledgements}
The author would like to thank Forest Agostinelli at the University of South Carolina for conversation concerning the BioCycle simulation data sets, which supported development of the simulation study design in Section \ref{sec:3}; Thomas Brooks at the University of Pennsylvania for input on factors that could affect reproducibility in circadian biology studies, which helped refine the scope of this paper and method development; David Kennaway at the University of Adelaide for sharing insights on the cost of DLMO time determination and providing examples of biological processes that exhibit minimal diurnal variations, which enhanced the presentation of this paper; and Tavish McDonald at Lawrence Livermore National Laboratory for technical discussion and reviewing an initial version of this manuscript posted on arXiv, which improved clarity in presentation.

\section*{Data Availability Statement}
The author has made the code scripts used to produce the results presented in Section \ref{sec:4} available at 
\\
\textcolor{blue}{\href{https://bitbucket.org/michaelgorczyca/mixed_cosinor/}{https://bitbucket.org/michaelgorczyca/mixed\_cosinor/}}.

\newpage
\clearpage

\begin{algorithm}[H]
\DontPrintSemicolon
\footnotesize
\caption{Method presented in Section \ref{sec:2.3}.}
\label{salgo:1}
    \SetKwFunction{FCGT}{ComputeGeneTranslation}
    \SetKwFunction{CIT}{ComputeIndividualTranslation}
    \SetKwFunction{CT}{GetTranslatedModel}
    \SetKwFunction{SOne}{Step1}
    \SetKwFunction{STwo}{Step2}
    \SetKwFunction{SThr}{Step3}
    \SetKwFunction{SFou}{Step4}
    \SetKwFunction{FR}{ComputeOmega}
    \SetKwFunction{GMQ}{Step3andStep4}
    \SetKwFunction{at}{atan2}
    \SetKwProg{Fn}{Function}{:}{}
    \SetCommentSty{itshape}
    \Fn{\FCGT{$\hat{\theta}^{(g)}_{2}$, $\hat{\theta}^{(g)}_{i,2}$, $\hat{w}_i^{(g)}$}}{
        \tcc{Performs Step 5 given population phase-shift $\hat{\theta}^{(g)}_{2}$, individual-specific phase-shift $\hat{\theta}^{(g)}_{i,2}$, and inverse-variance weight for individual-specific phase-shift $\hat{w}_i^{(g)}$. Returns the quantity $\hat{d}_i^{(g)}.$}
        $s \gets \hat{w}^{(g)}_i\sin(\theta^{(g)}_{i,2})+(1-\hat{w}^{(g)}_i)\sin(\theta^{(g)}_{2})$\;
        $c \gets \hat{w}^{(g)}_i\cos(\theta^{(g)}_{i,2})+(1-\hat{w}^{(g)}_i)\cos(\theta^{(g)}_{2})$\;
        $\hat{d}_i^{(g)} \gets \{\mathrm{atan2}(s, c) - \hat{\theta}^{(g)}_{2}\}$ \ mod $2\pi$\;
        \KwRet $\hat{d}_i^{(g)}$\;
  }
  \Fn{\CIT{$\hat{d}_i^{(1)},\ldots, \hat{d}_i^{(G)}, \mathrm{Var}(\hat{\theta}^{(g)}_{i,2}), \ldots, \mathrm{Var}(\hat{\theta}^{(g)}_{i,2})$}}{
        \tcc{Computes the quantity $\tilde{d}_i$ from Step 6 given quantities $\hat{d}_i^{(g)}$, $g=1,\ldots, G$ from Step 5 and individual-specific amplitude variance estimates $\mathrm{Var}(\hat{\theta}^{(g)}_{i,2})$. Returns the quantity $\tilde{d}_i.$}
        $\xi \gets 0$\;
        \For{$g \gets 1$ \KwTo $G$}{
            $\xi \gets \xi + \left\{1/\mathrm{Var}(\hat{\theta}^{(g)}_{i,2})\right\}$\;
        }
        $s \gets 0$\;
        $c \gets 0$\;
        \For{$g \gets 1$ \KwTo $G$}{
            $s \gets s + \left\{\sin(\hat{d}^{(g)}_{i})/\mathrm{Var}(\hat{\theta}^{(g)}_{i,2})\right\}/\xi$\;
            $c \gets c + \left\{\cos(\hat{d}^{(g)}_{i})/\mathrm{Var}(\hat{\theta}^{(g)}_{i,2})\right\}/\xi$\;
        }
        $\tilde{d}_i \gets \mathrm{atan2}(s, c)$\;
        \KwRet $\tilde{d}_i$\;
  }
  \Fn{\CT{$\boldsymbol{X_1}, \ldots, \boldsymbol{X_M}, \boldsymbol{Y_1^{(1)}}, \ldots, \boldsymbol{Y_M^{(1)}}, \ldots, \boldsymbol{Y_1^{(G)}}, \ldots, \boldsymbol{Y_M^{(G)}}$}}{
        \tcc{Computes linear mixed effects models, which will be denoted as $\mathcal{M}^{(g)}$ for the $g$-th gene, given covariate data $\boldsymbol{X_1}, \ldots, \boldsymbol{X_M}$ and gene expression data $\boldsymbol{Y_1^{(g)}}, \ldots, \boldsymbol{Y_M^{(g)}}$. Takes as input covariate data and gene expression data across all $G$ genes and all $M$ people. Returns linear mixed effects models for each gene estimated from translated data $\boldsymbol{\tilde{X}_i}$, which is denoted as $\tilde{\mathcal{M}}^{(g)}$ for the $g$-th gene.} 
        \For{$g \gets 1$ \KwTo $G$}{
            $\mathcal{M}^{(g)} \gets $ \SOne{$\boldsymbol{X_1}$, $\boldsymbol{Y_1^{(g)}}$,$\ldots$, $\boldsymbol{X_M}$, $\boldsymbol{Y_M^{(g)}}$} \ilc{Implementation omitted, compute mixed model}\;
            $[\hat{\theta}_2^{(g)}, \mathrm{Var}(\hat{\theta}_2^{(g)})] \gets $ \STwo{$\mathcal{M}^{(g)}$} \ilc{Implementation omitted, get relevant parameters}\;
            \For{$i \gets 1$ \KwTo $M$}{ 
                $\mathcal{M}^{(g)}_i \gets $ \SThr{$\boldsymbol{X_i}$, $\boldsymbol{Y_i^{(g)}}$} \ilc{Implementation omitted, compute individual model}\;
                $[\hat{\theta}_{i, 2}^{(g)}, \mathrm{Var}(\hat{\theta}_{i, 2}^{(g)})] \gets $ \SFou{$\mathcal{M}^{(g)}_i$} \ilc{Implementation omitted, get relevant parameters}\;
                $\hat{w}^{(g)}_i \gets \{1/\mathrm{Var}(\hat{\theta}_2^{(g)})\}/[\{1/\mathrm{Var}(\hat{\theta}_2^{(g)})\} + \{1/\mathrm{Var}(\hat{\theta}_{i,2}^{(g)})\}]$\;
                $\hat{d}^{(g)}_i \gets $ \FCGT{$\hat{\theta}_2^{(g)},\hat{\theta}_{i, 2}^{(g)}, \hat{w}^{(g)}_i$}\;
            }
            $\Omega^{(g)}_{ } \gets $ \FR{$\hat{\theta}_{1,2}^{(g)}, \ldots, \hat{\theta}_{M,2}^{(g)}$} \ilc{Implementation omitted, obtain quantity from (\ref{eq:R_val})}
        }
        \For{$i \gets 1$ \KwTo $M$}{
            $\tilde{d}_i \gets $ \CIT{$\hat{d}_i^{(1)}, \ldots, \hat{d}_i^{(G)}, R^{(1)},\ldots, \Omega^{(g)}_{ }$}\;
             $\boldsymbol{\tilde{X_i}} \gets \boldsymbol{X_i}+\tilde{d}_i$\;
        }
        \For{$g \gets 1$ \KwTo $M$}{
            $\tilde{\mathcal{M}}^{(g)} \gets $ \SOne{$\boldsymbol{\tilde{X}_1, \boldsymbol{Y_1^{(g)}}}, \ldots, \boldsymbol{\tilde{X}_M, \boldsymbol{Y_M^{(g)}}}$} \ilc{Implementation omitted}\;
        }
        \KwRet $[\tilde{\mathcal{M}}^{(1)}, \ldots, \tilde{\mathcal{M}}^{(G)}]$\;
   }
\end{algorithm}

\newpage
\clearpage

\begin{table*}[!h]
	\caption{Simulation study results for each framework. Framework 1 uses the method proposed in Section \ref{sec:2.3} given Zeitgeber time (ZT), Framework 2 performs linear mixed effects cosinor regression given ZT, and Framework 3 performs linear mixed effects cosinor regression given internal circadian time (ICT). The mean amplitudes and test statistics across 2,000 simulation trials are reported, with their corresponding standard deviations in parentheses. For Frameworks 1 and 2, mean quantities that are closest to the corresponding mean quantity obtained with Framework 3 are denoted in bold.} \label{tab:sim1}
 \centering
		\begin{tabular}{|c|c|c|c|c|c|c|c|c|c|}
			\hline
   Simulation Setting  & Framework & $\theta_{1}^{(g)}$  & $\tau^{(g)}$ \\
   \hline
         \multirow{3}{*}{1} & 1 & \textbf{0.300 (0.080)} & \textbf{17.404 (10.532)} \\ 
& 2 & 0.275 (0.080) & 14.335 (9.315) \\ 
& 3 & 0.309 (0.083) & 17.559 (10.959) \\ 
         \hline 
         \multirow{3}{*}{2} & 1 & \textbf{0.364 (0.128)} & \textbf{7.994 (5.916)} \\ 
& 2 & 0.309 (0.129) & 5.819 (4.992) \\ 
& 3 & 0.338 (0.134) & 6.944 (5.758) \\ 
         \hline 
         \multirow{3}{*}{3} & 1 & \textbf{0.303 (0.099)} & \textbf{10.038 (6.964)} \\ 
& 2 & 0.253 (0.098) & 6.842 (5.402) \\ 
& 3 & 0.318 (0.103) & 10.495 (7.244) \\
         \hline 
         \multirow{3}{*}{4} & 1 & \textbf{0.167 (0.069)} & \textbf{5.406 (4.908)} \\ 
& 2 & 0.131 (0.067) & 3.391 (3.601) \\ 
& 3 & 0.256 (0.076) & 12.781 (8.178) \\ 
         \hline 
         \multirow{3}{*}{5} & 1 & \textbf{0.222 (0.083)} & \textbf{7.568 (5.887)} \\ 
& 2 & 0.182 (0.082) & 5.079 (4.602) \\ 
& 3 & 0.262 (0.089) & 10.304 (7.313) \\ 
         \hline 
         \multirow{3}{*}{6} & 1 & \textbf{0.335 (0.110)} & \textbf{11.392 (8.952)} \\ 
& 2 & 0.262 (0.107) & 6.707 (5.937) \\ 
& 3 & 0.318 (0.101) & 11.524 (8.151) \\ 
         \hline 
\end{tabular} 
\end{table*}

\clearpage
\newpage

\begin{table*}[!h]
	\caption{Comparison of both frameworks using every gene from each sample population. Framework 1 estimates a linear mixed effects cosinor model with the method proposed in Section \ref{sec:2.3} given Zeitgeber time (ZT), while Framework 2 estimates a linear mixed effects cosinor model given ZT. The regression parameter estimate $\hat{\gamma}$ is listed alongside the coefficient of determination ($R^2$), where the latter is in parentheses. Bold values indicate a value of $\hat{\gamma}$ that is closest to one (up to the first three decimal digits), which signifies that the quantities obtained from a framework are closer to the quantities obtained from a linear mixed effects cosinor model estimated with internal circadian time.} \label{tab:app}
  \centering
		\begin{tabular}{|l|c|c|c|c|c|c|c|c|c|}
			\hline
   \multirow{1}{*}{Sample Population} & \multirow{1}{*}{Framework}  &  \multicolumn{1}{c|}{$\theta^{(g)}_1$} & \multicolumn{1}{c|}{$\tau^{(g)}$} \\
   \hline
         \multirow{2}{*}{Archer (Control)} & 1 & \textbf{1.001 (0.991)} & \textbf{1.030 (0.979)} \\ 
        & 2 & 1.019 (0.990) & 1.071 (0.978) \\ 
         \hline 
         \multirow{2}{*}{Archer (Intervention)} & 1 & 0.989 (0.938) & \textbf{1.022 (0.915)} \\ 
    & 2 & \textbf{0.994 (0.936)} & 1.042 (0.911) \\
         \hline 
         \multirow{2}{*}{Archer (Combined)}& 1 & \textbf{1.058 (0.972)} & \textbf{1.157 (0.955)} \\
         & 2 & 1.072 (0.970) & 1.195 (0.953) \\
         \hline 
         \multirow{2}{*}{Braun} & 1 & \textbf{1.005 (0.996)} & \textbf{1.036 (0.980)} \\
         & 2 & 1.021 (0.995) & 1.068 (0.977) \\
         \hline 
         \multirow{2}{*}{M\"{o}ller-Levet (Control)} & 1 & \textbf{1.106 (0.990)} & \textbf{1.176 (0.965)} \\ 
         & 2 & 1.136 (0.987) & 1.242 (0.956) \\
         \hline 
         \multirow{2}{*}{M\"{o}ller-Levet (Intervention)} & 1 & \textbf{1.043 (0.992)} & \textbf{1.100 (0.976)} \\ 
         & 2 & 1.063 (0.991) & 1.135 (0.974) \\
         \hline 
         \multirow{2}{*}{M\"{o}ller-Levet (Combined)} & 1 & \textbf{1.094 (0.994)} & \textbf{1.162 (0.980)} \\ 
         & 2 & 1.110 (0.994) & 1.191 (0.978) \\
         \hline 
\end{tabular} 
\end{table*}

\clearpage
\newpage

\begin{figure*}[!h]
\centering
\includegraphics[scale=0.13]{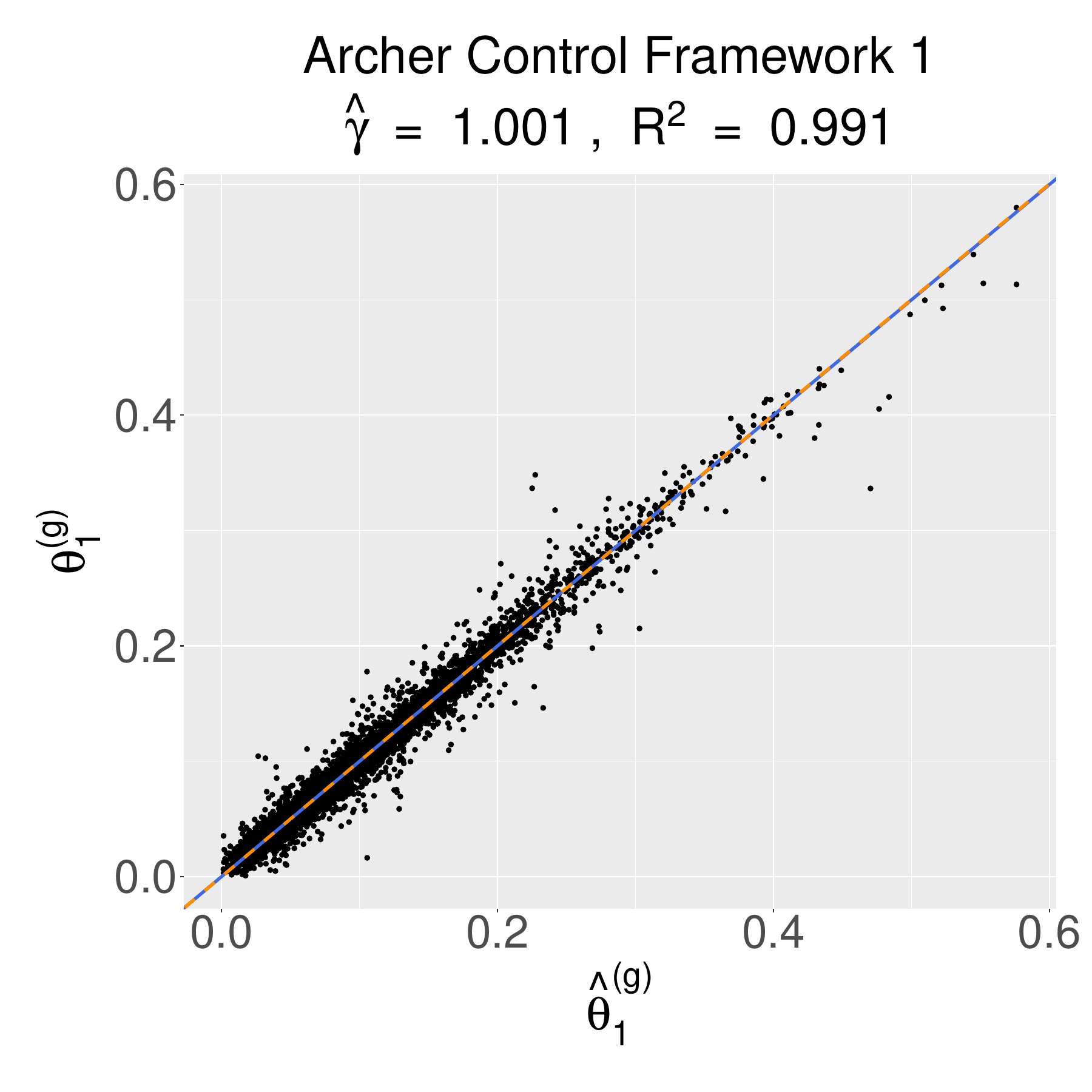}
\includegraphics[scale=0.13]{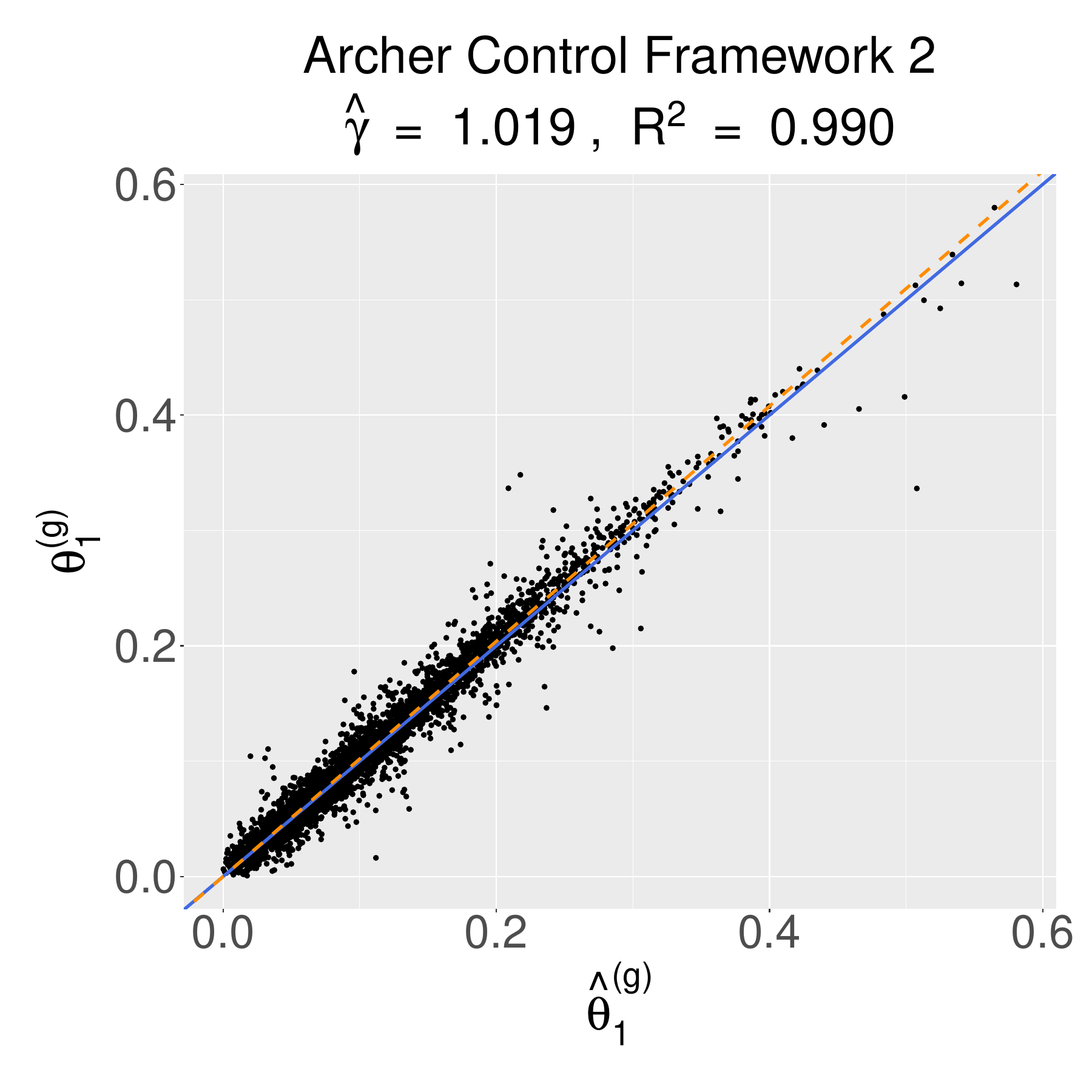} 
\includegraphics[scale=0.13]{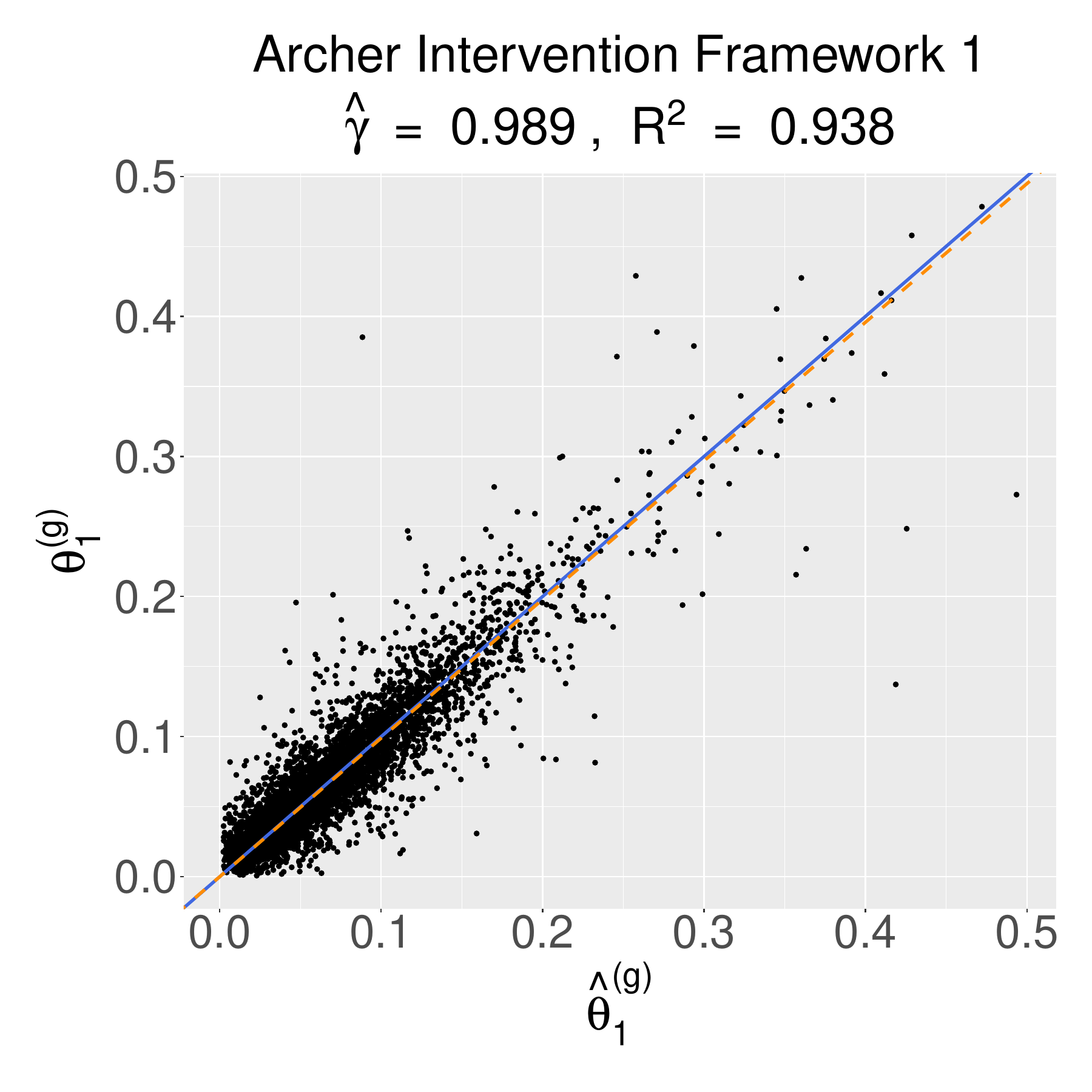} 
\includegraphics[scale=0.13]{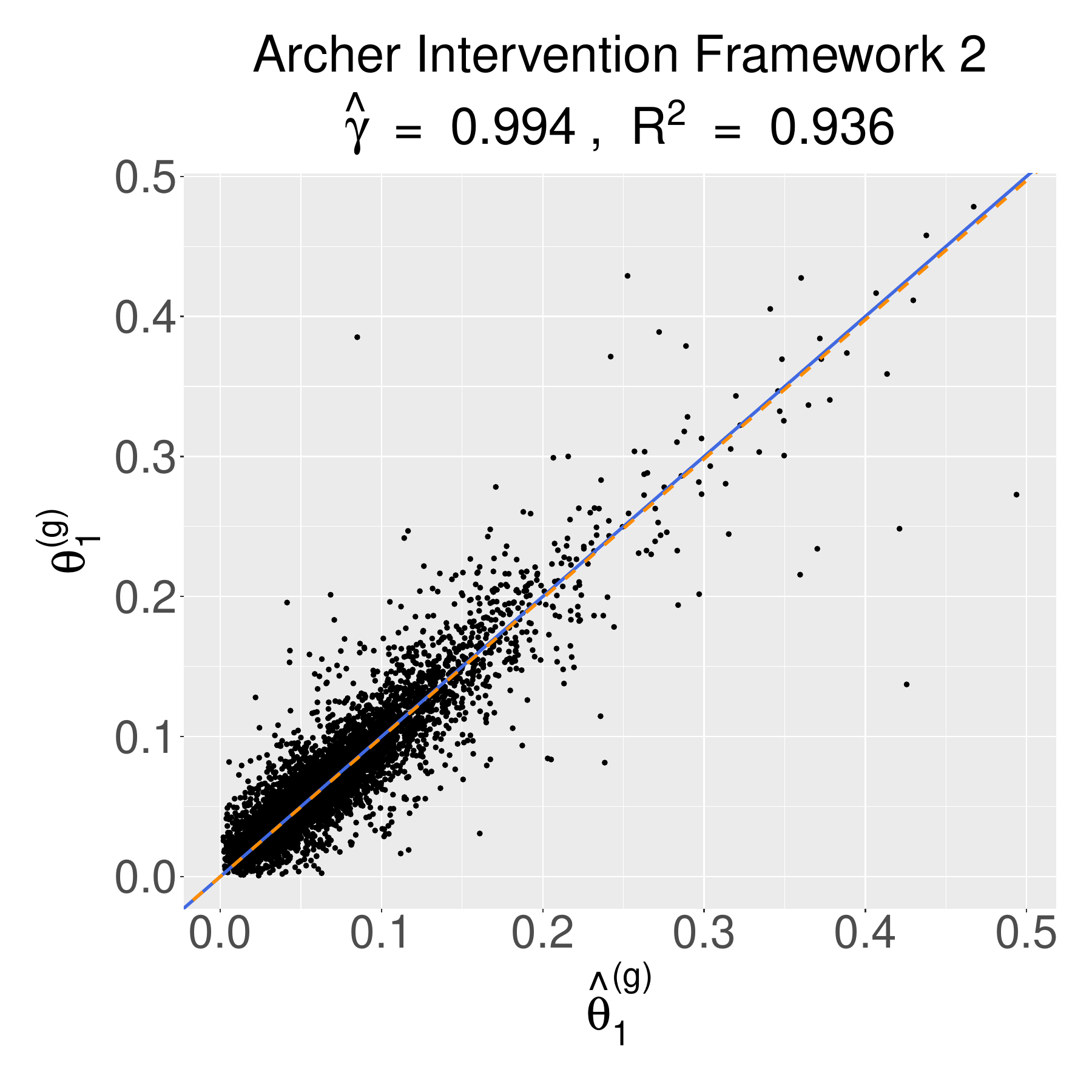}\\
\includegraphics[scale=0.13]{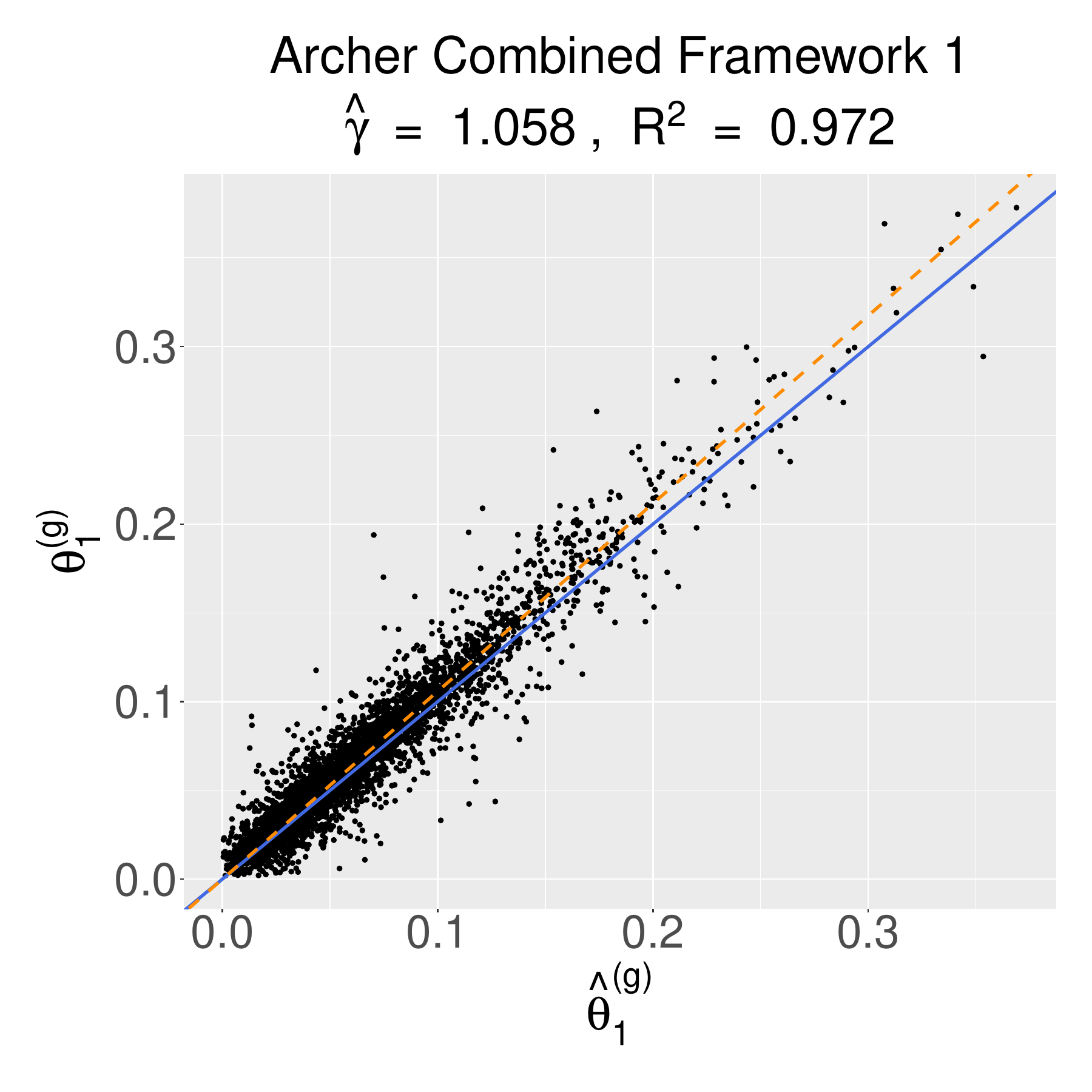}
\includegraphics[scale=0.13]{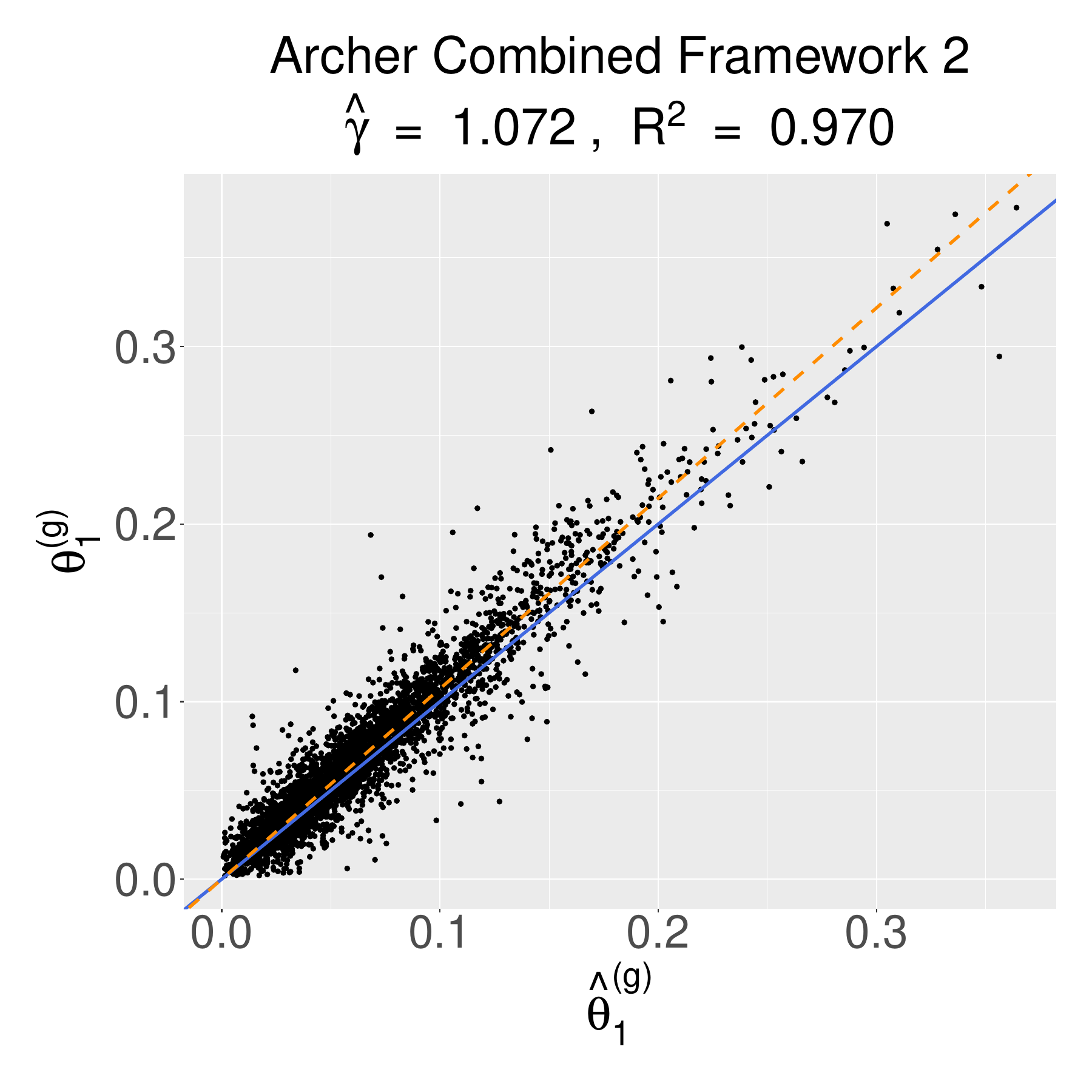} 
\includegraphics[scale=0.13]{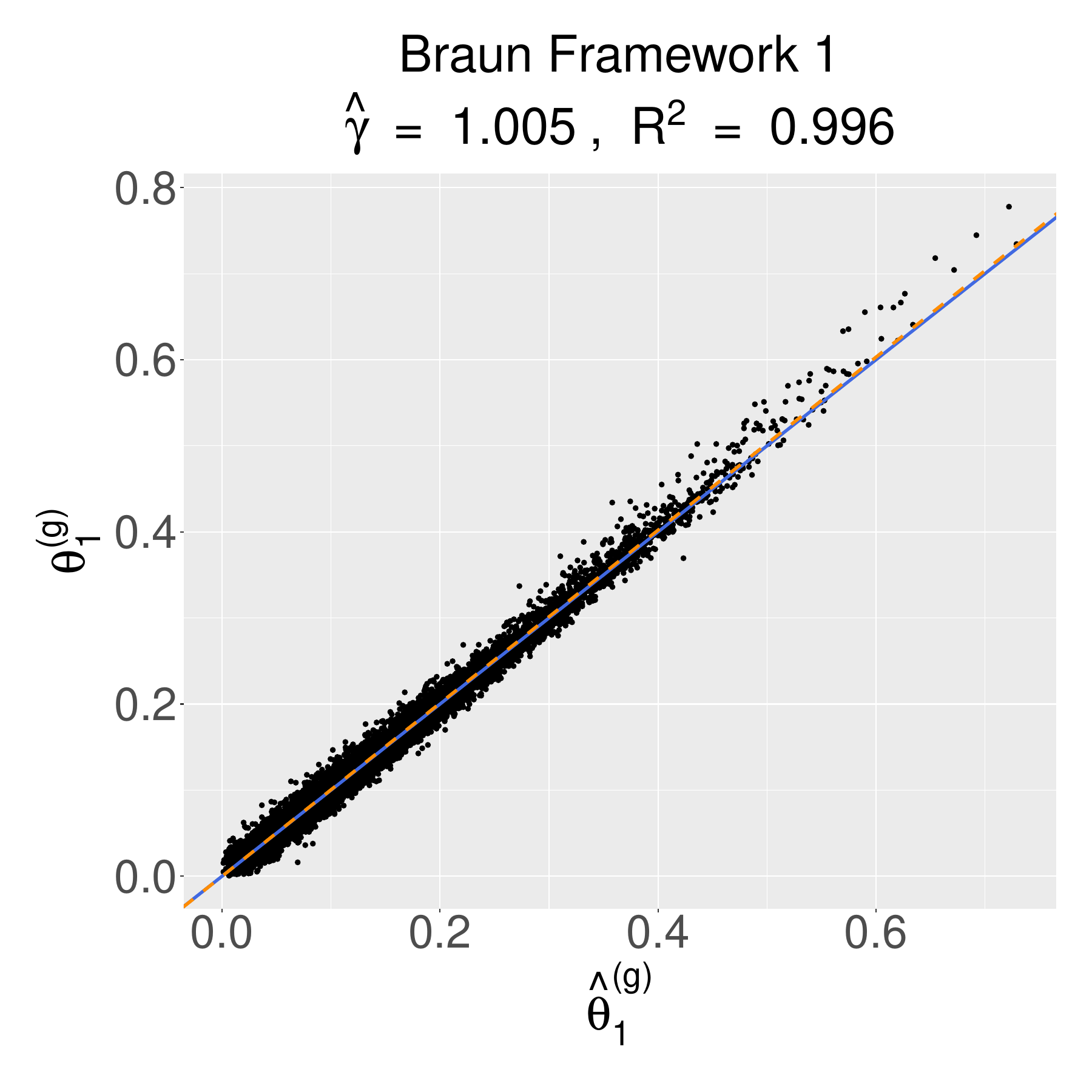}
\includegraphics[scale=0.13]{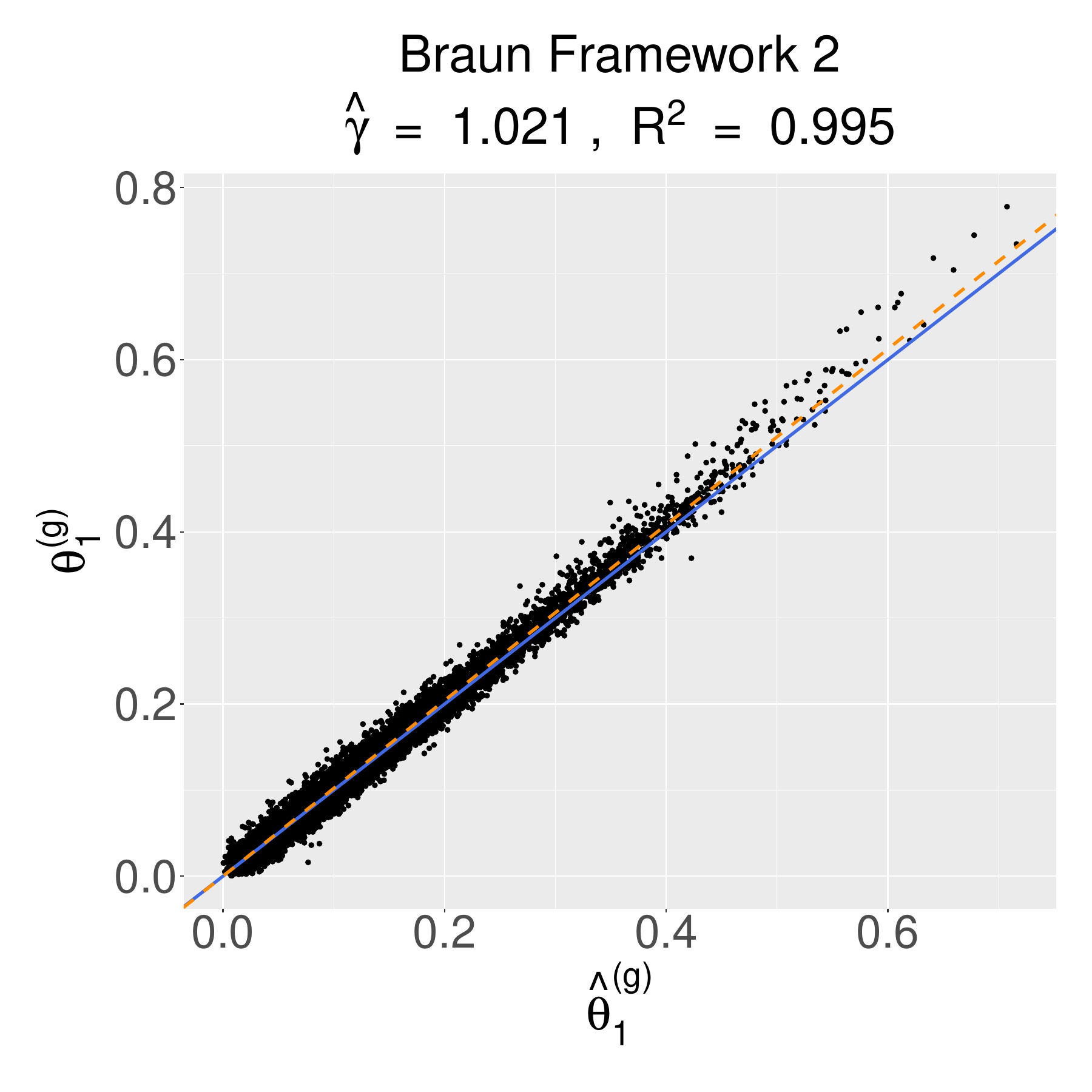} \\
\includegraphics[scale=0.13]{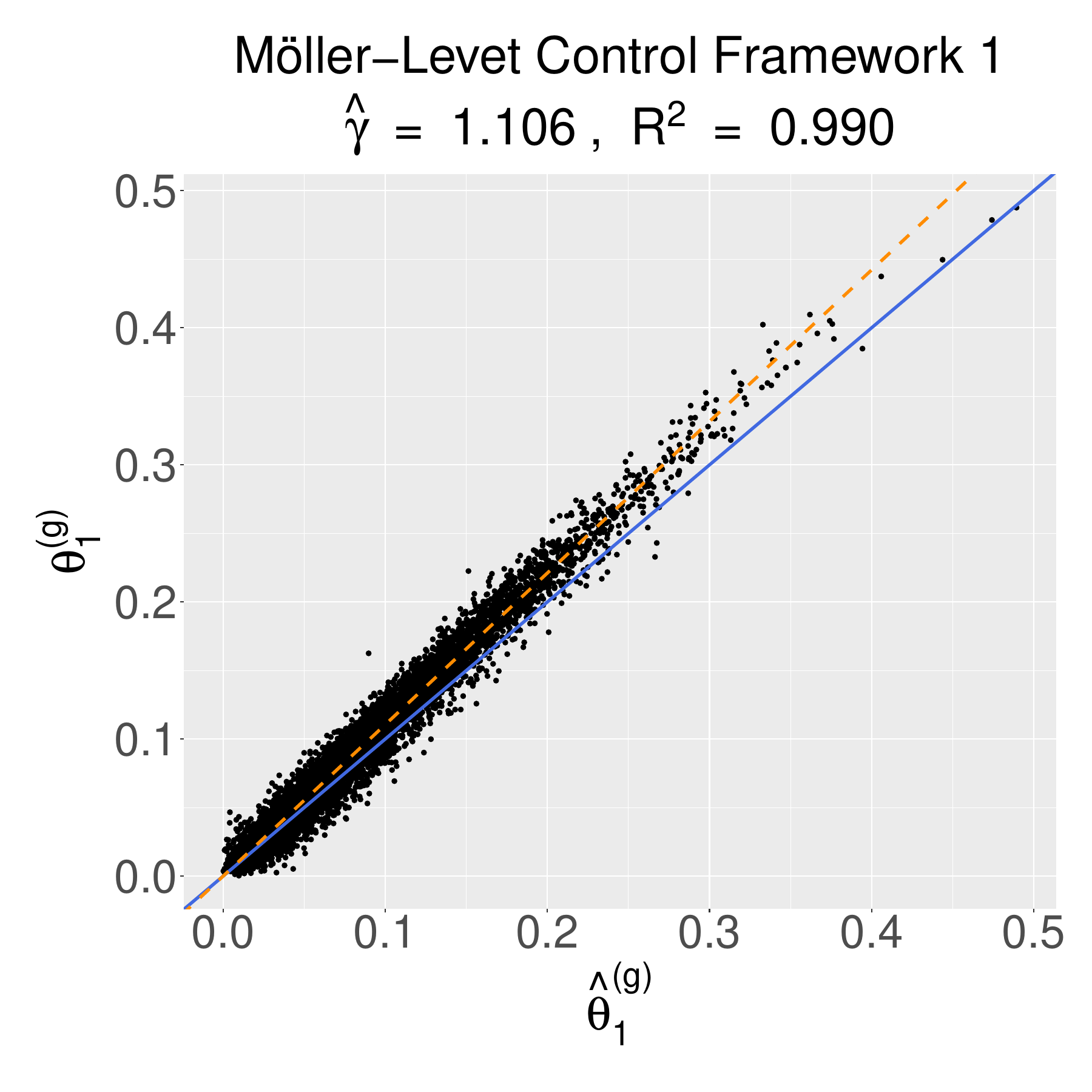} 
\includegraphics[scale=0.13]{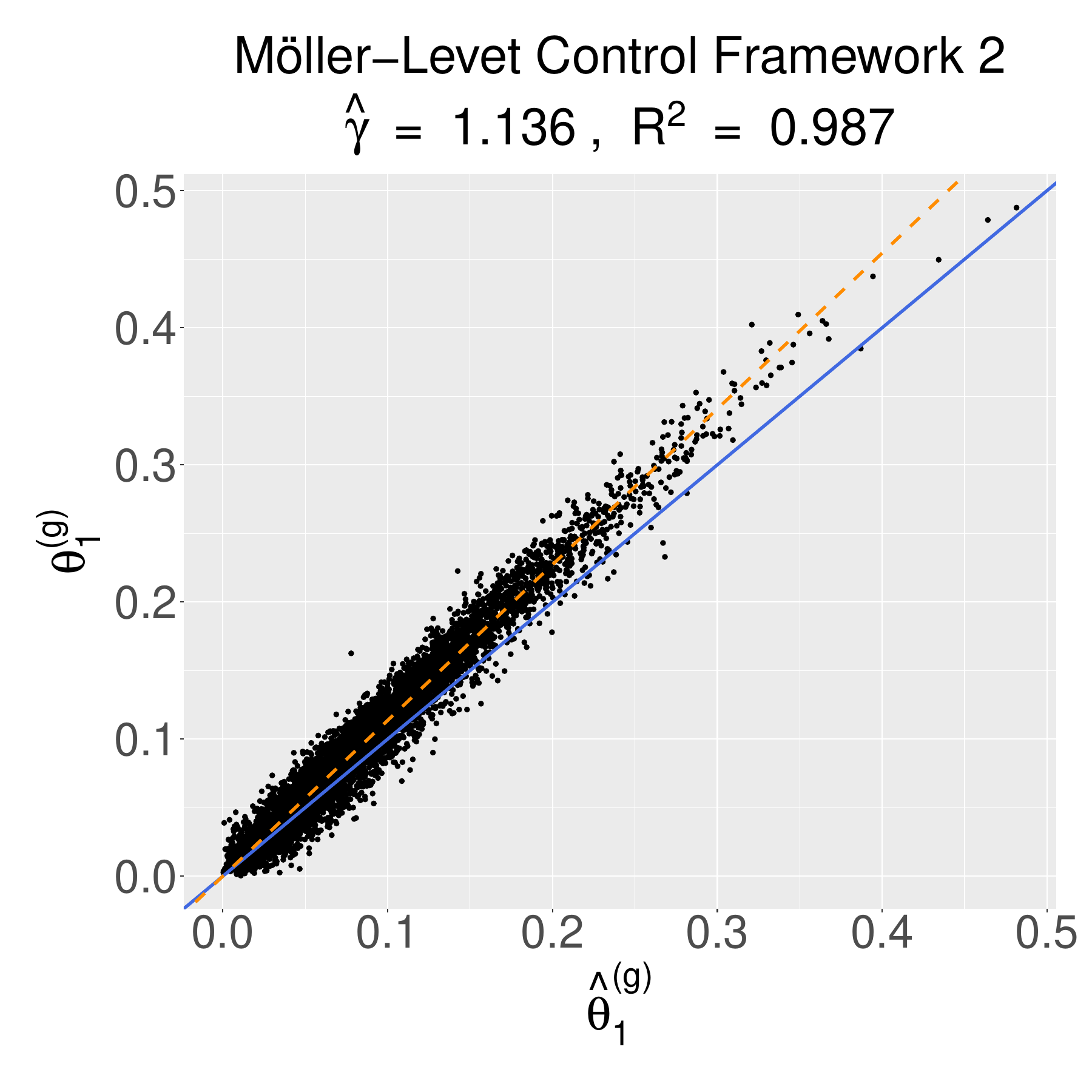}
\includegraphics[scale=0.13]{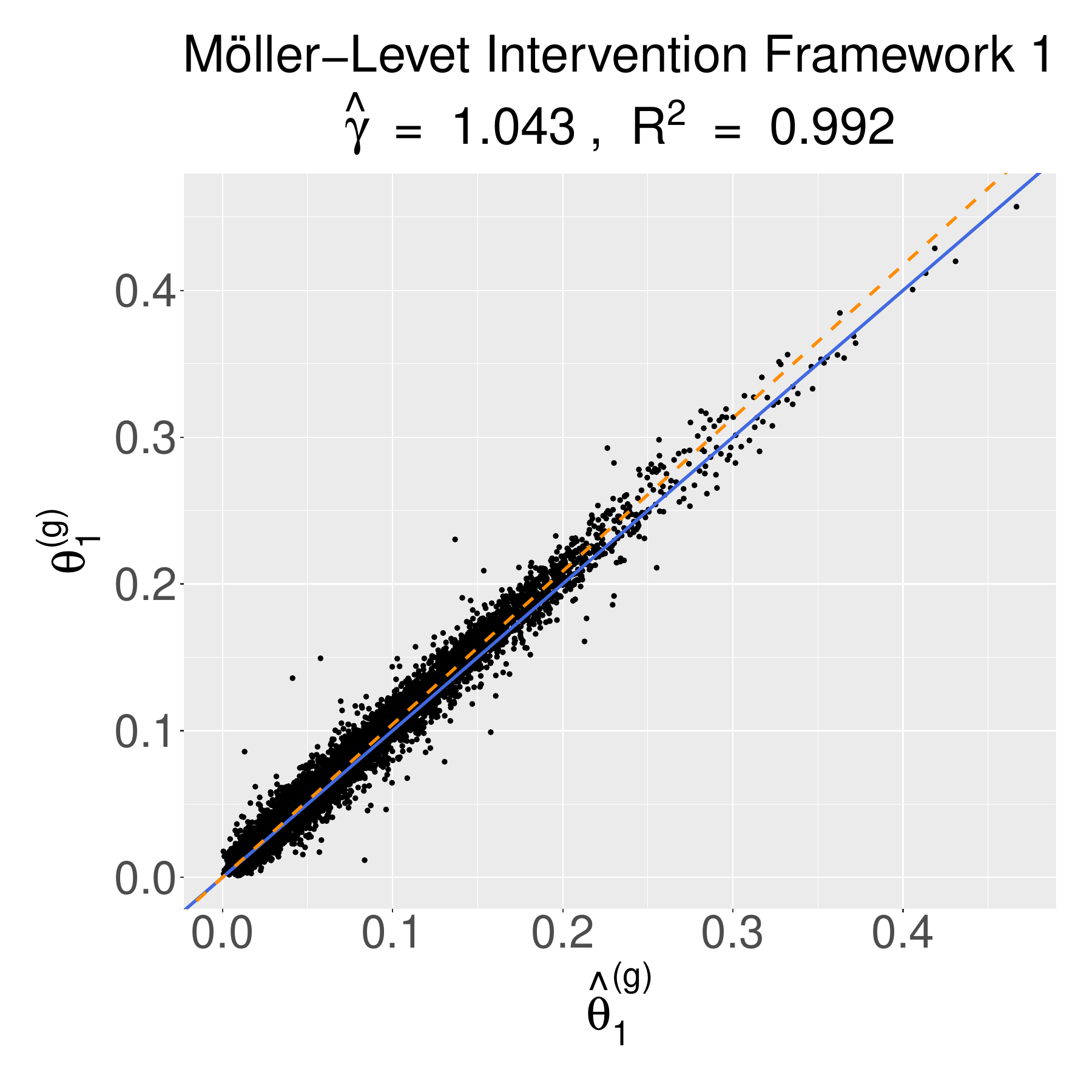}
\includegraphics[scale=0.13]{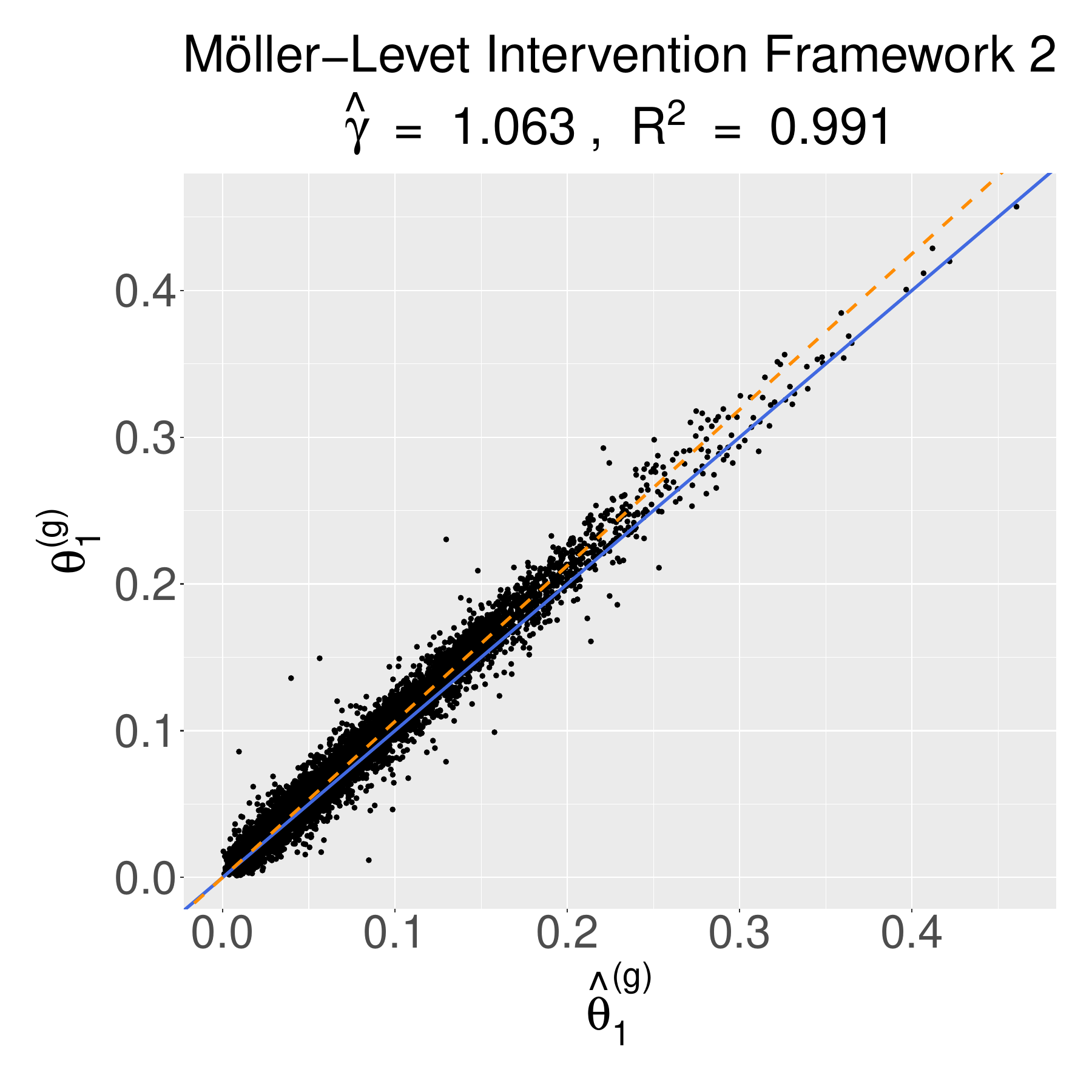} \\
\includegraphics[scale=0.13]{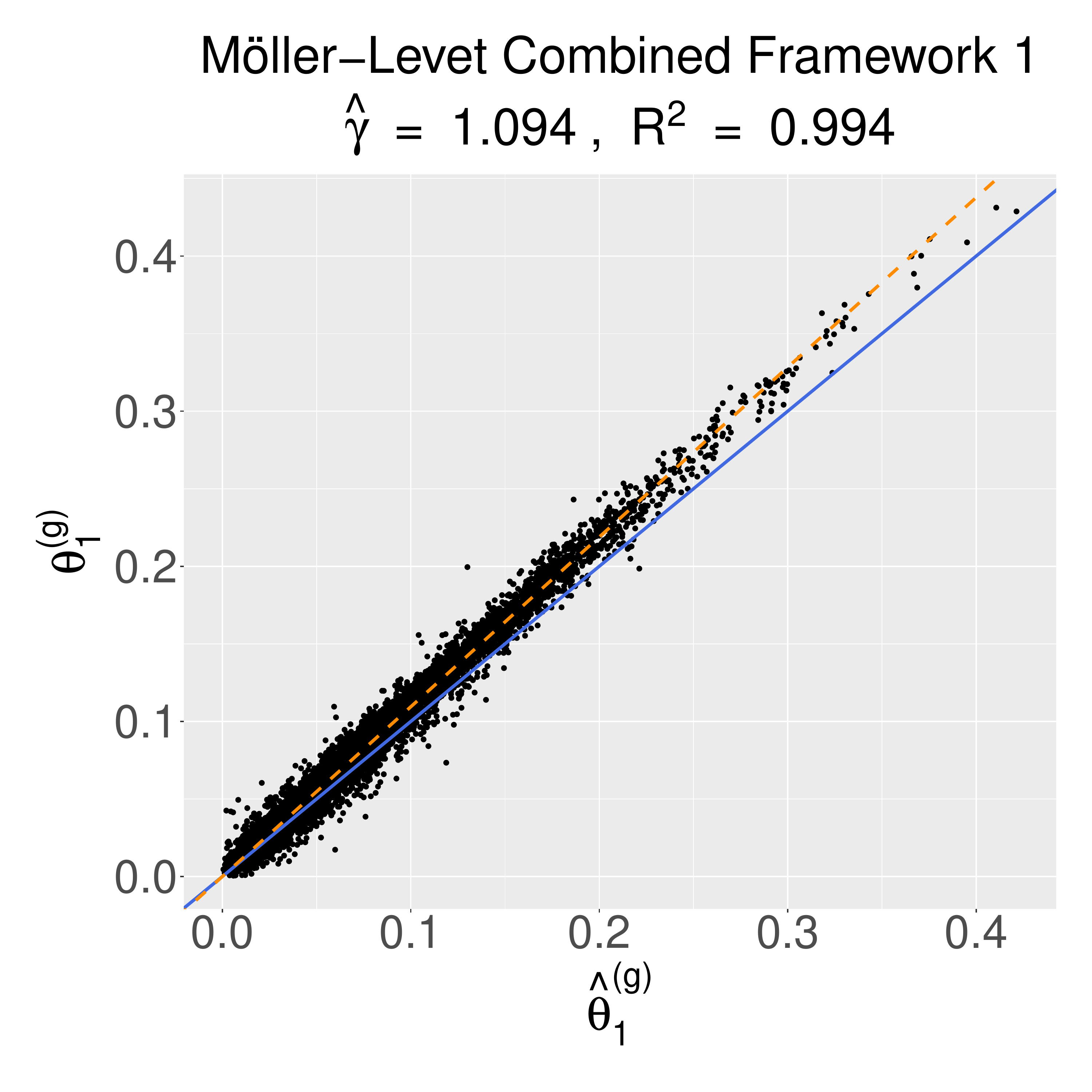}
\includegraphics[scale=0.13]{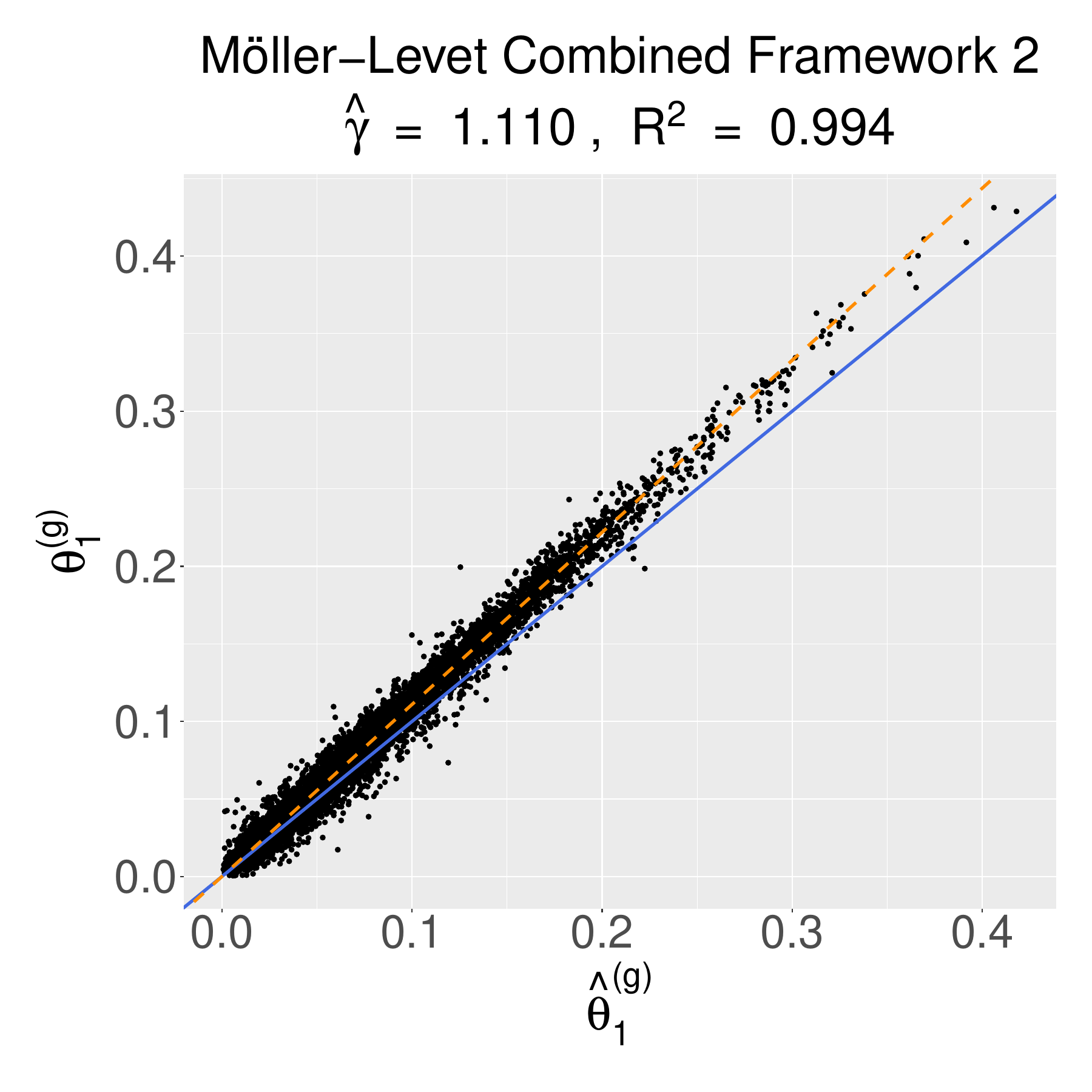}
\caption{Scatter plots of linear model fits using every gene for each Framework (Framework 1 the proposed method from Section \ref{sec:2.3}, Framework 2 linear mixed effects cosinor regression). The y-axis denotes an amplitude estimate obtained from internal circadian time data ($\theta_1^{(g)}$), and the x-axis an amplitude estimate obtained from Zeitgeber time data ($\hat{\theta}_1^{(g)}$). The dashed orange line displays the linear model's fit, and the blue line denotes the fit of a linear model when ${\hat{\gamma}}=1$. Each data point in the scatter plot represents an amplitude estimate obtained for a specific gene.}
    \label{fig:amp_w}
\end{figure*}

\clearpage
\newpage

\begin{figure*}[!h]
\centering
\includegraphics[scale=0.13]{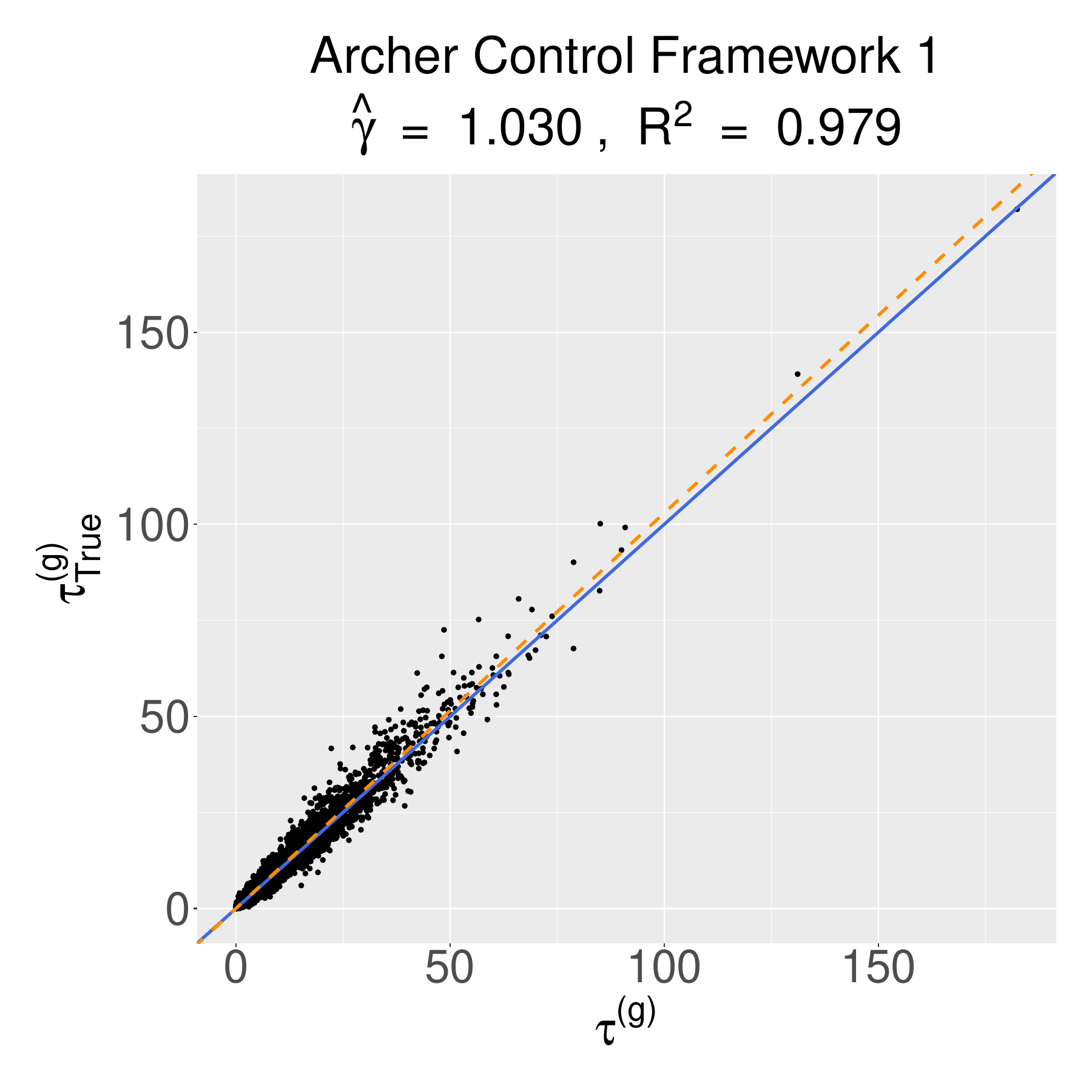}
\includegraphics[scale=0.13]{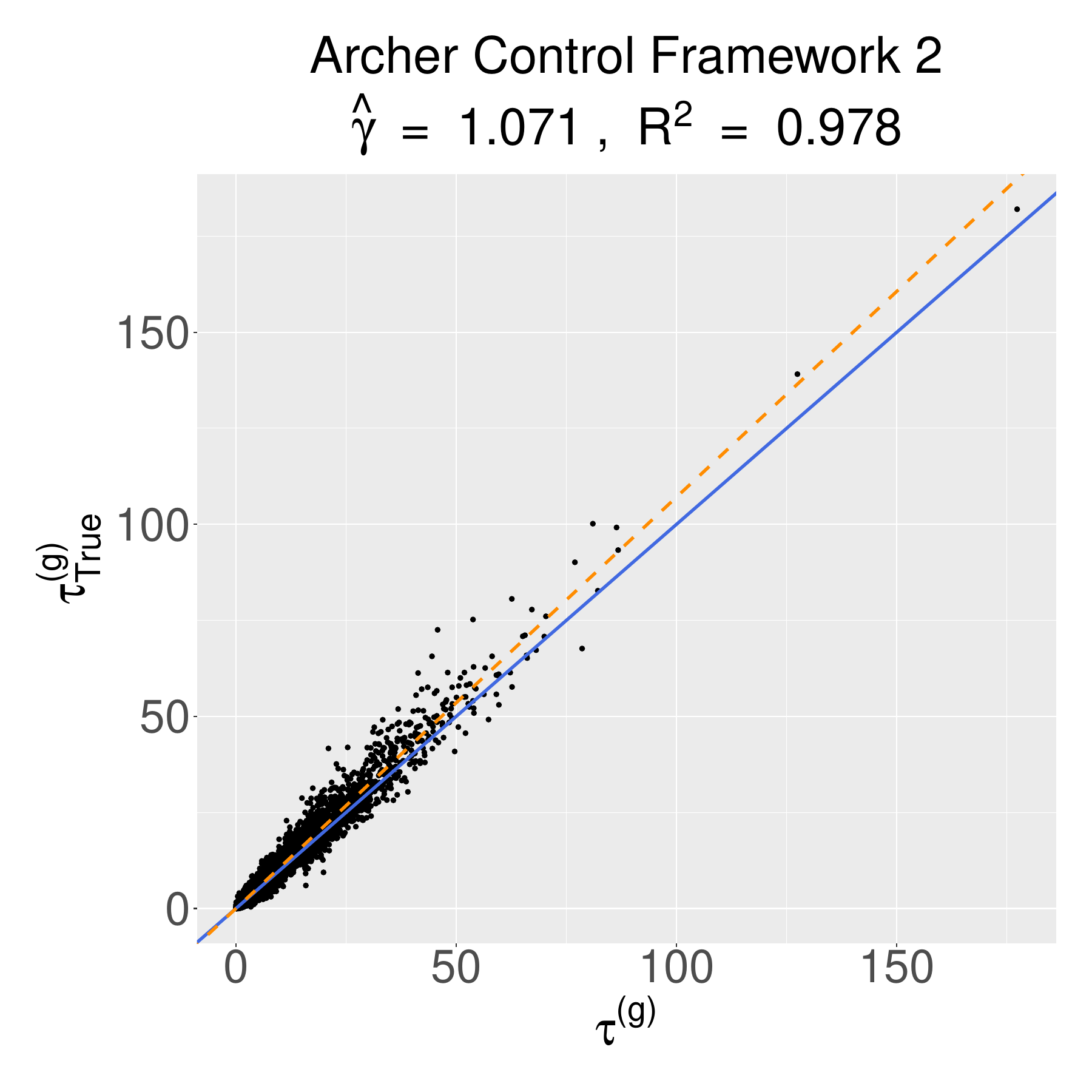} 
\includegraphics[scale=0.13]{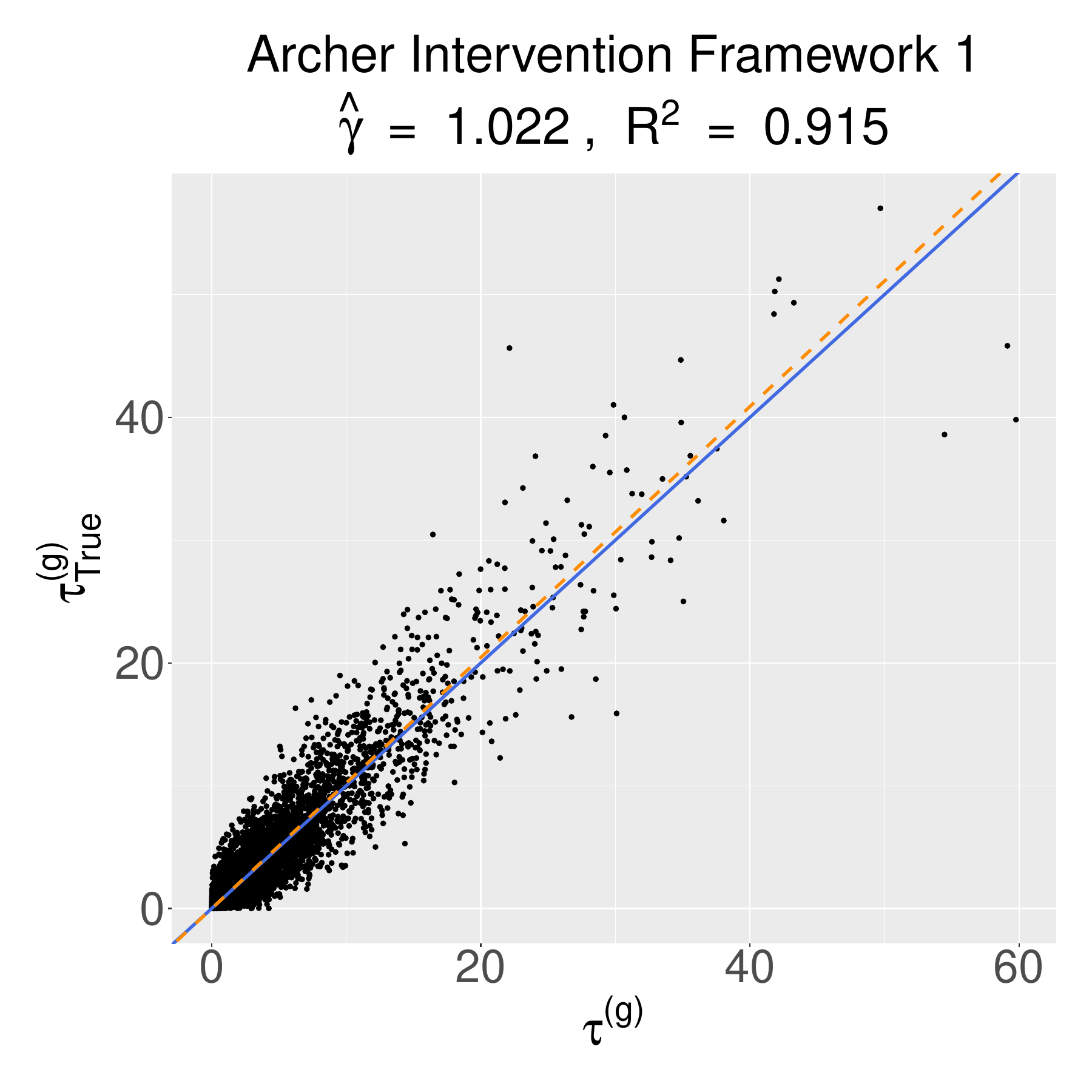} 
\includegraphics[scale=0.13]{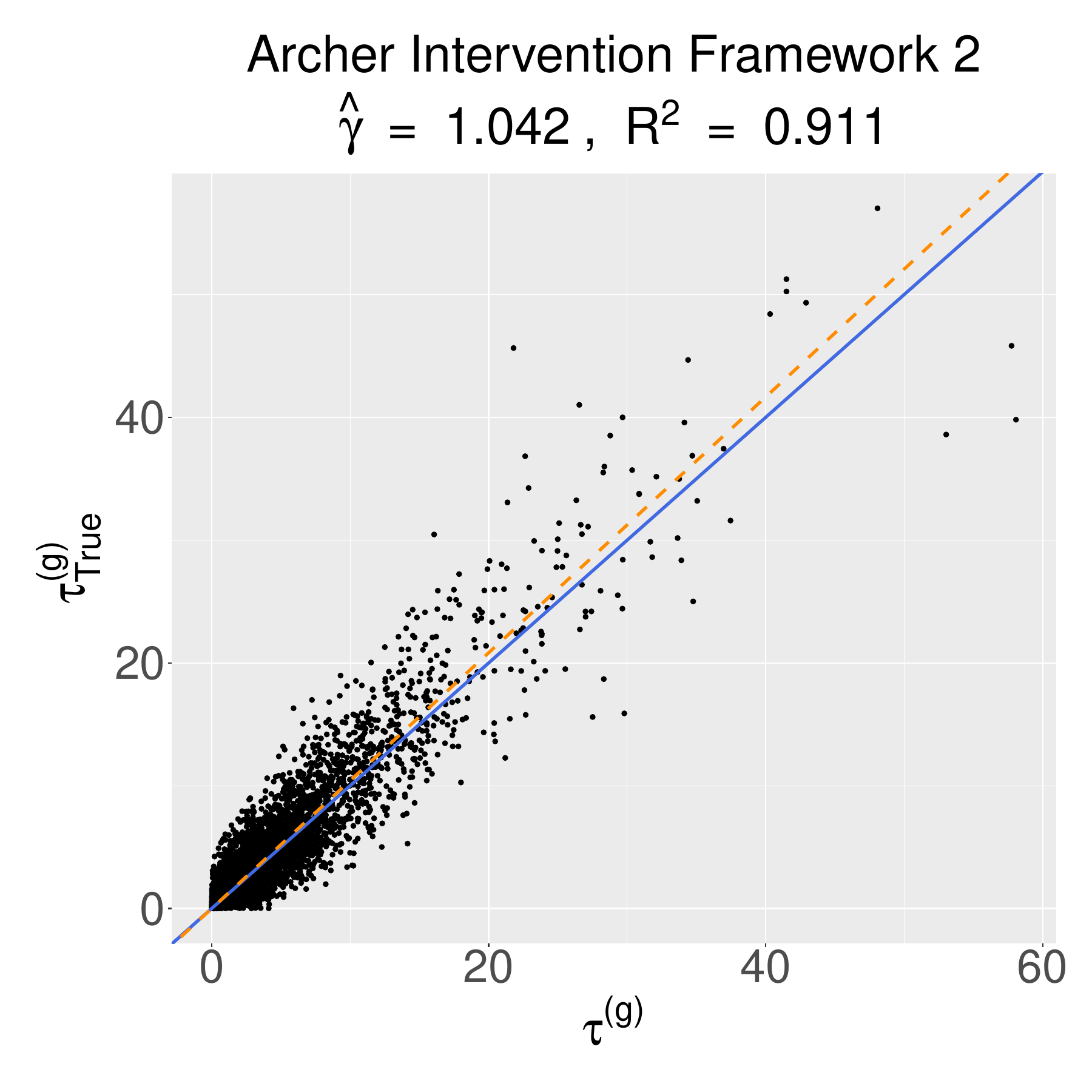}\\
\includegraphics[scale=0.13]{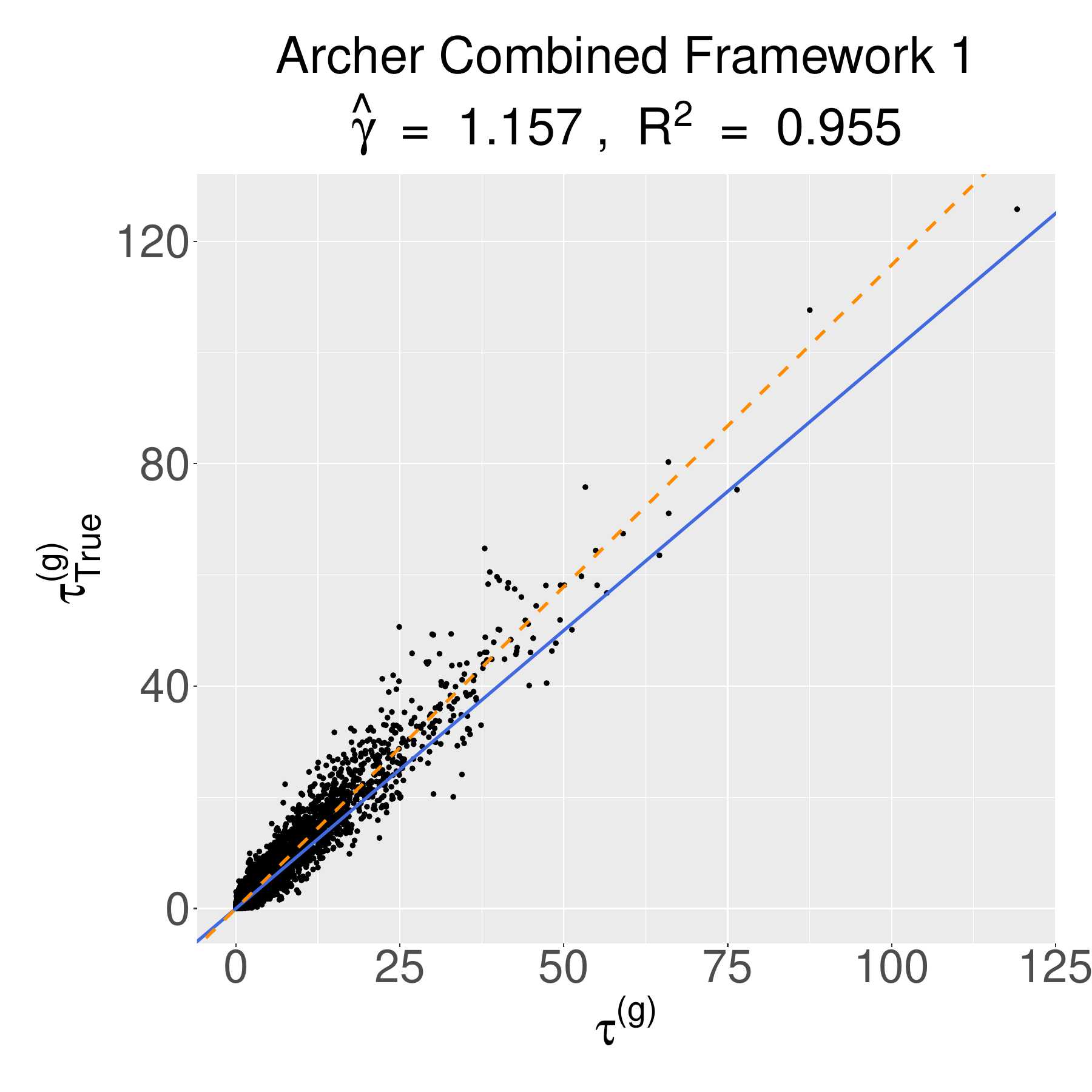}
\includegraphics[scale=0.13]{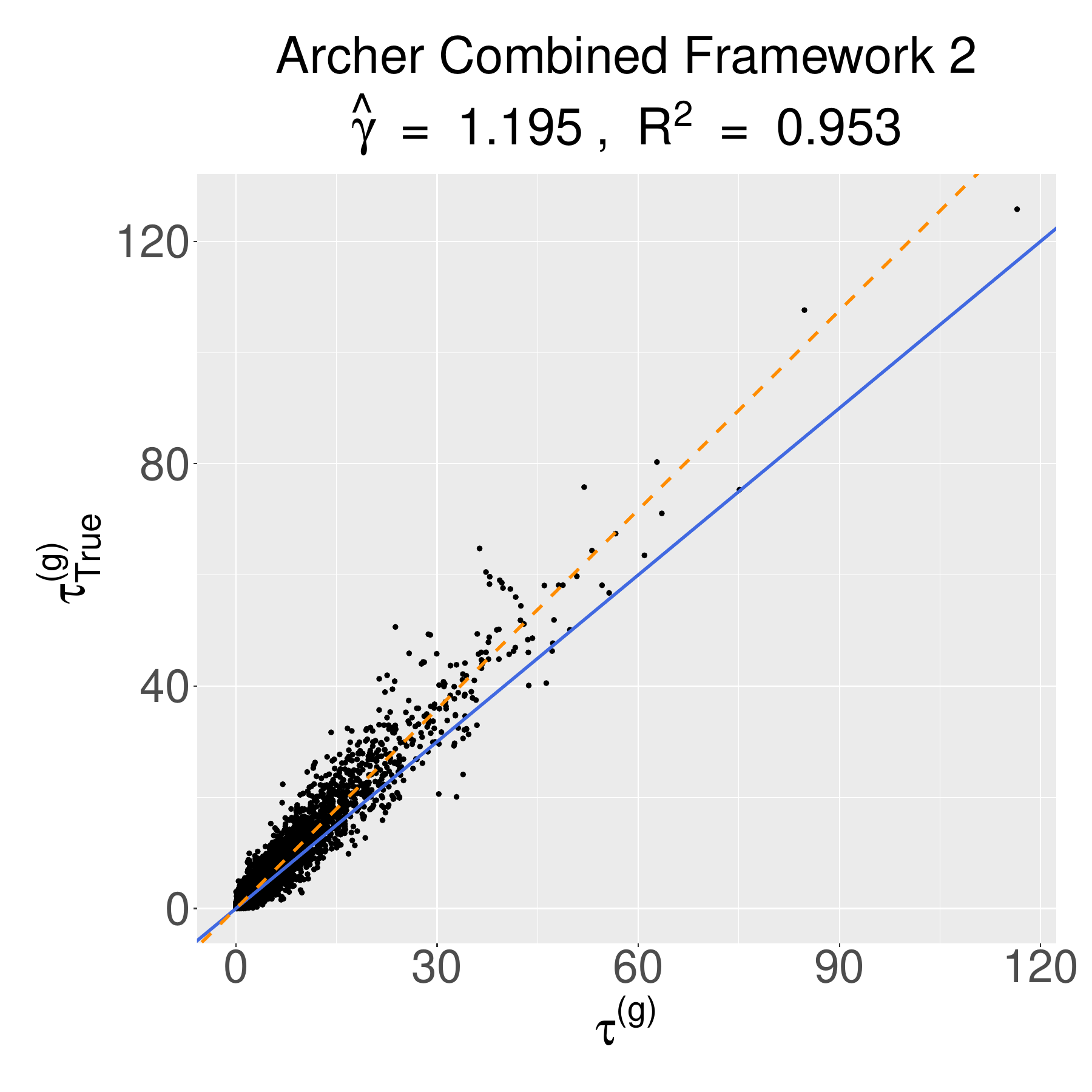} 
\includegraphics[scale=0.13]{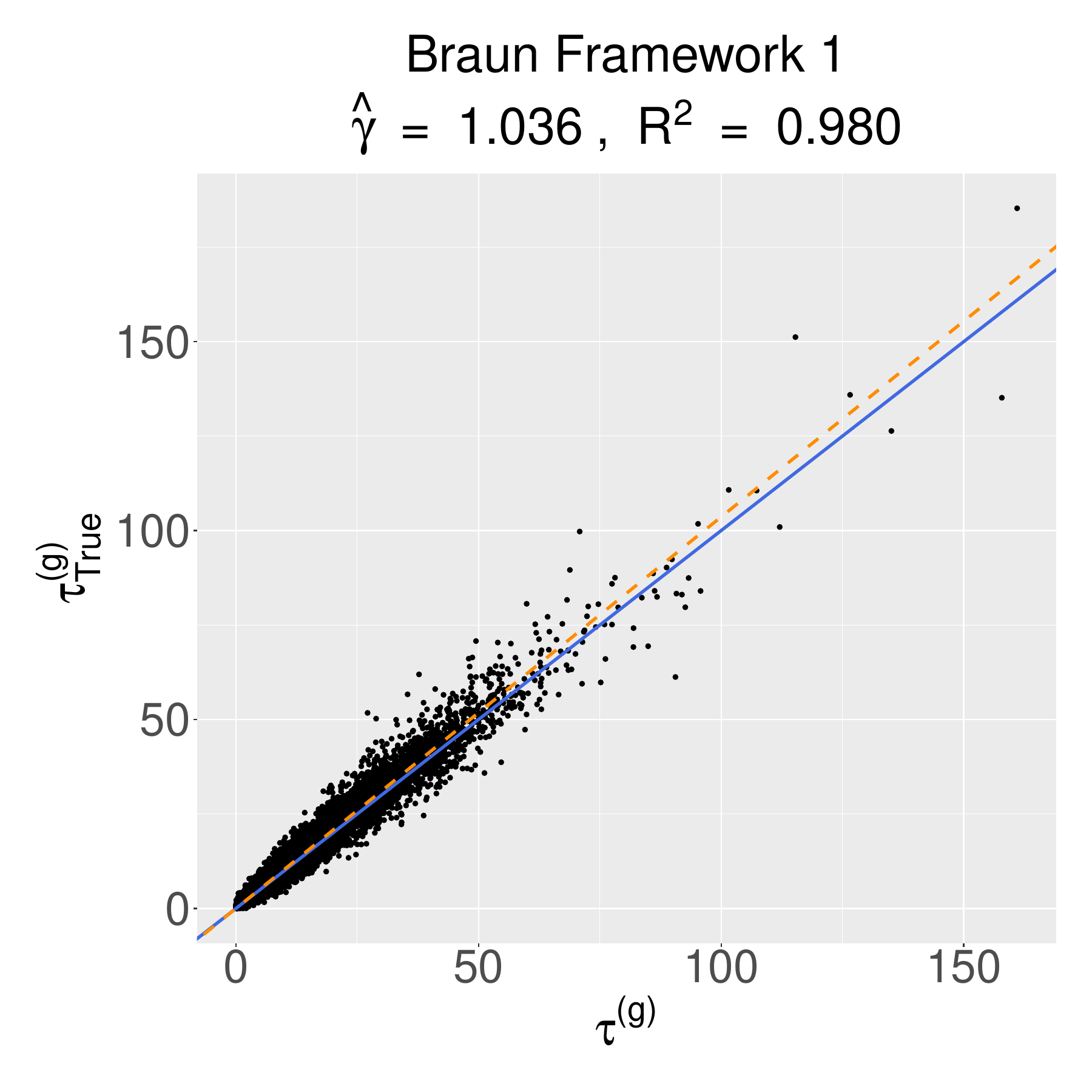}
\includegraphics[scale=0.13]{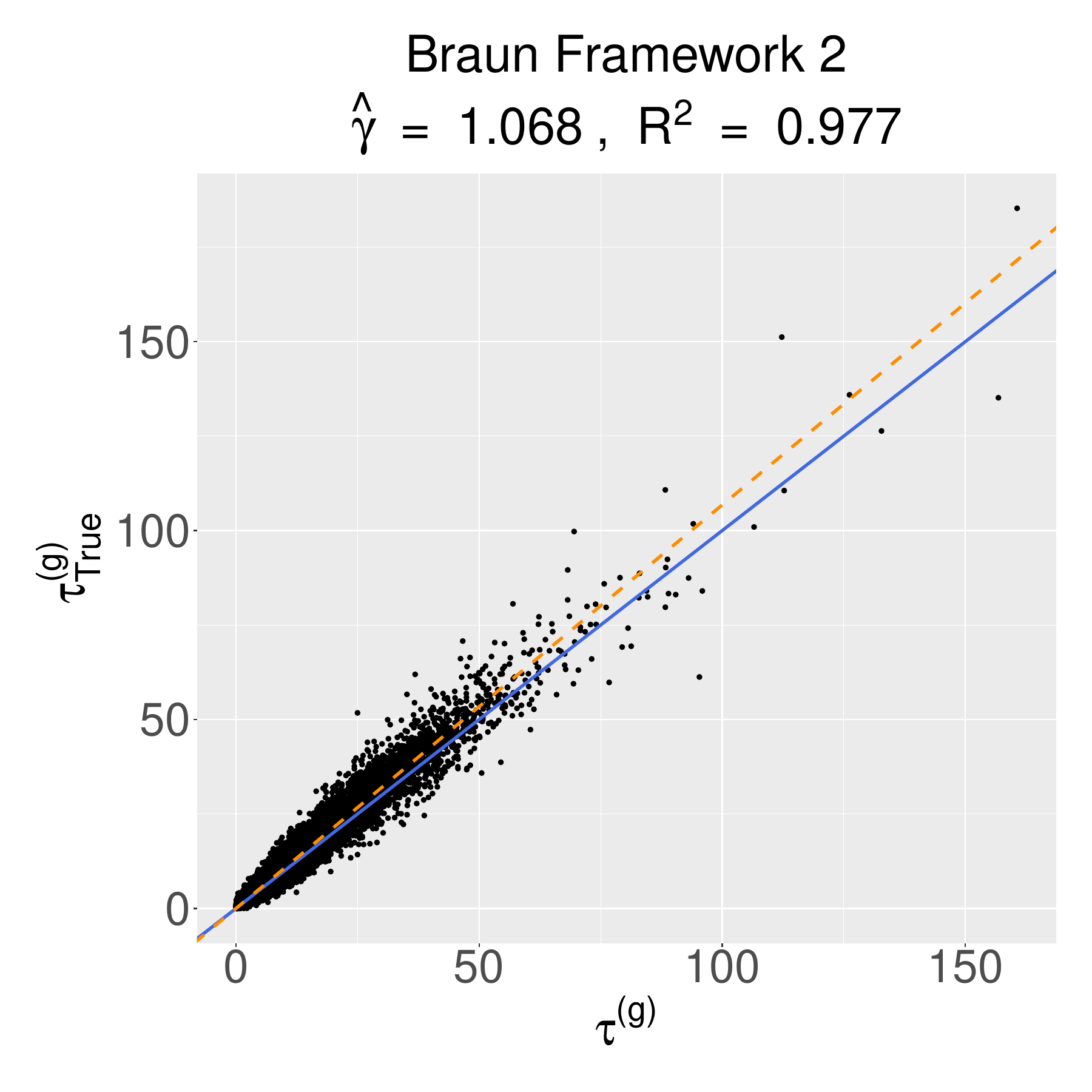} \\
\includegraphics[scale=0.13]{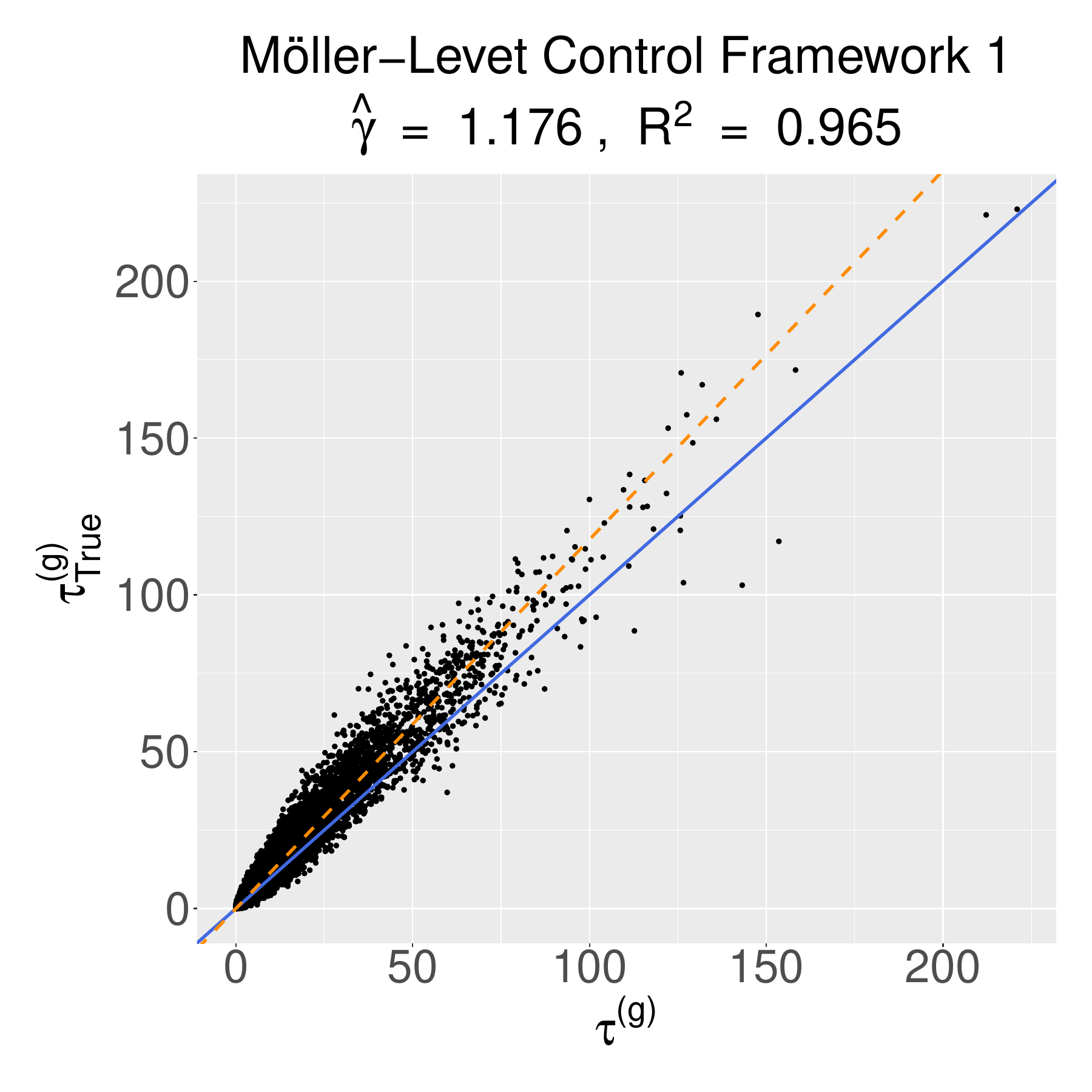} 
\includegraphics[scale=0.13]{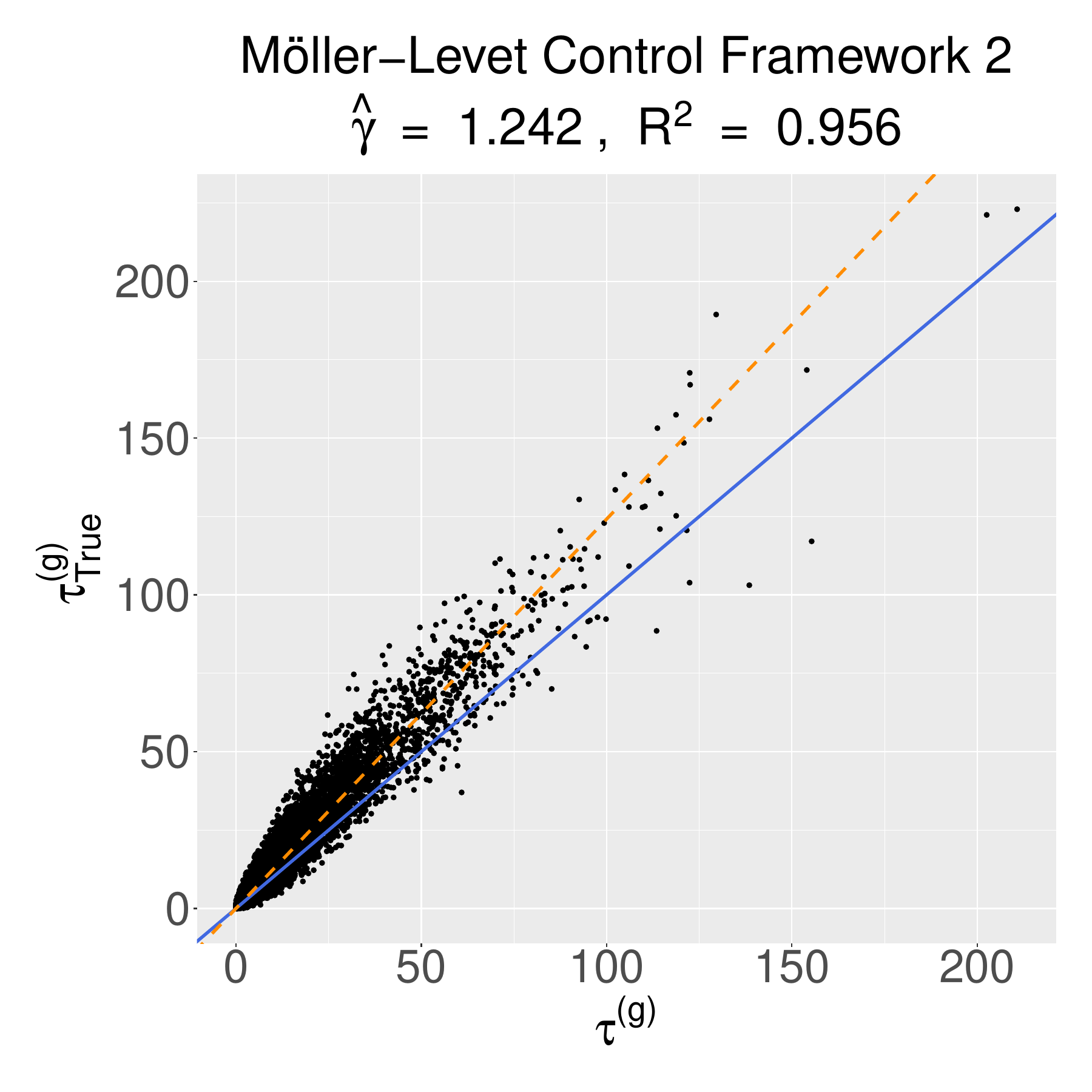}
\includegraphics[scale=0.13]{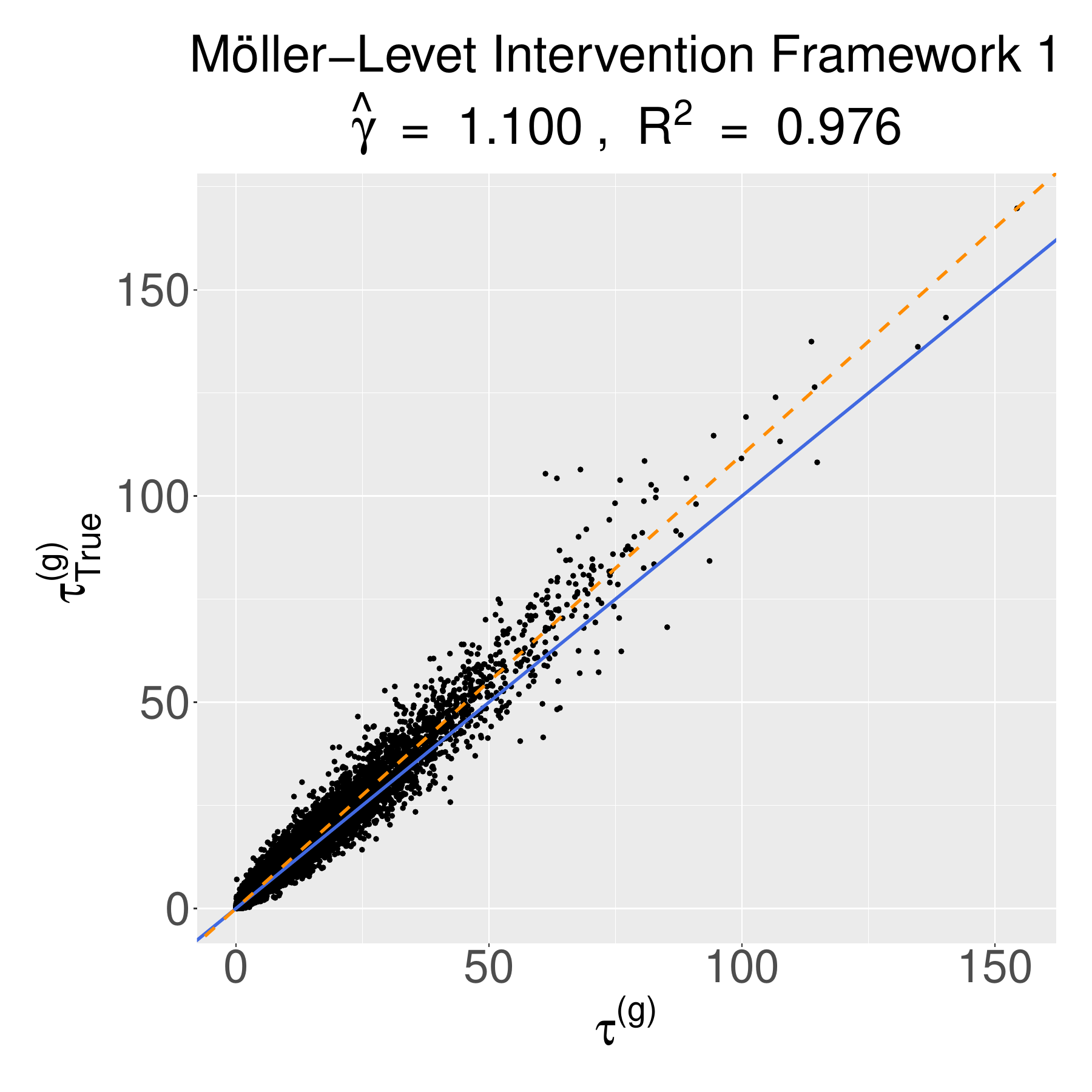}
\includegraphics[scale=0.13]{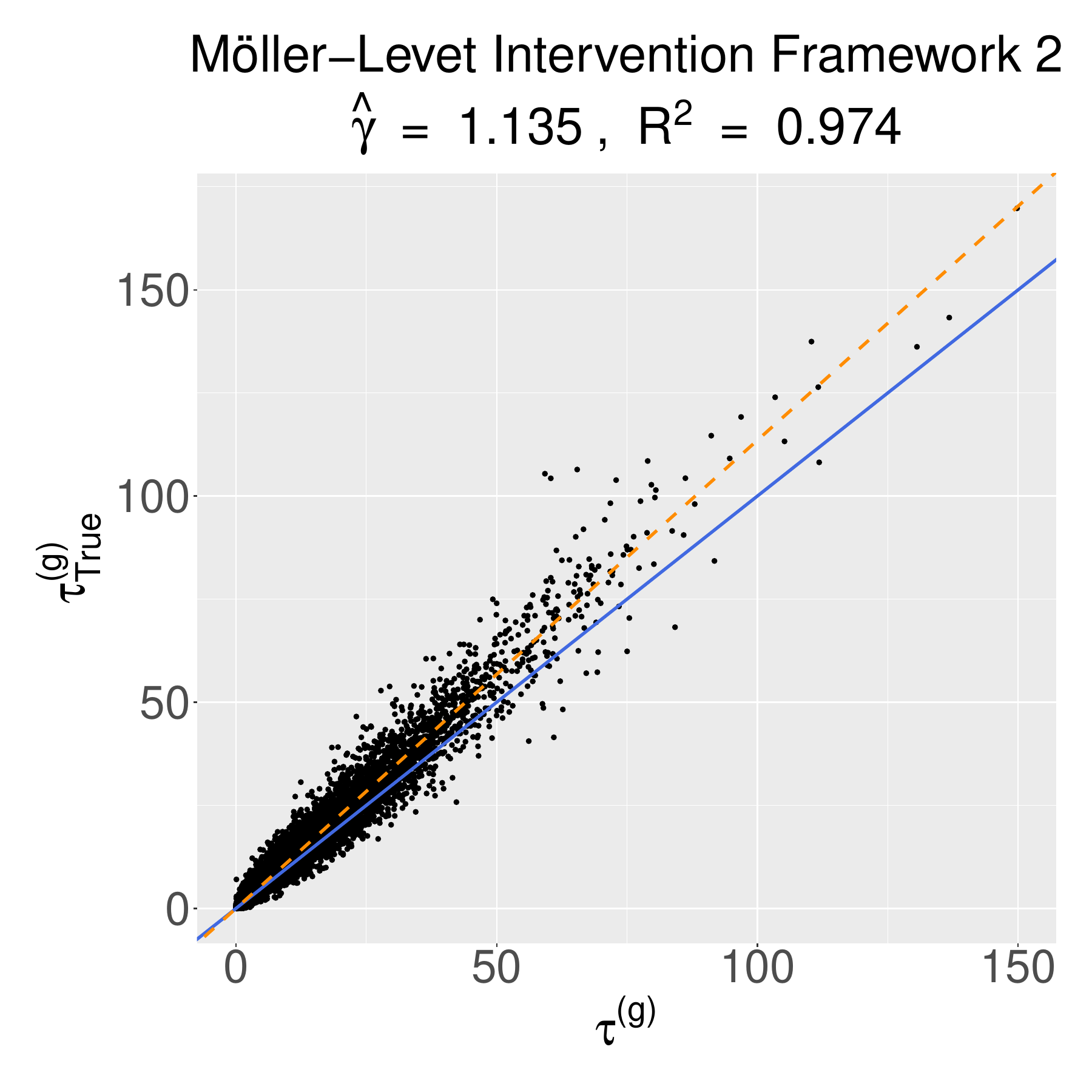} \\
\includegraphics[scale=0.13]{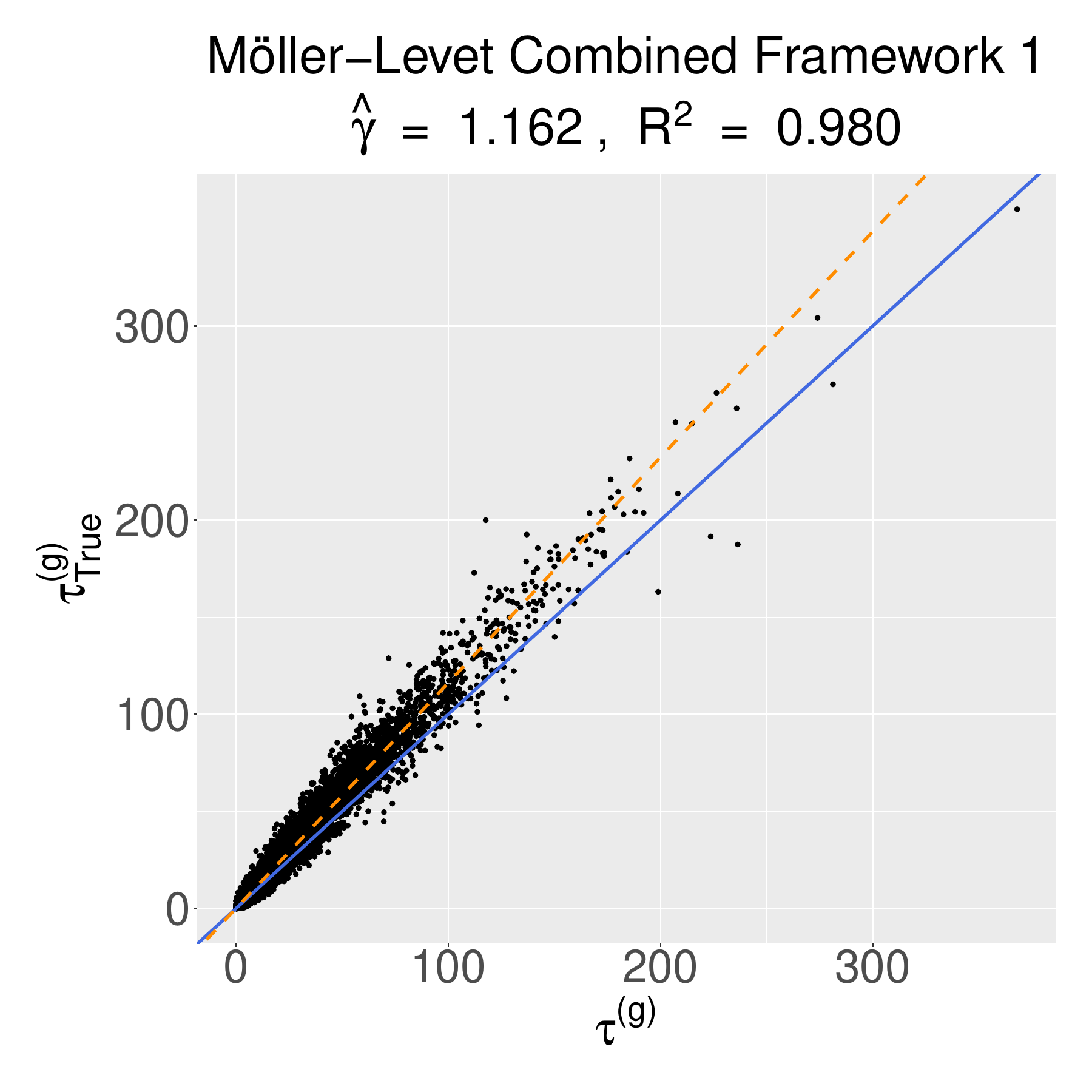}
\includegraphics[scale=0.13]{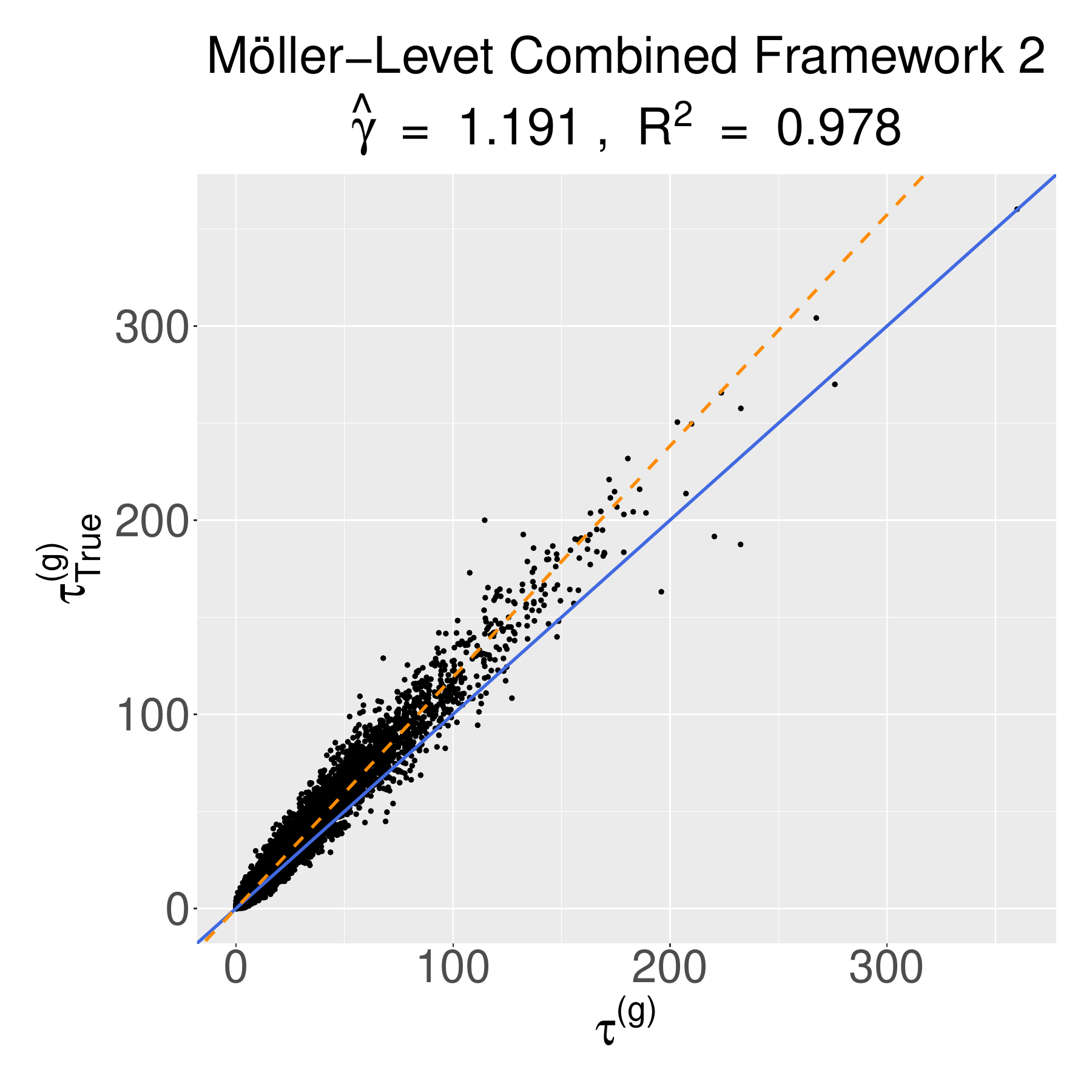}
\caption{Scatter plots of linear model fits using every gene for each Framework (Framework 1 the proposed method from Section \ref{sec:2.3}, Framework 2 linear mixed effects cosinor regression). The y-axis denotes a Wald test statistic obtained from internal circadian time data ($\tau^{(g)}_{\text{True}}$), and the x-axis a Wald test statistic obtained from Zeitgeber time data ($\tau^{(g)}$). The dashed orange line displays the linear model's fit, and the blue line denotes the fit of a linear model when $\hat{\gamma}=1$. Each data point in the scatter plot represents a Wald test statistic obtained for a specific gene.}
    \label{fig:test_w}
\end{figure*}

\clearpage
\newpage

\begin{table*}[!h]
	\caption{Comparison of both frameworks using every gold standard circadian gene available in each sample population. Framework 1 estimates a linear mixed effects cosinor model with the method proposed in Section \ref{sec:2.3} given Zeitgeber time (ZT), while Framework 2 estimates a linear mixed effects cosinor model given ZT. The regression parameter estimate $\hat{\gamma}$ is listed alongside the coefficient of determination ($R^2$), where the latter is in parentheses. Bold values indicate a value of $\hat{\gamma}$ that is closest to one (up to the first three decimal digits), which signifies that the quantities obtained from a framework are closer to the quantities obtained from a linear mixed effects cosinor model estimated with internal circadian time.} \label{tab:app2}
  \centering
		\begin{tabular}{|l|c|c|c|c|c|c|c|c|c|}
			\hline
   \multirow{1}{*}{Sample Population} & \multirow{1}{*}{Framework}  &  \multicolumn{1}{c|}{$\theta^{(g)}_1$} & \multicolumn{1}{c|}{$\tau^{(g)}$} \\
   \hline
         \multirow{2}{*}{Archer (Control)} & 1 & 0.991 (0.994) & \textbf{1.056 (0.996)} \\ 
        & 2 & \textbf{1.005 (0.992)} & 1.090 (0.995) \\ 
         \hline 
         \multirow{2}{*}{Archer (Intervention)} & 1 & \textbf{1.093 (0.962)} & \textbf{1.055 (0.959)} \\ 
    & 2 & 1.099 (0.963) & 1.081 (0.959) \\
         \hline 
         \multirow{2}{*}{Archer (Combined)}& 1 & \textbf{1.070 (0.971)} & \textbf{1.068 (0.992)} \\
         & 2 & 1.083 (0.966) & 1.097 (0.991) \\
         \hline 
         \multirow{2}{*}{Braun} & 1 & 0.984 (0.996) & 0.943 (0.979) \\
         & 2 & \textbf{0.994 (0.995)} & \textbf{0.955 (0.974)} \\
         \hline 
         \multirow{2}{*}{M\"{o}ller-Levet (Control)} & 1 & \textbf{1.070 (0.988)} & \textbf{0.991 (0.955)} \\ 
         & 2 & 1.095 (0.985) & 1.020 (0.943) \\
         \hline 
         \multirow{2}{*}{M\"{o}ller-Levet (Intervention)} & 1 & \textbf{1.065 (0.991)} & \textbf{1.114 (0.981)} \\ 
         & 2 & 1.083 (0.990) & 1.146 (0.982) \\
         \hline 
         \multirow{2}{*}{M\"{o}ller-Levet (Combined)} & 1 & \textbf{1.085 (0.994)} & \textbf{1.050 (0.985)} \\ 
         & 2 & 1.098 (0.993) & 1.069 (0.985) \\
         \hline 
\end{tabular} 
\end{table*}

\newpage
\clearpage

\begin{figure*}[!h]
\centering
\includegraphics[scale=0.13]{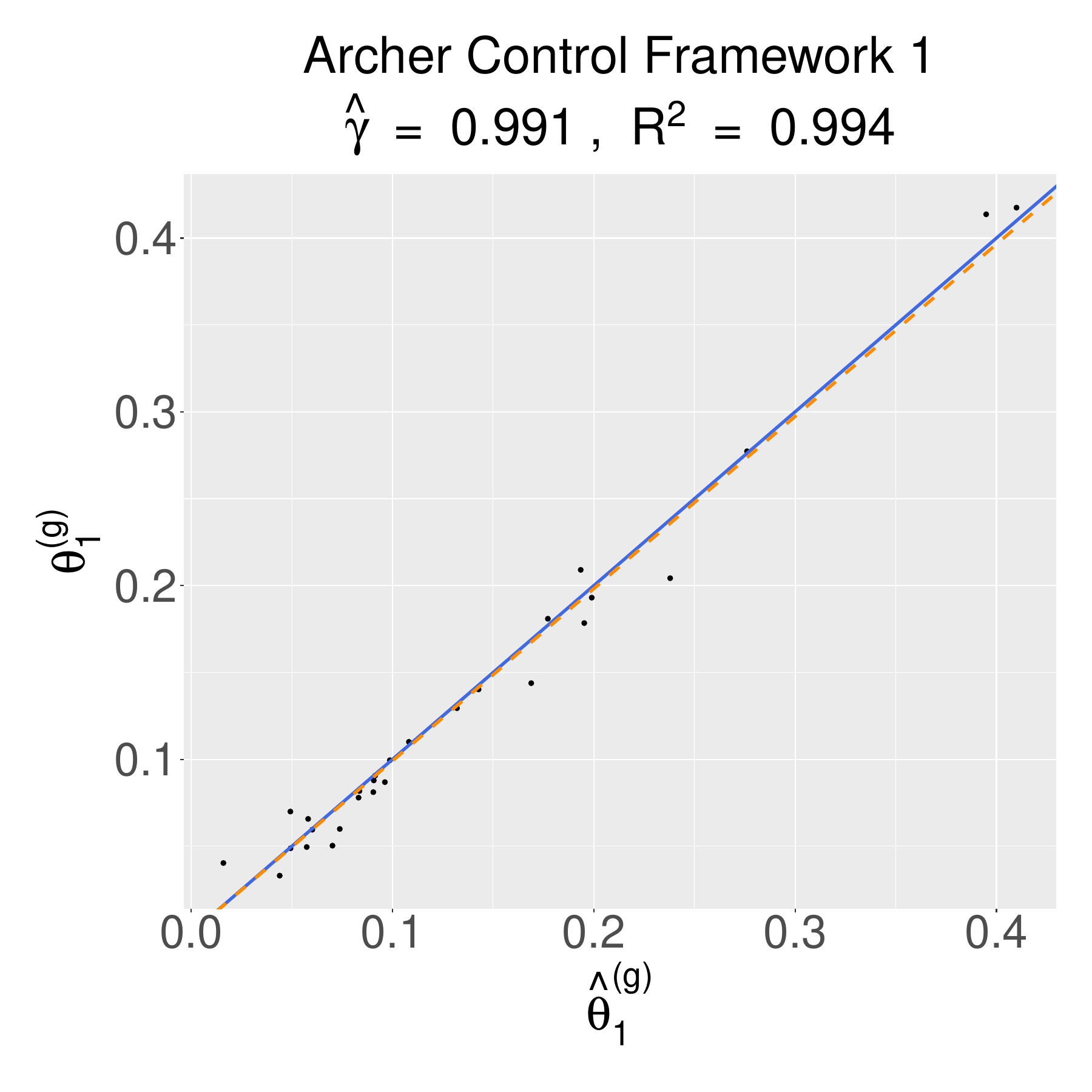}
\includegraphics[scale=0.13]{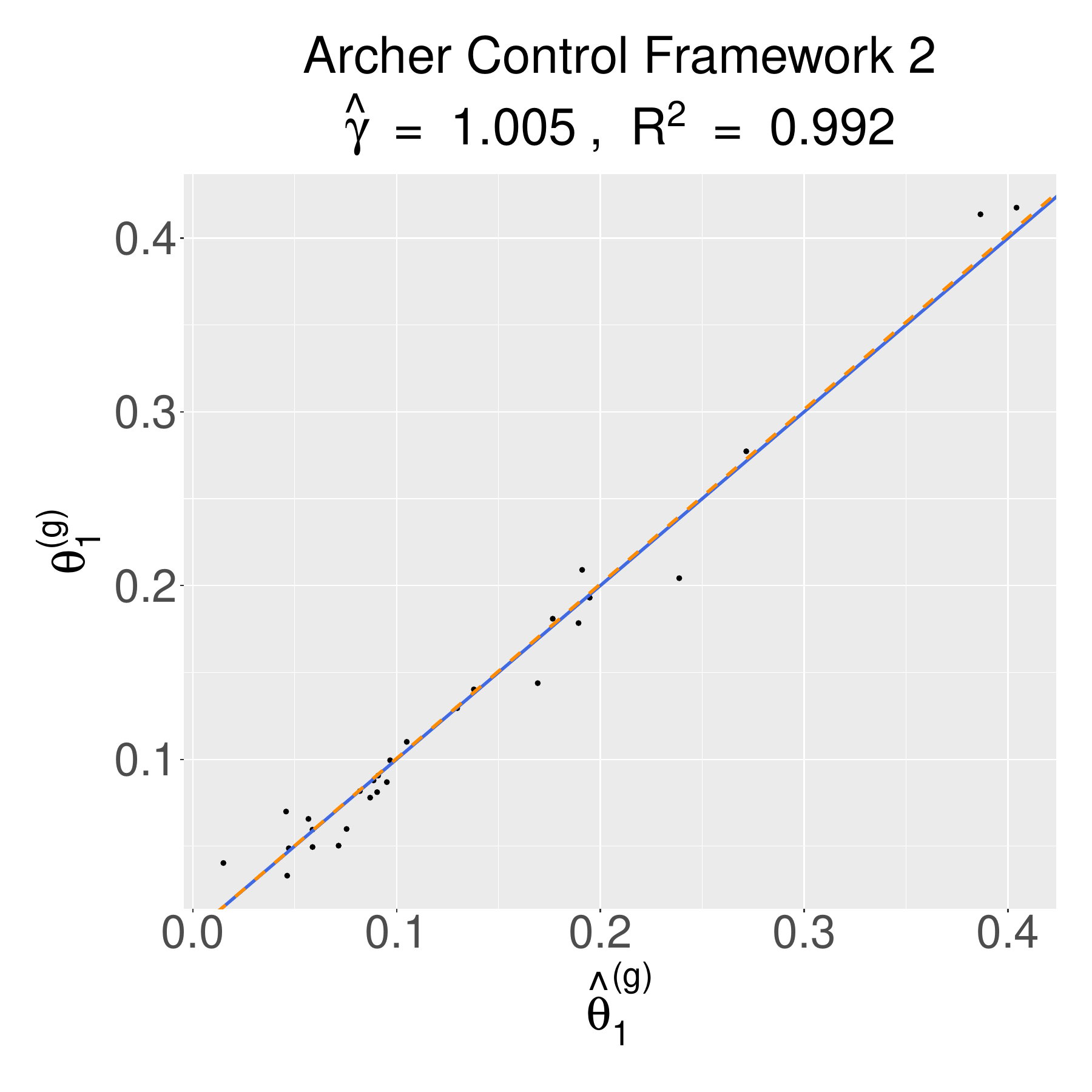} 
\includegraphics[scale=0.13]{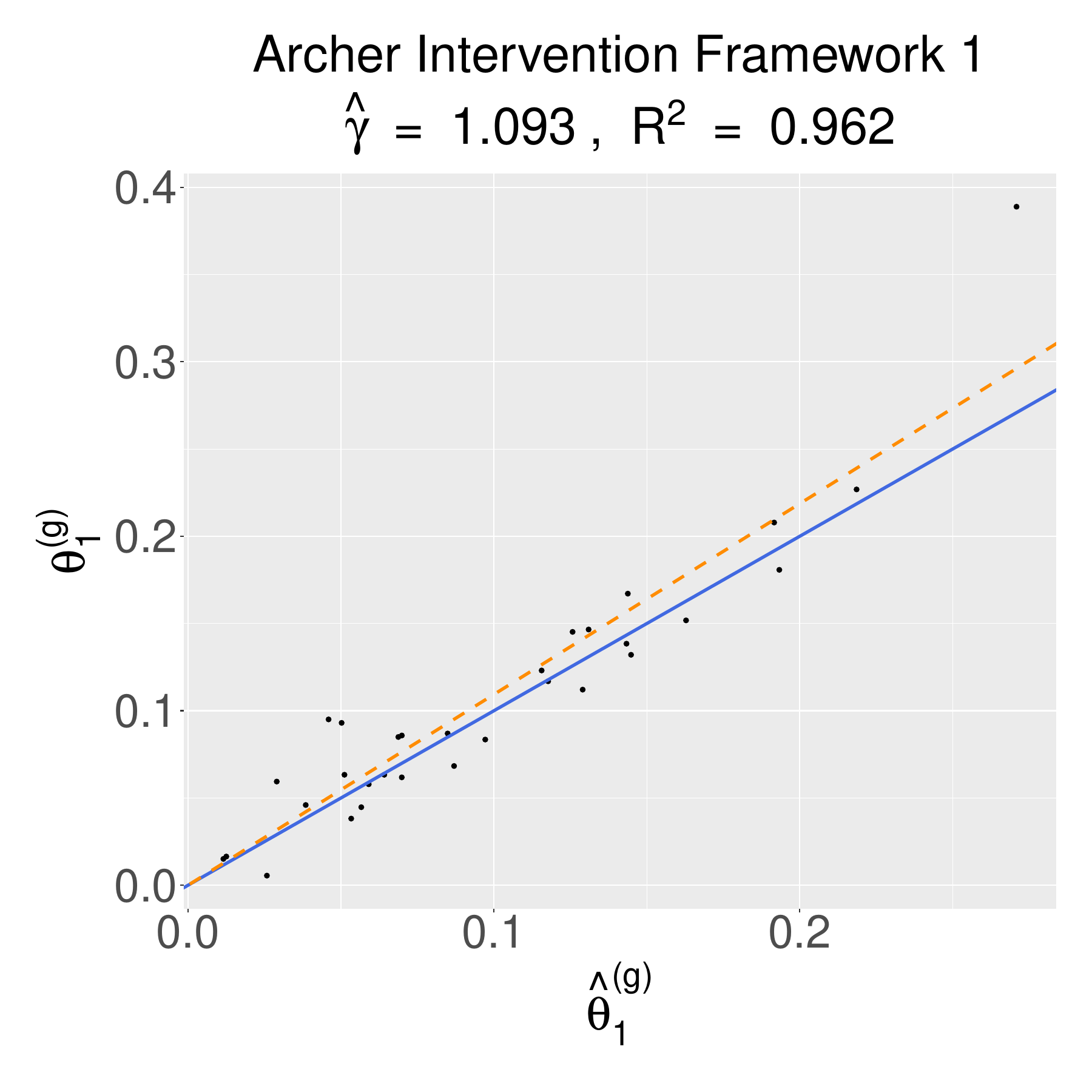} 
\includegraphics[scale=0.13]{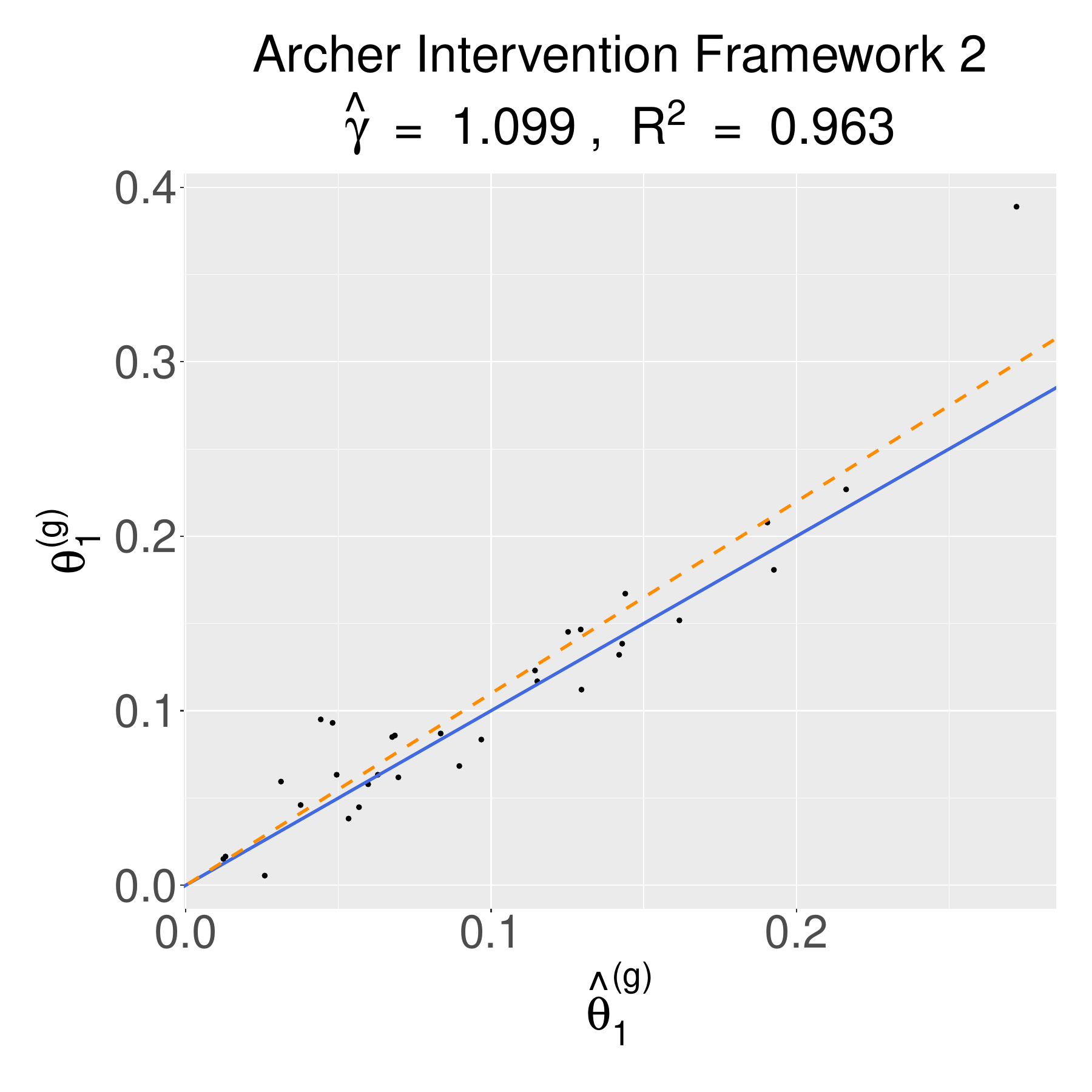}\\
\includegraphics[scale=0.13]{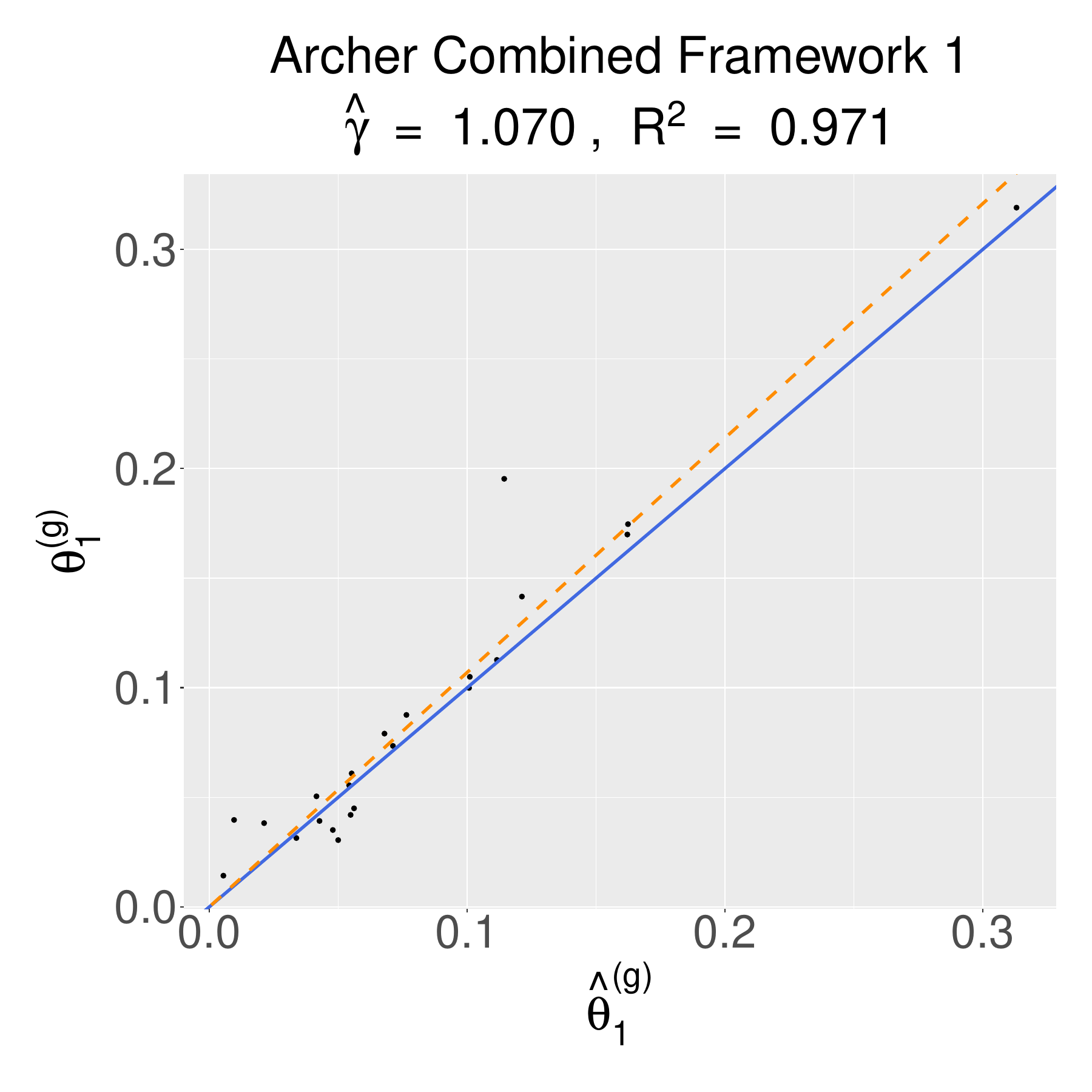}
\includegraphics[scale=0.13]{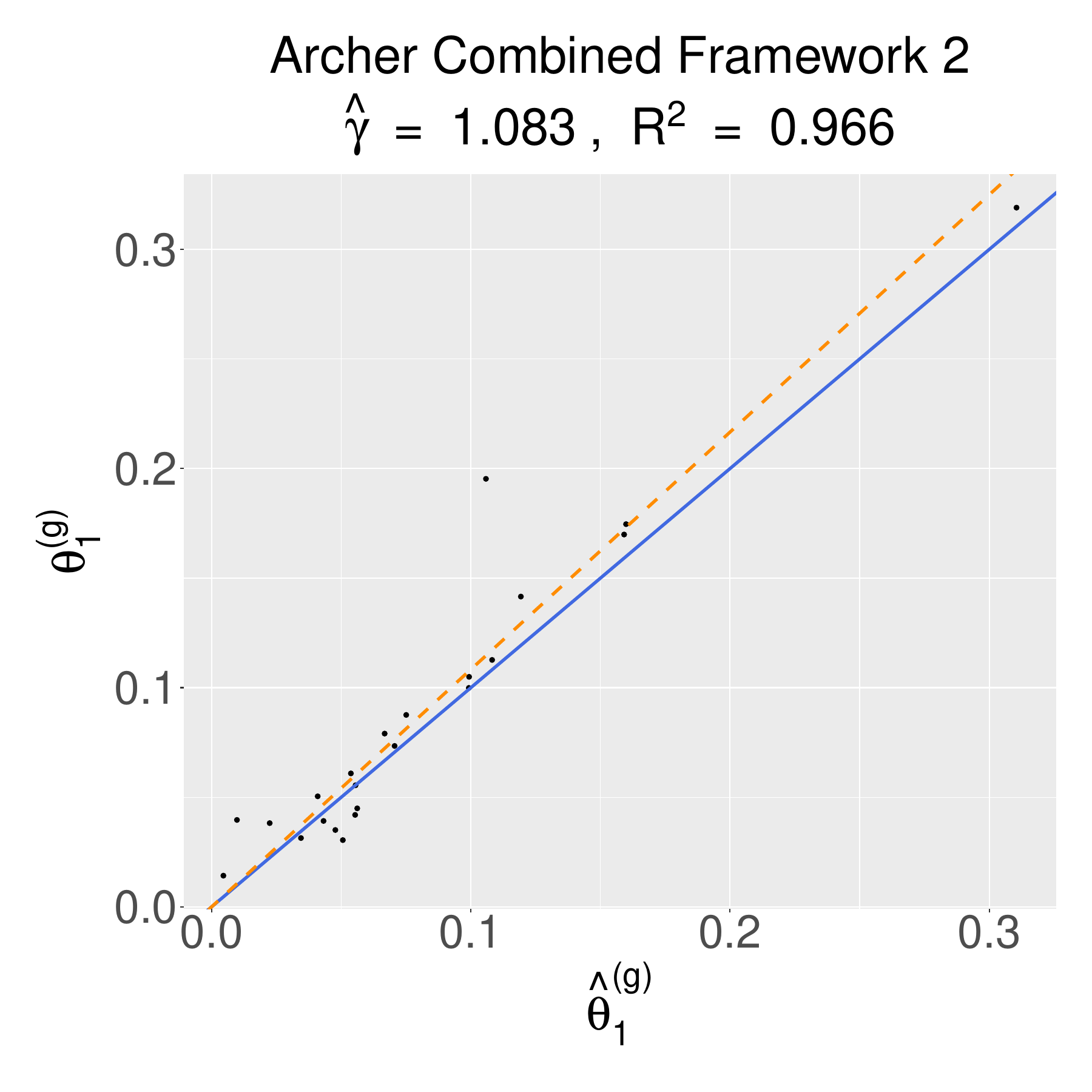} 
\includegraphics[scale=0.13]{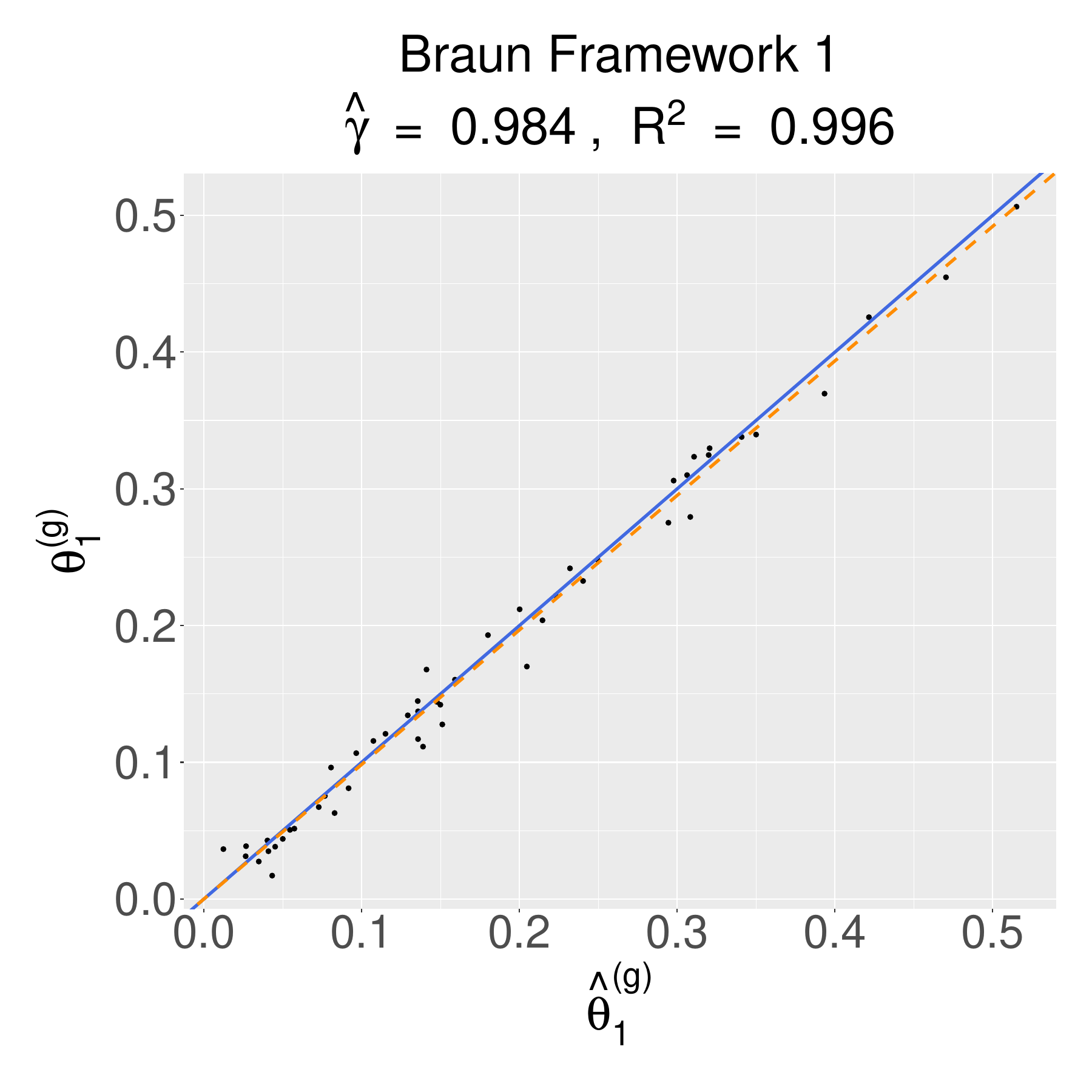}
\includegraphics[scale=0.13]{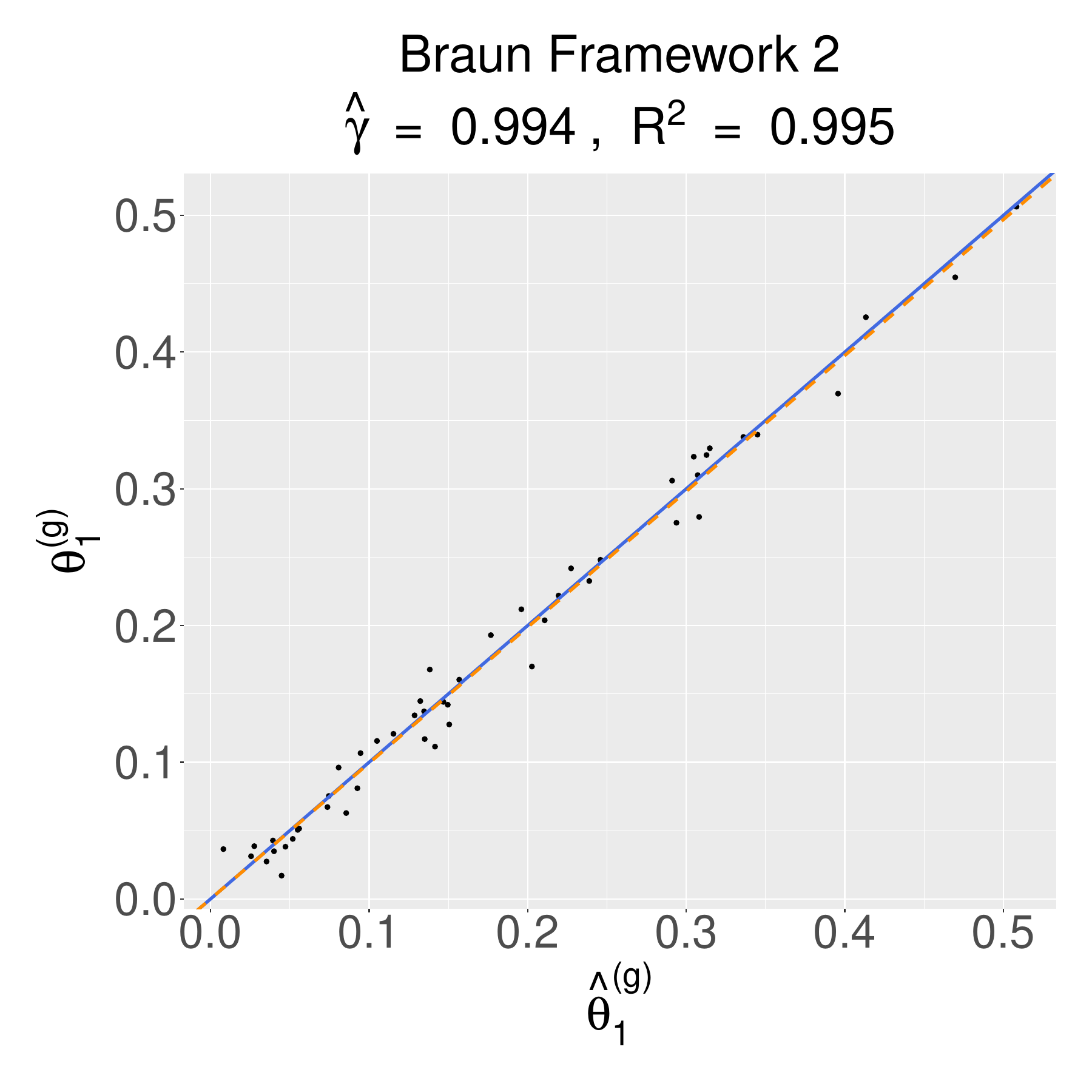} \\
\includegraphics[scale=0.13]{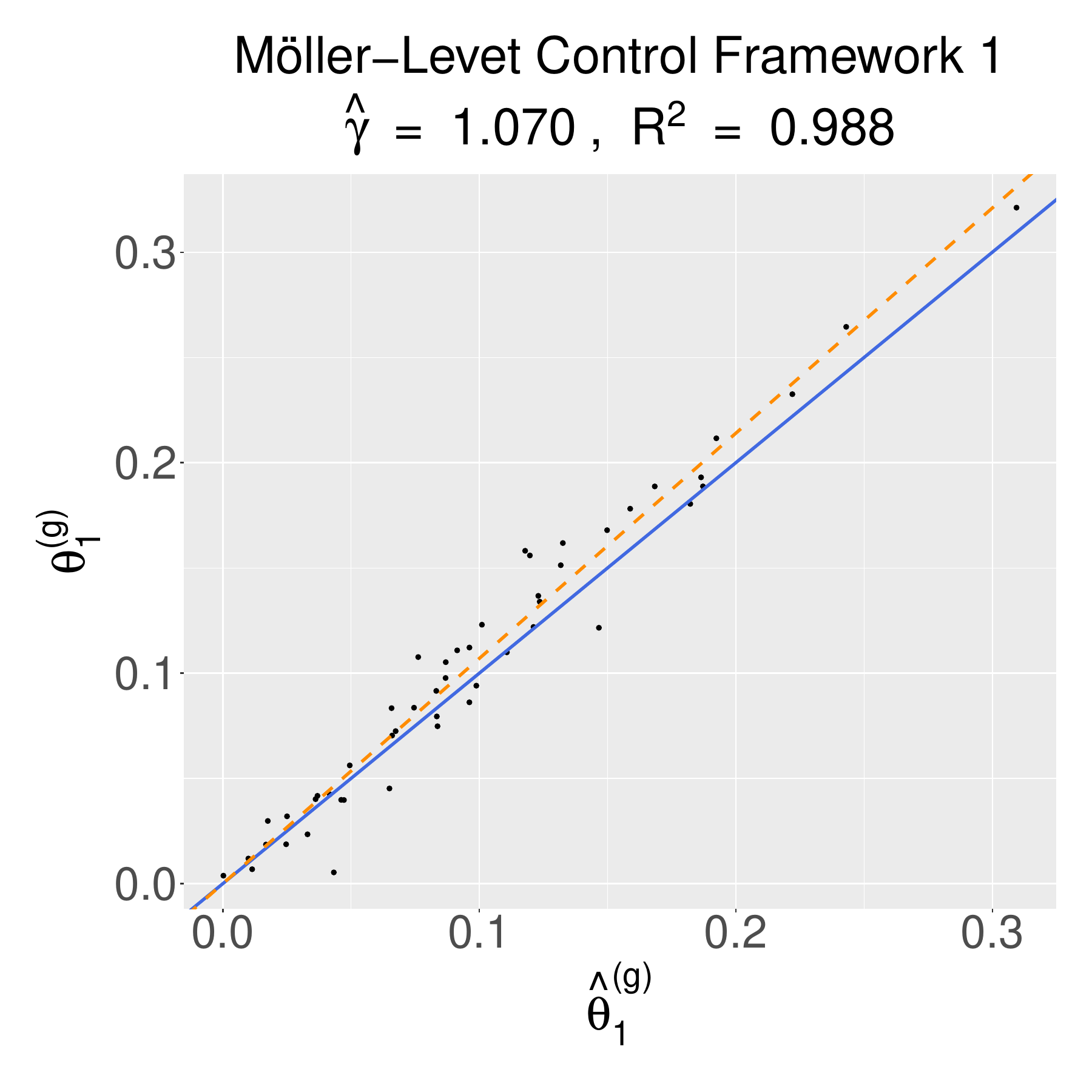} 
\includegraphics[scale=0.13]{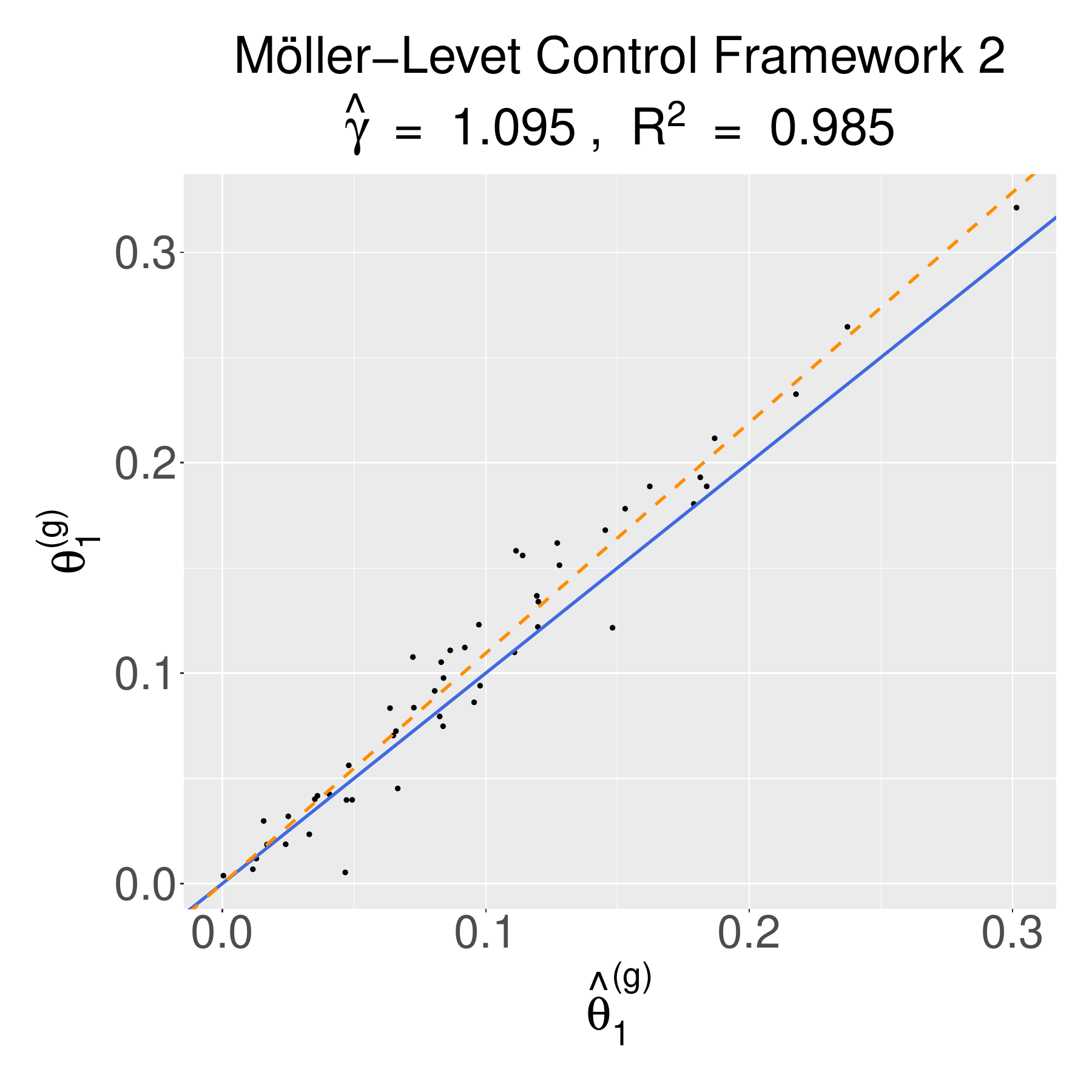}
\includegraphics[scale=0.13]{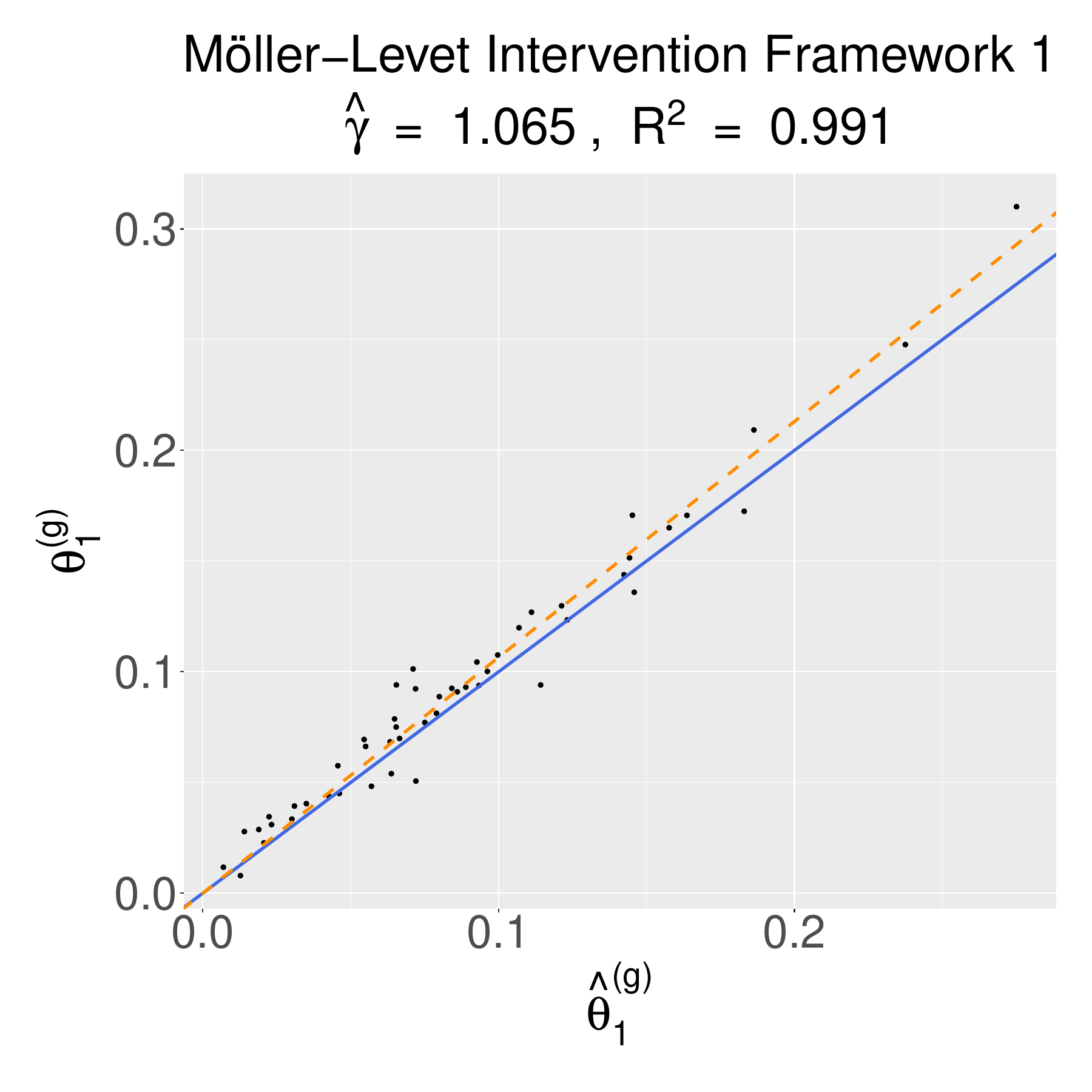}
\includegraphics[scale=0.13]{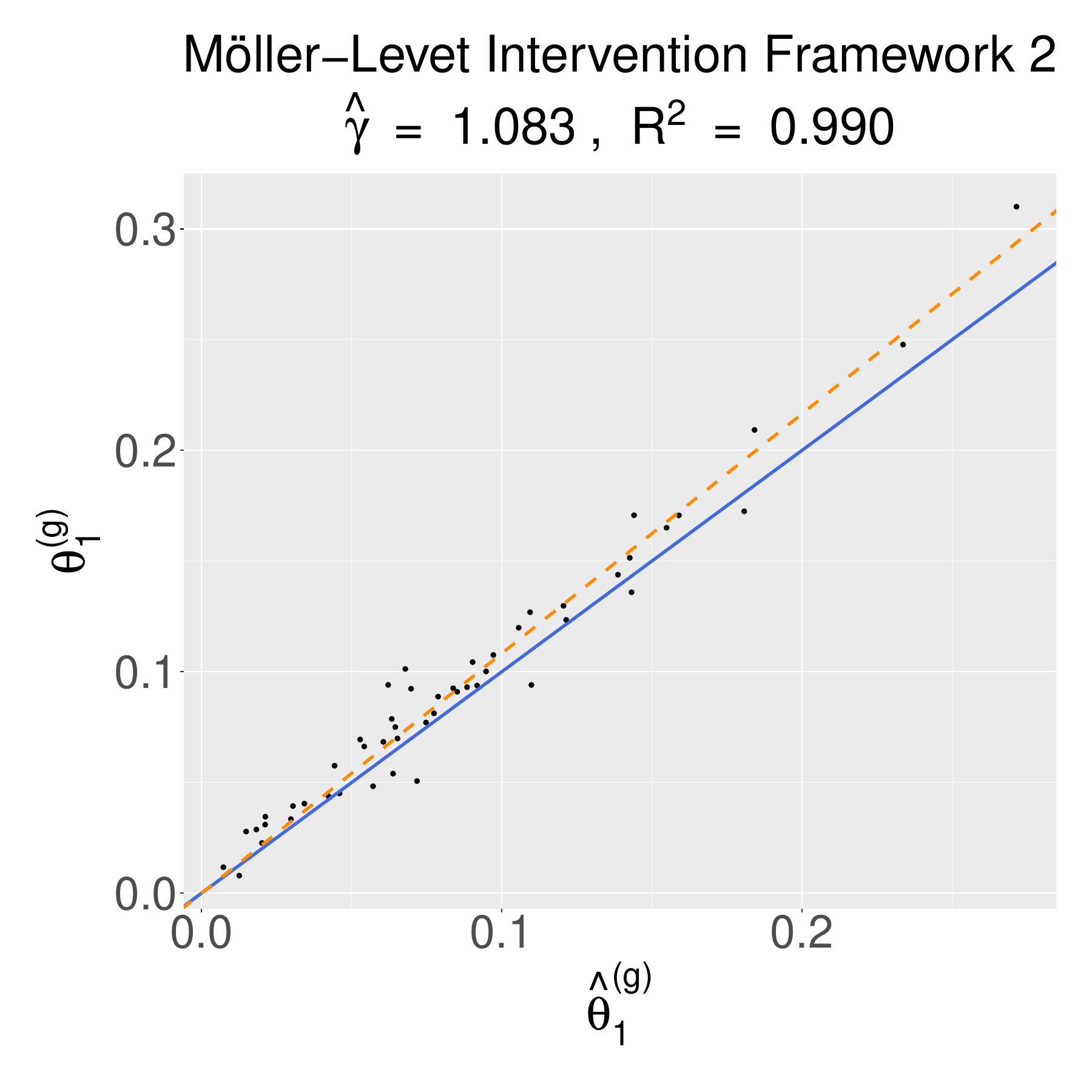} \\
\includegraphics[scale=0.13]{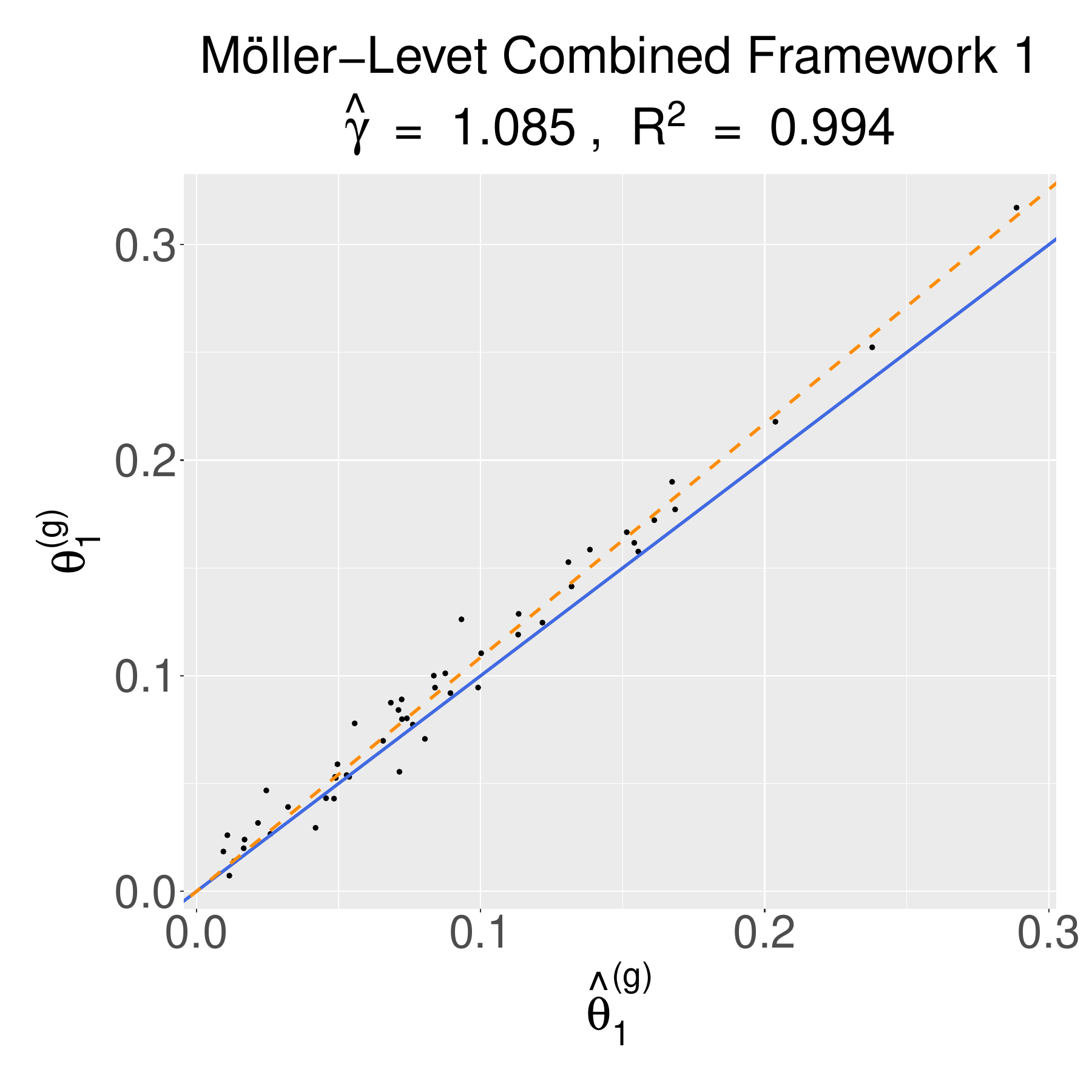}
\includegraphics[scale=0.13]{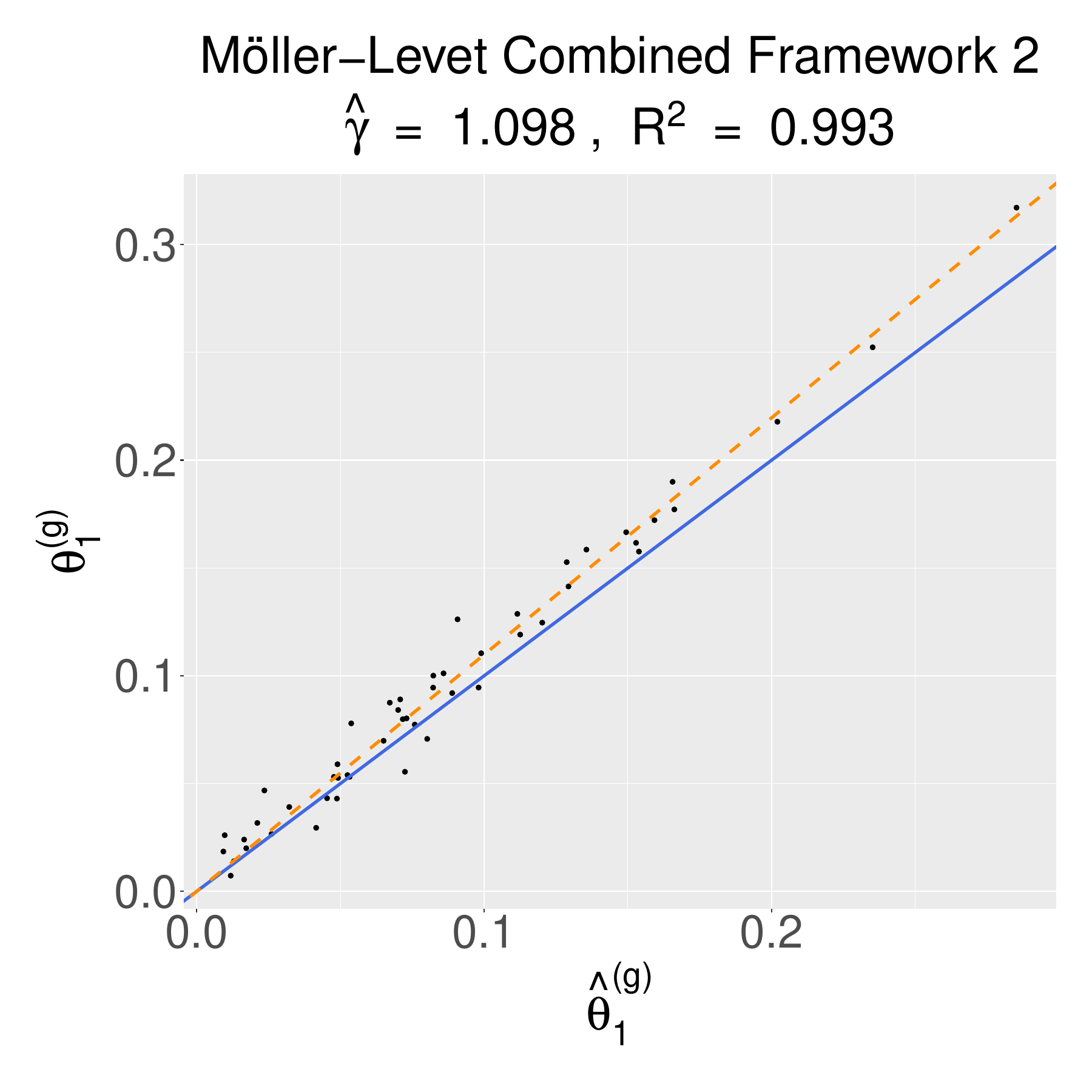}
\caption{Scatter plots of linear model fits using every gold standard circadian gene for each Framework (Framework 1 the proposed method from Section \ref{sec:2.3}, Framework 2 linear mixed effects cosinor regression). The y-axis denotes an amplitude estimate obtained from internal circadian time data ($\theta_1^{(g)}$), and the x-axis an amplitude estimate obtained from Zeitgeber time data ($\hat{\theta}_1^{(g)}$). The dashed orange line displays the linear model's fit, and the blue line denotes the fit of a linear model when $\hat{\gamma}=1$. Each data point in the scatter plot represents an amplitude estimate obtained for a specific gene.}
\label{fig:amp_g}
\end{figure*}

\clearpage
\newpage

\begin{figure*}[!h]
\centering
\includegraphics[scale=0.13]{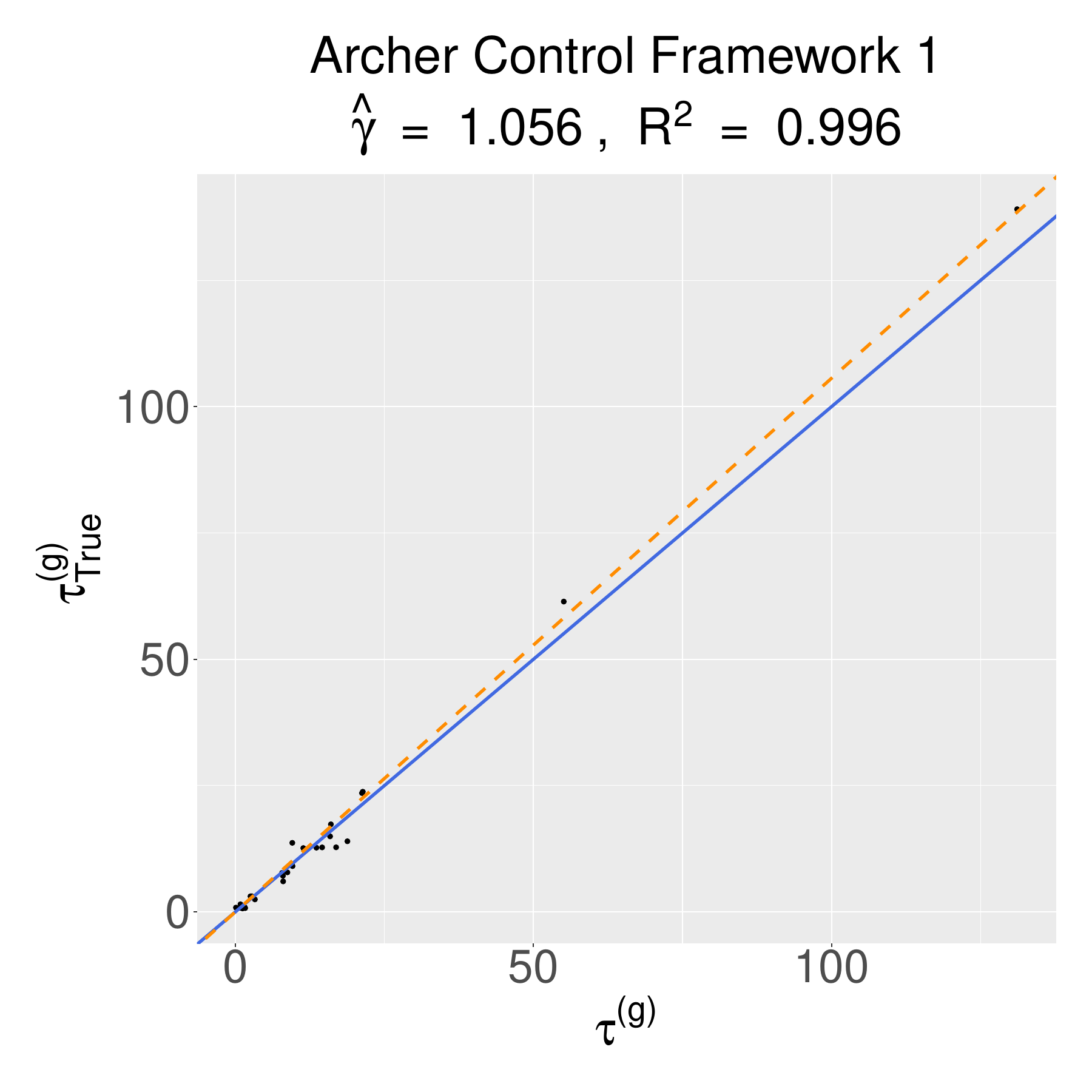}
\includegraphics[scale=0.13]{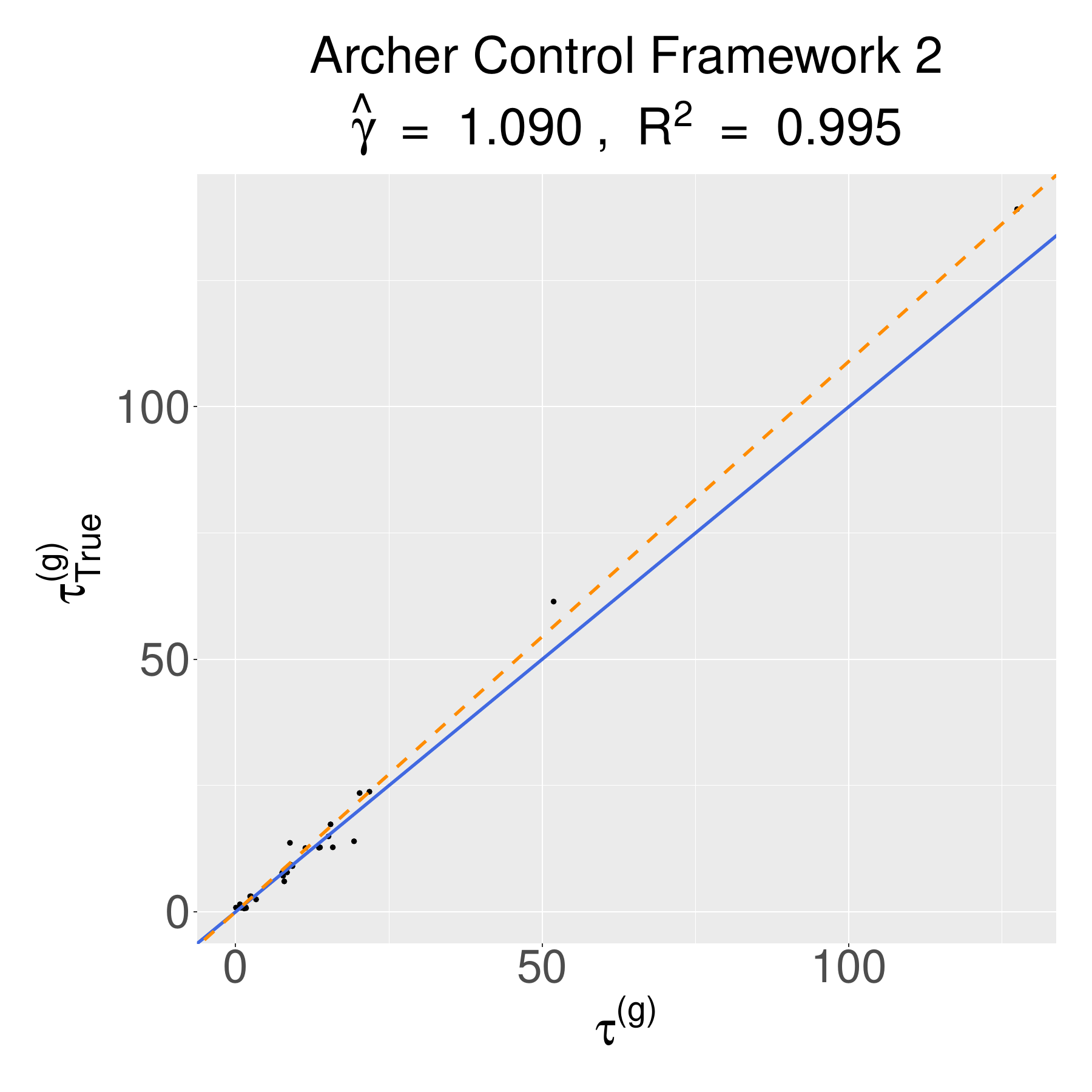} 
\includegraphics[scale=0.13]{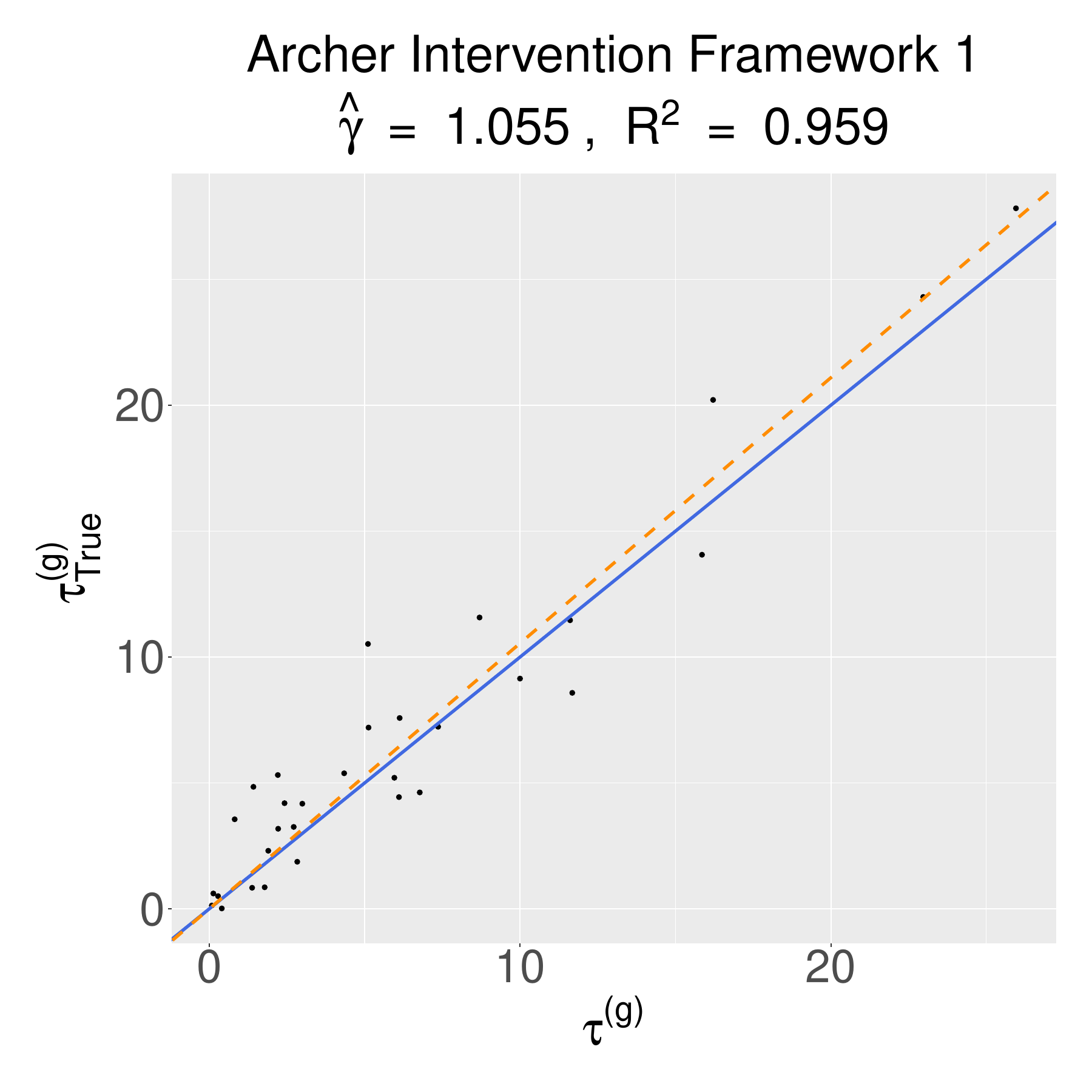} 
\includegraphics[scale=0.13]{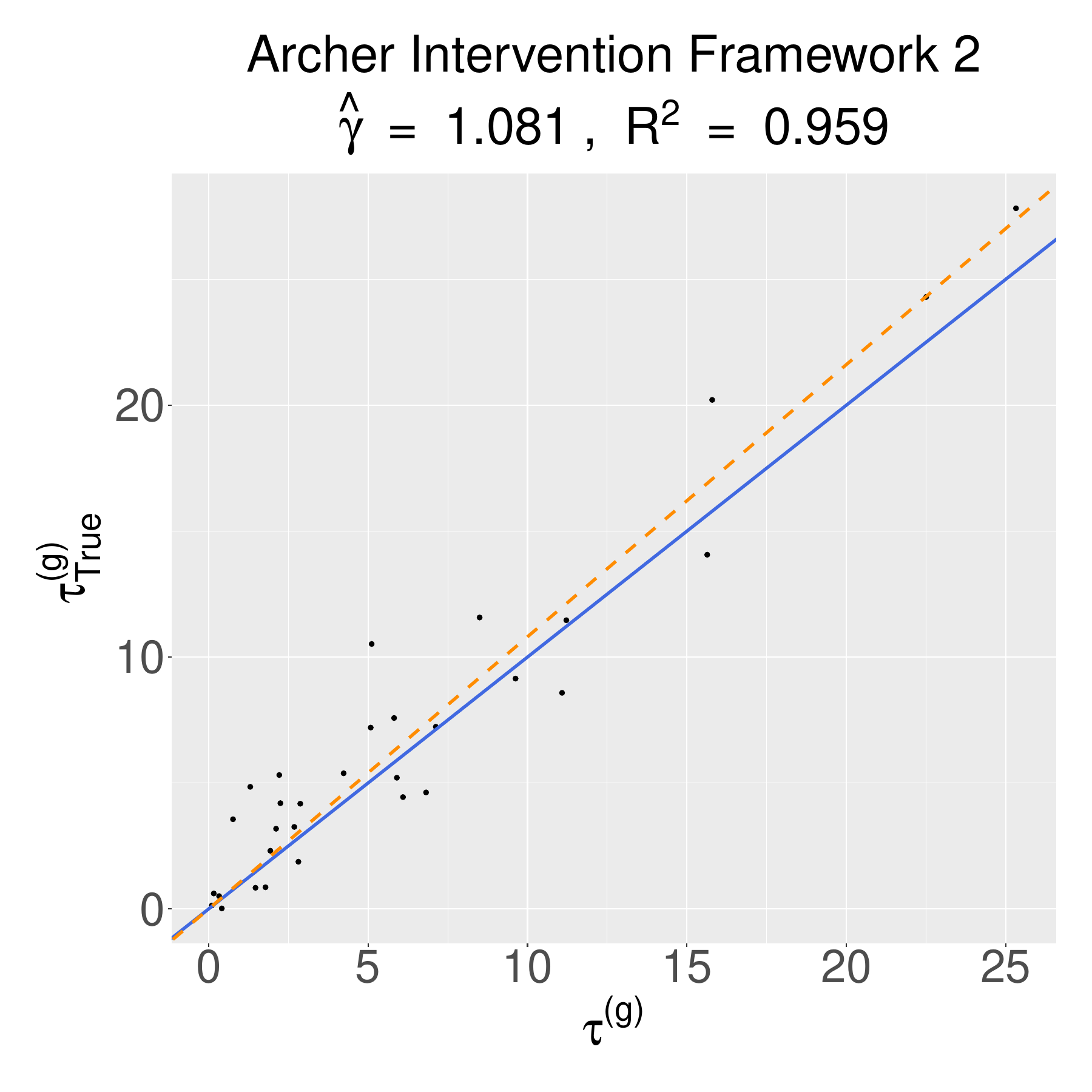}\\
\includegraphics[scale=0.13]{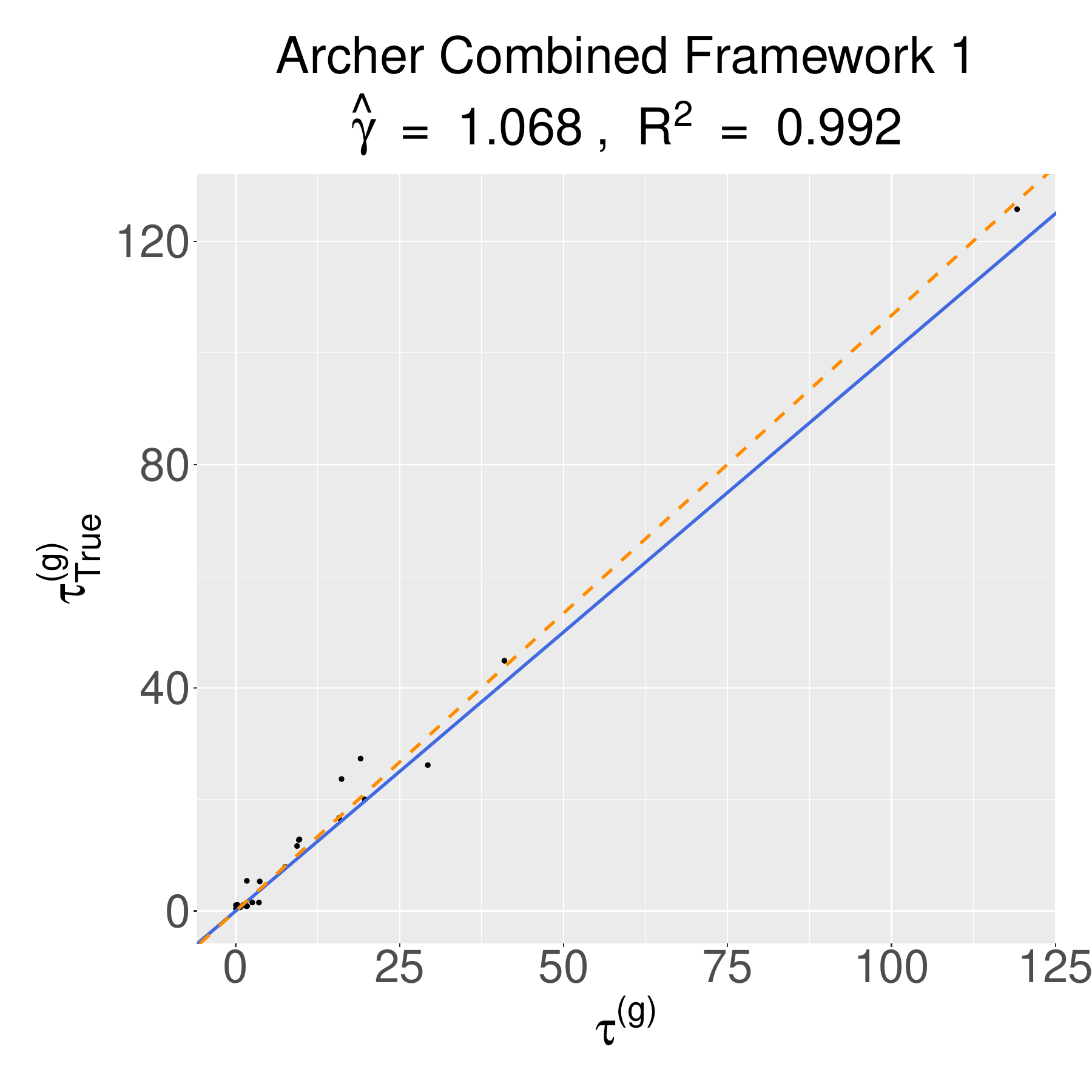}
\includegraphics[scale=0.13]{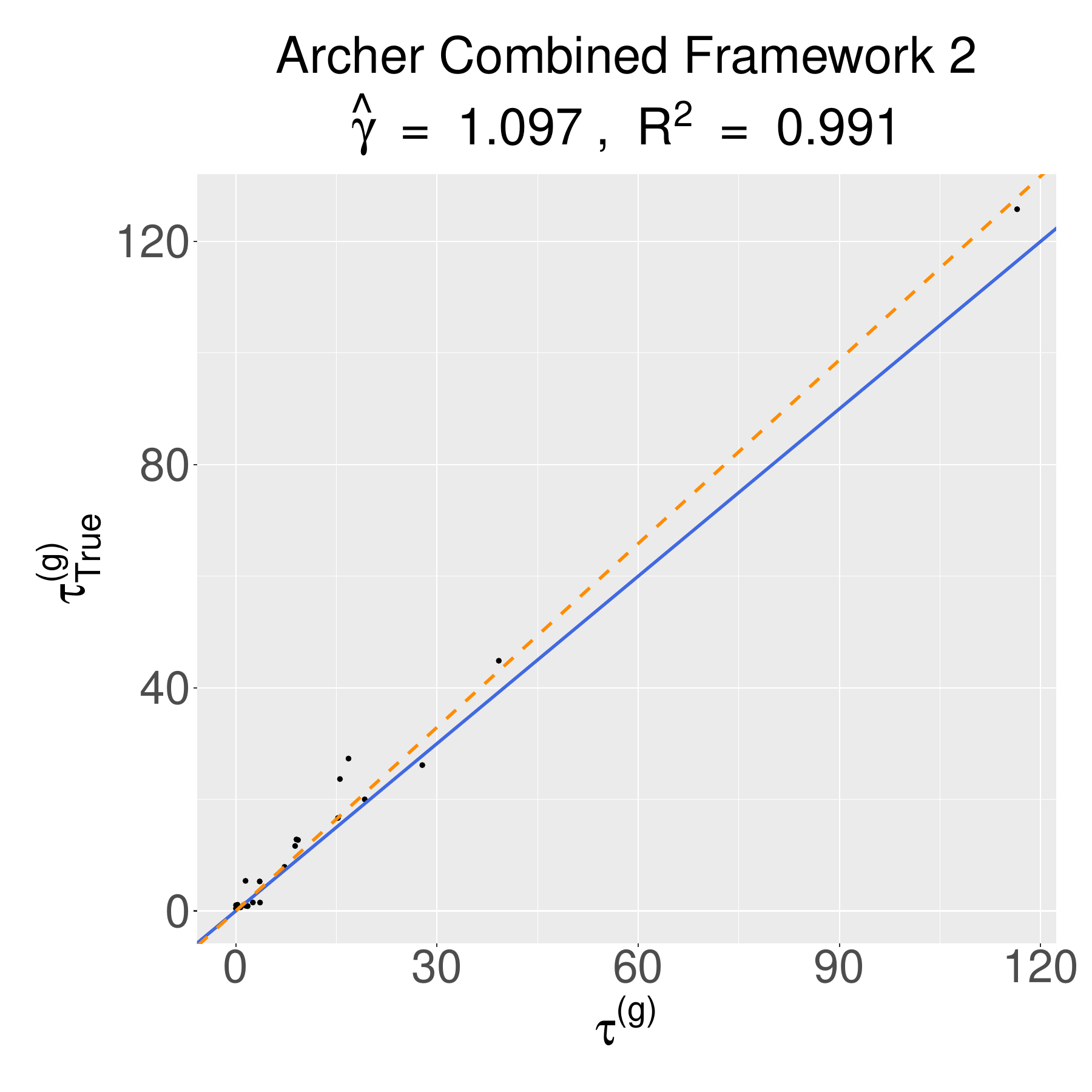} 
\includegraphics[scale=0.13]{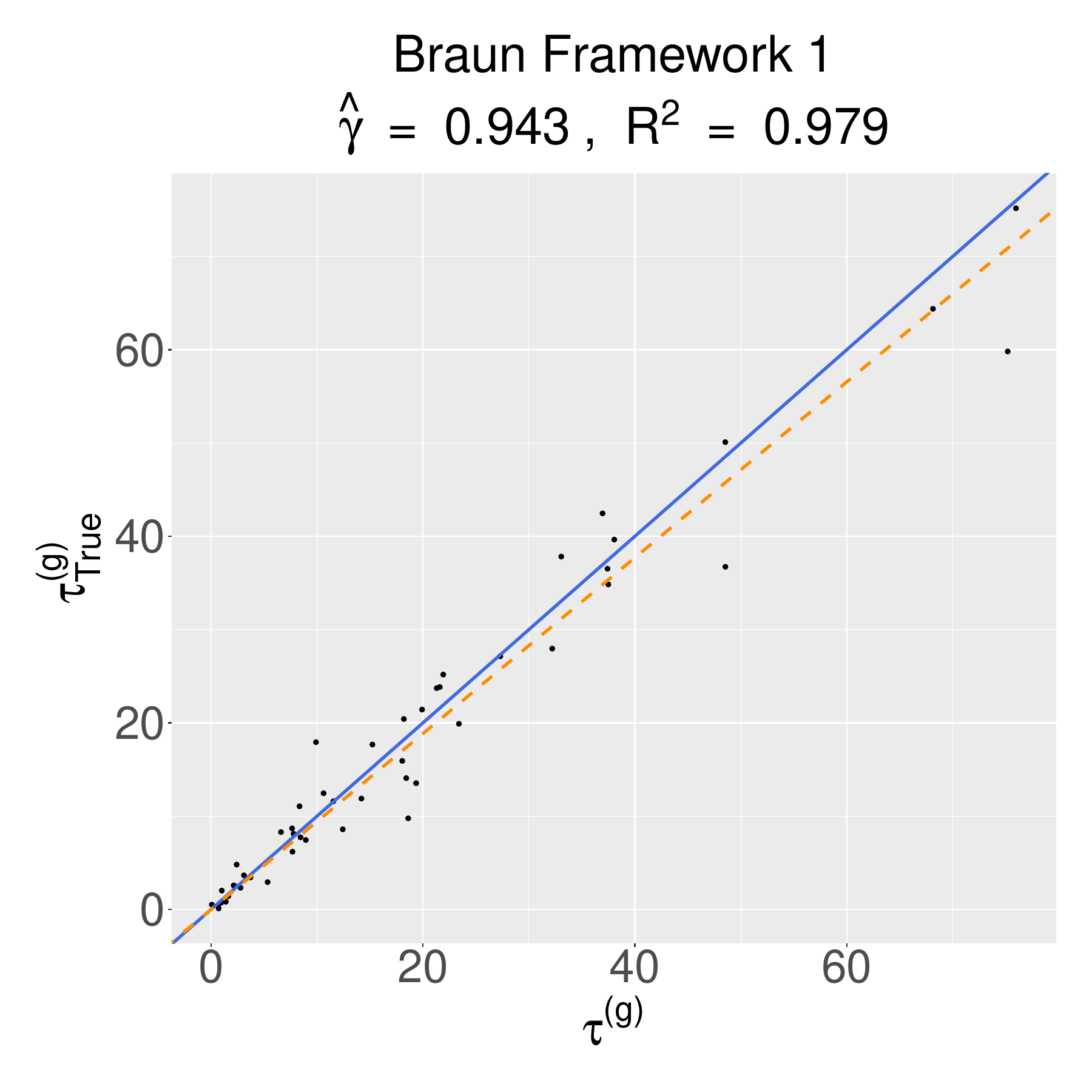}
\includegraphics[scale=0.13]{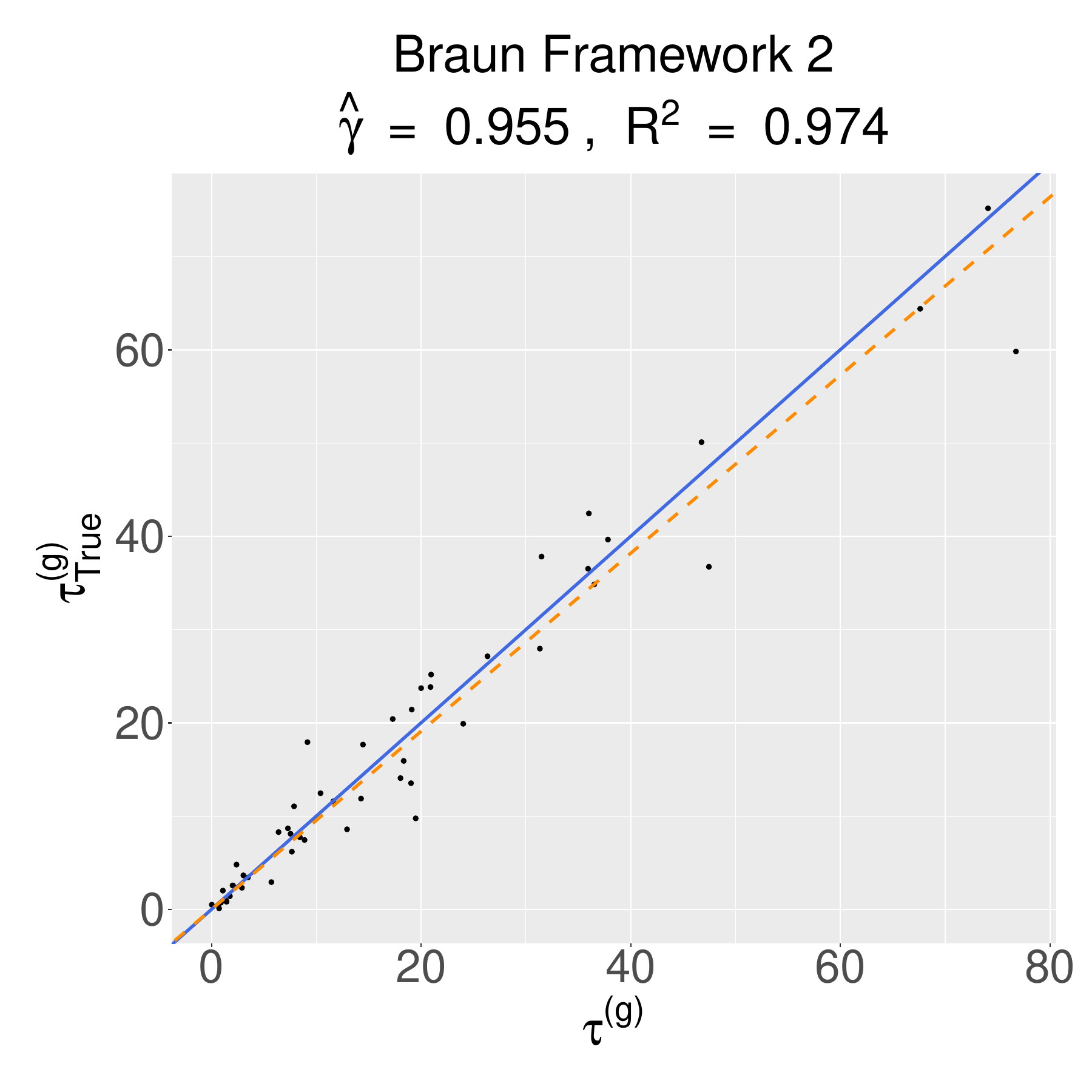} \\
\includegraphics[scale=0.13]{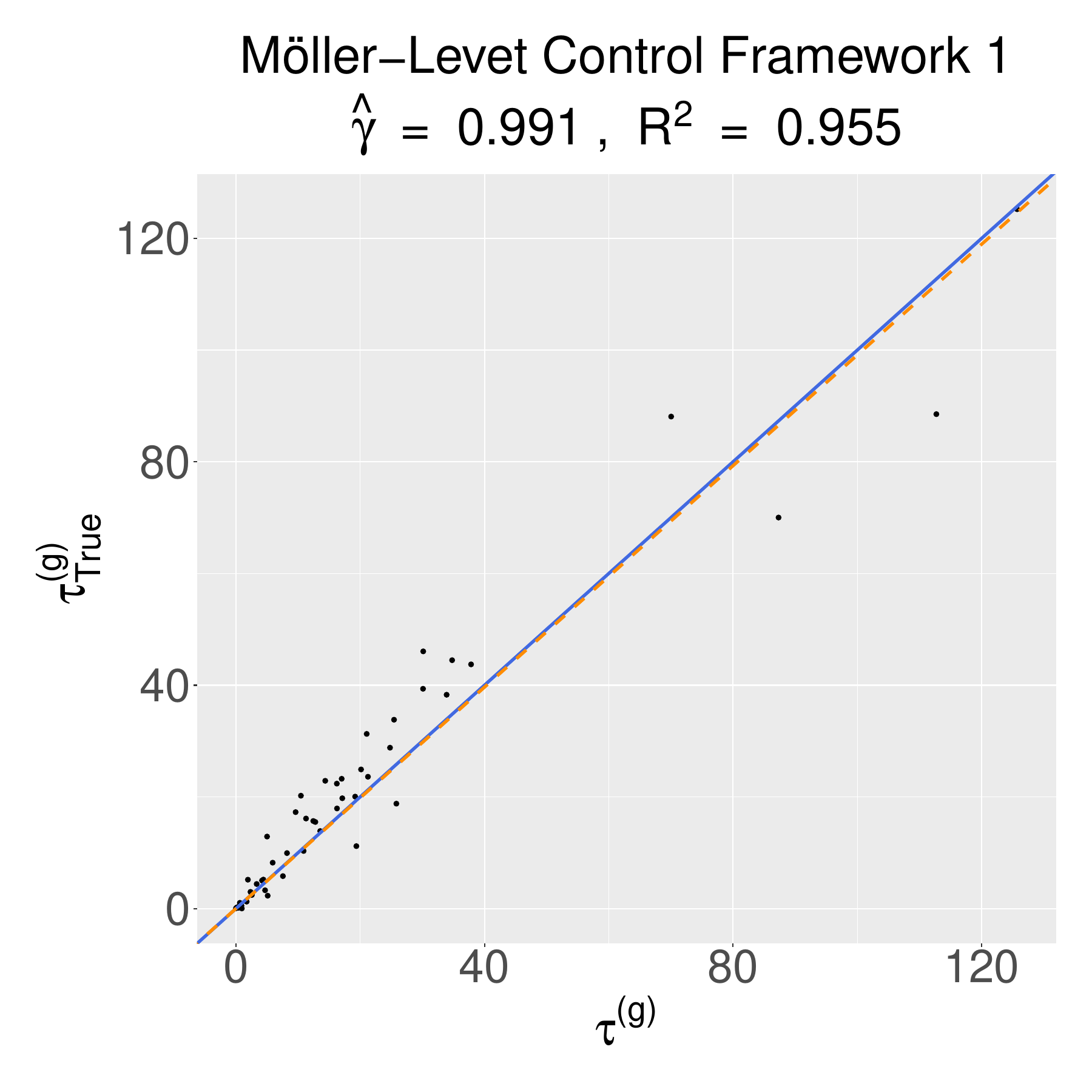} 
\includegraphics[scale=0.13]{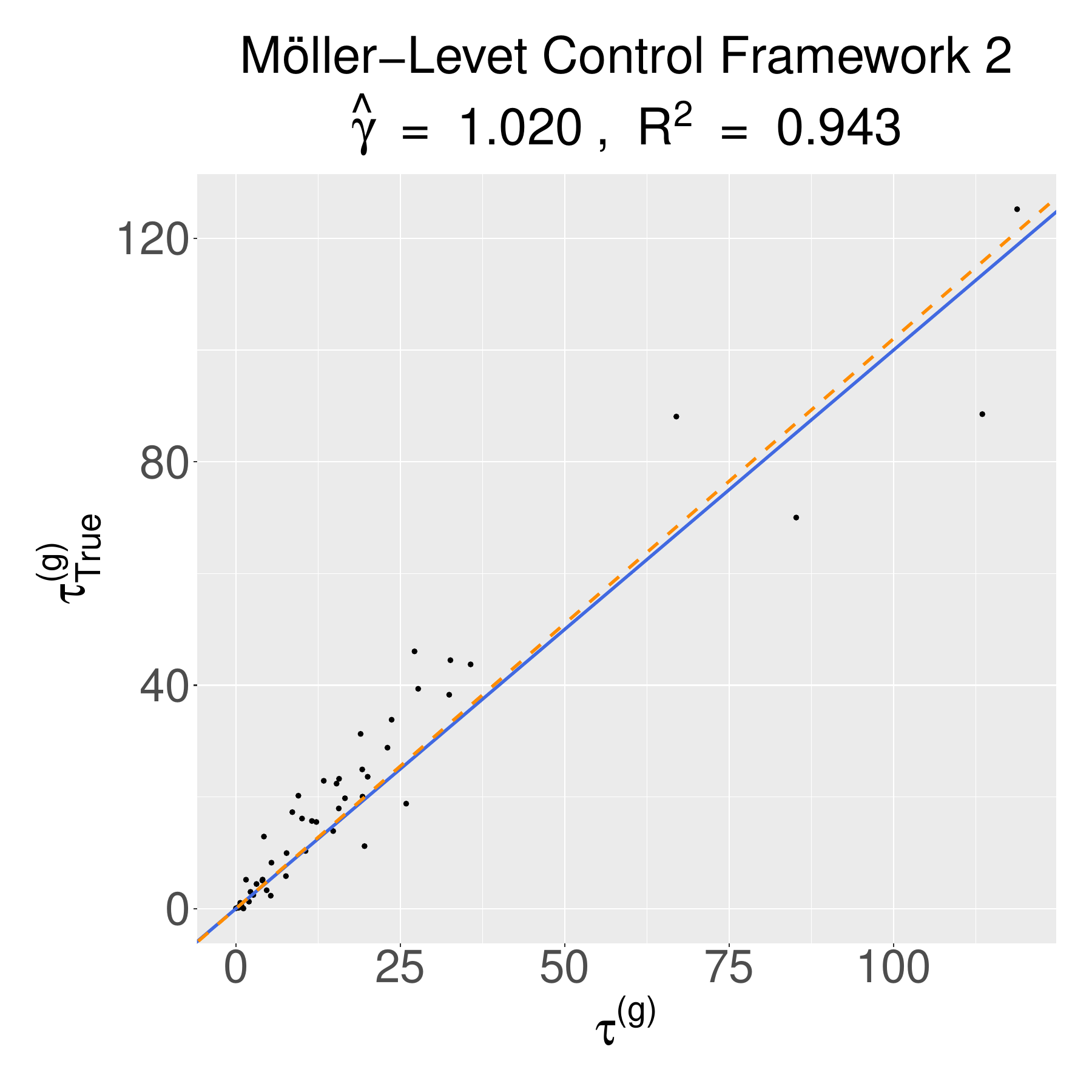}
\includegraphics[scale=0.13]{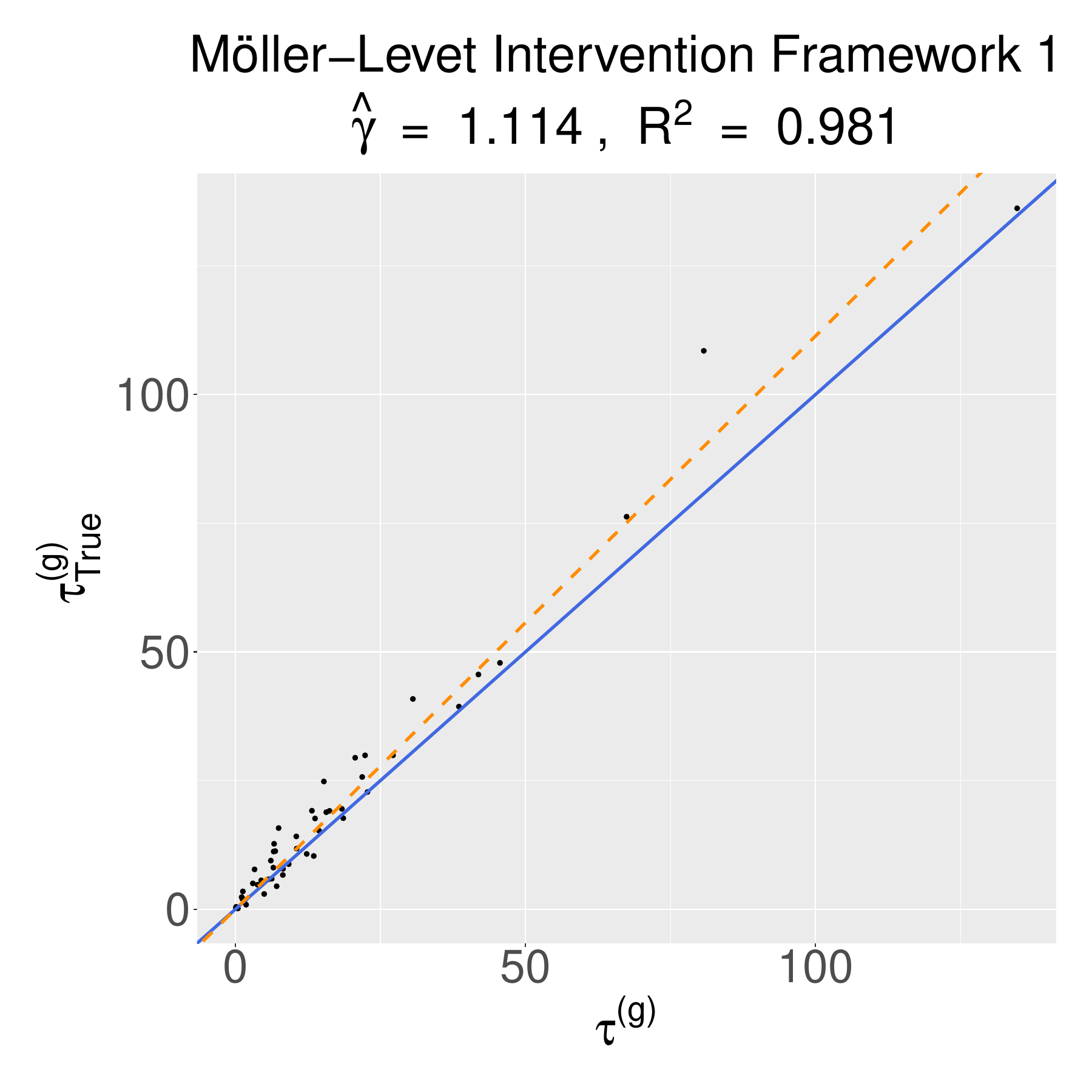}
\includegraphics[scale=0.13]{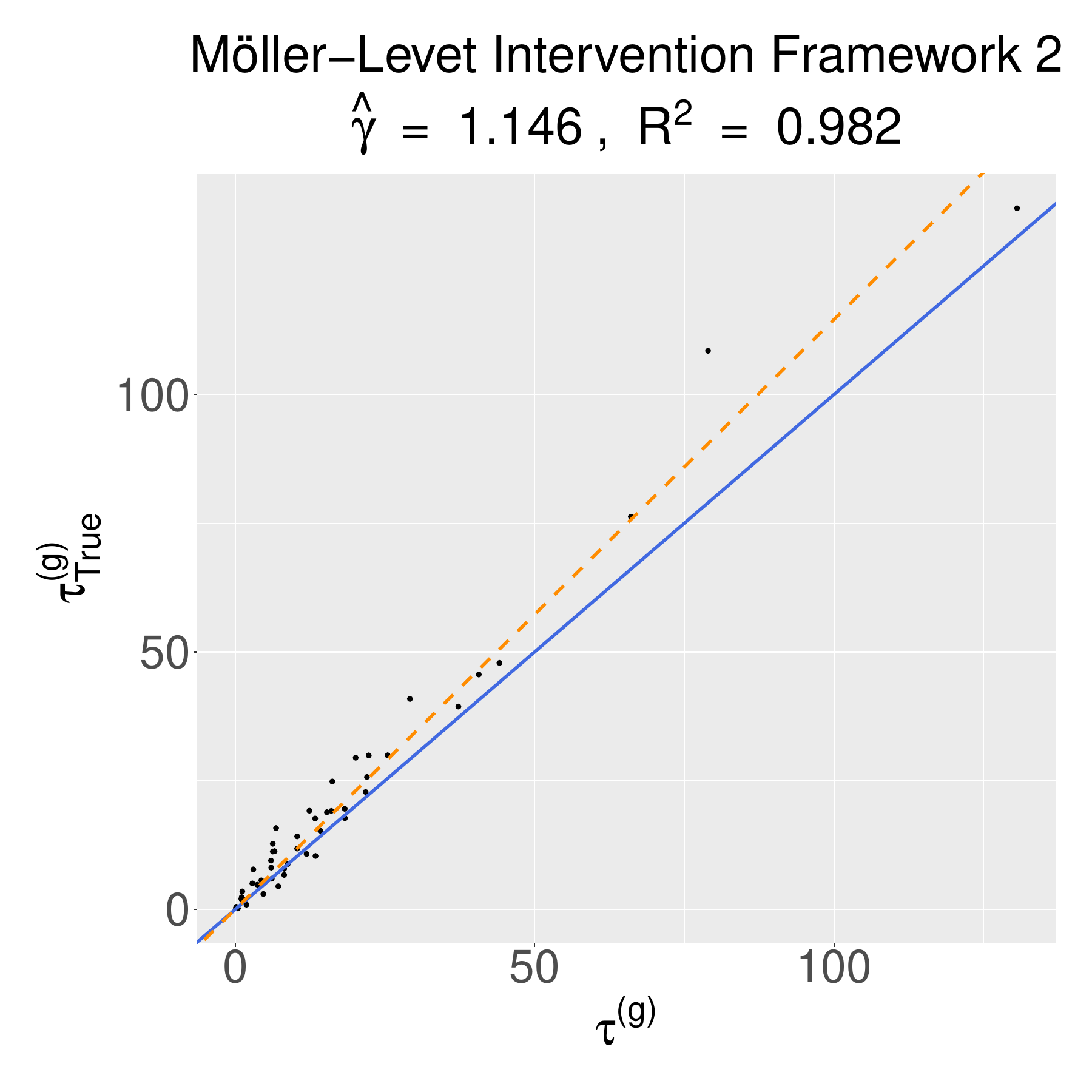} \\
\includegraphics[scale=0.13]{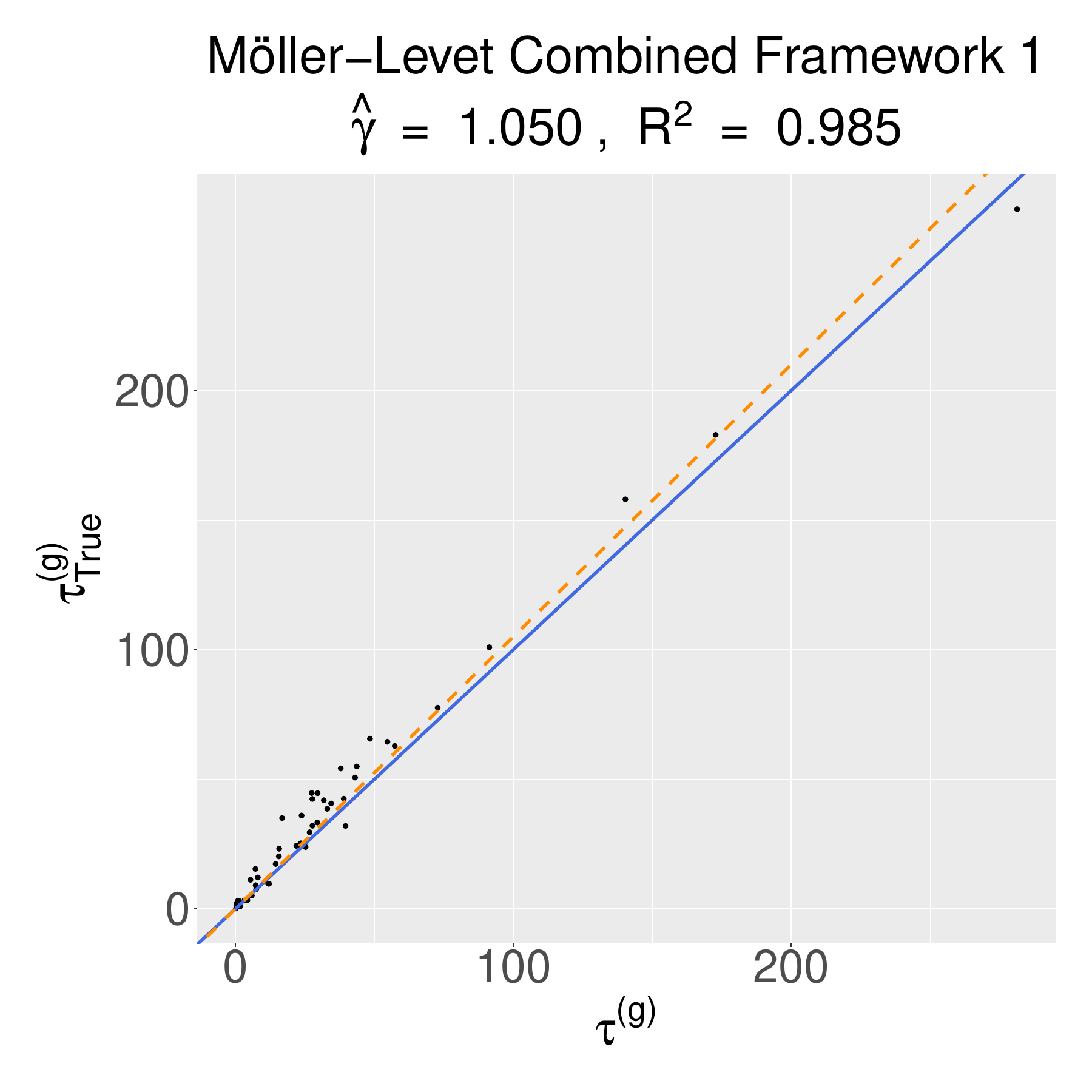}
\includegraphics[scale=0.13]{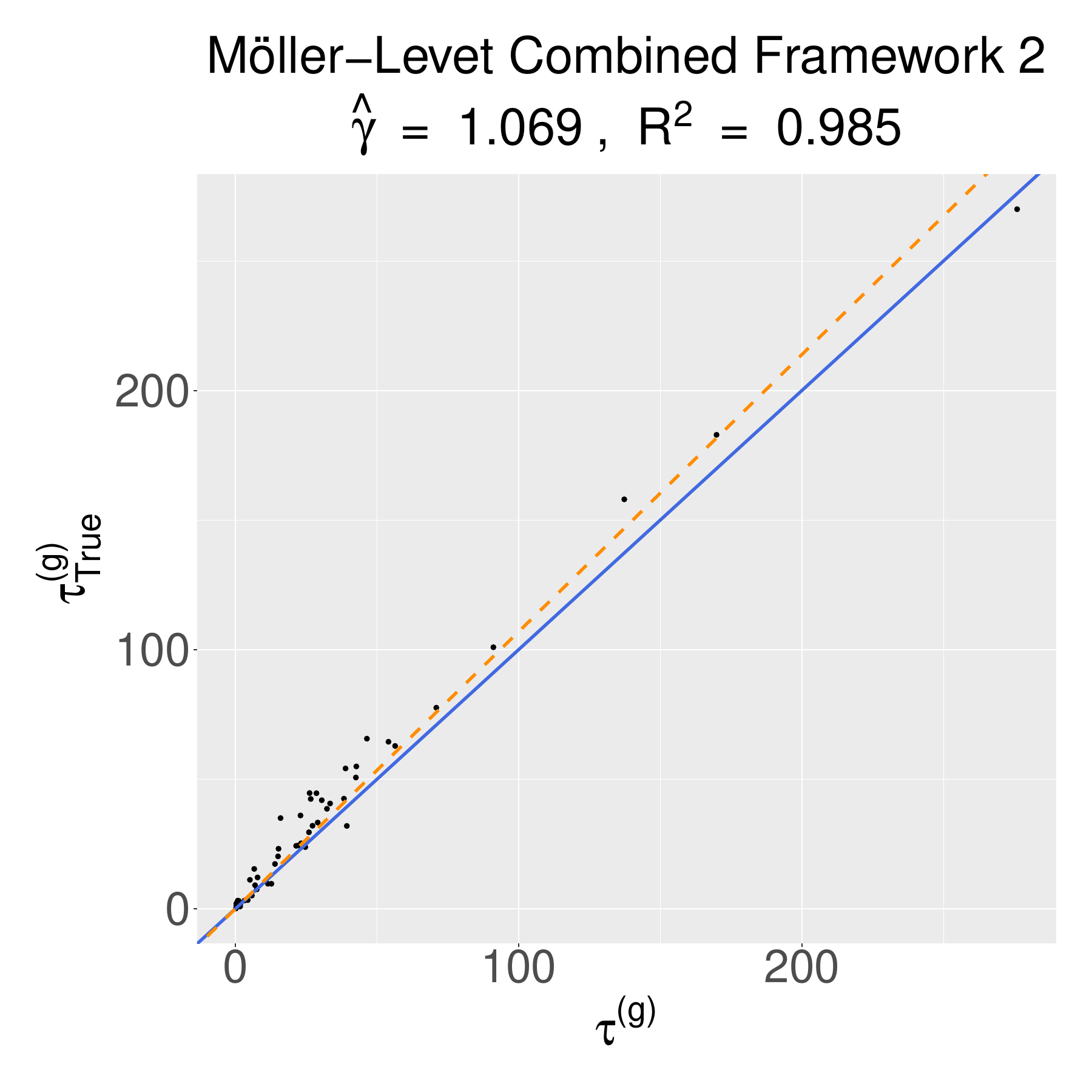}
\caption{Scatter plots of linear model fits using every gold standard circadian gene for each Framework (Framework 1 the proposed method from Section \ref{sec:2.3}, Framework 2 linear mixed effects cosinor regression). The y-axis denotes a Wald test statistic obtained from internal circadian time data ($\tau^{(g)}_{\text{True}}$), and the x-axis a Wald test statistic obtained from Zeitgeber time data ($\tau^{(g)}$). The dashed orange line displays the linear model's fit, and the blue line denotes the fit of a linear model when $\hat{\gamma}=1$. Each data point in the scatter plot represents a Wald test statistic obtained for a specific gene.}
\label{fig:test_g}
\end{figure*}

\clearpage
\newpage

\appendix

\section{Supporting Lemmas} \label{app:A}
\begin{lemma} \label{lem:1}
Suppose $[m_0, c_1, c_2]^T \sim P$ such that the probability density function for $P$, denoted as $\rho(m_0, c_1, c_2)$, is symmetric with a mean of zero. Then $\mathbb{E}\{c_{1}\sin(c_{2})\} = \mathbb{E}\{c_1\cos(c_2)\} = 0$.
\end{lemma}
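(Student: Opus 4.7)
The plan is to prove both equalities by the same one-line change-of-variables argument, leveraging the assumed symmetry of the joint density $\rho$. I would read ``symmetric around zero'' in the strongest useful sense, namely that $\rho$ is invariant under the sign-flip $c_1 \mapsto -c_1$ with $m_0$ and $c_2$ held fixed; equivalently, the conditional density of $c_1$ given $(m_0, c_2)$ is symmetric around zero. This assumption is consistent with the zero-mean assertion of the lemma and is certainly in force in Proposition~\ref{prop:1}, where $\Psi^{(g)}$ is diagonal and the components are independent symmetric Gaussians.

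First I would express the quantity of interest as a triple integral, $\mathbb{E}\{c_1 \sin(c_2)\} = \iiint c_1 \sin(c_2)\, \rho(m_0, c_1, c_2)\, dm_0\, dc_1\, dc_2$, and apply the substitution $c_1 \mapsto -c_1$. The Jacobian is $1$, the domain $\mathbb{R}^3$ is preserved, the density is invariant by hypothesis, and the integrand picks up a factor of $-1$ because $c_1$ changes sign while $\sin(c_2)$ depends only on $c_2$ and is untouched. The integral therefore equals its own negative and vanishes. The identical substitution applied to $\mathbb{E}\{c_1 \cos(c_2)\}$ flips the sign of $c_1$ while leaving $\cos(c_2)$ alone, giving the second identity by the same argument.

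The main obstacle is not the computation itself but pinning down the precise sense of ``symmetric around zero'' that the paper intends. Under the weaker reading $\rho(x) = \rho(-x)$ (joint central symmetry), the substitution $(m_0, c_1, c_2) \mapsto -(m_0, c_1, c_2)$ does flip the sign of $c_1 \cos(c_2)$ and yields $\mathbb{E}\{c_1 \cos(c_2)\} = 0$, but it leaves $c_1 \sin(c_2) = (-c_1)\sin(-c_2)$ unchanged, so the first identity would not follow without further structure (e.g.\ a bivariate density with a non-vanishing $c_1 c_2$ cross term is a counterexample). Consequently, I would state the hypothesis explicitly as invariance of $\rho$ under the coordinate reflection in $c_1$ alone (or, equivalently, in $c_2$ alone, since either suffices and both sign-flip arguments go through symmetrically). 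Because the downstream use in Proposition~\ref{prop:1} operates in a setting with independent symmetric components, this strengthened reading is consistent with and implied by the paper's later assumptions, and the proof reduces to the single change-of-variables step above.
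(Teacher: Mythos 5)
Your proof is correct and takes essentially the same route as the paper's: the paper likewise reduces to the inner integral over $c_1$, splits it at zero, and uses $\rho(-c_1,c_2)=\rho(c_1,c_2)$ to cancel the two halves, which is exactly your sign-flip substitution in the $c_1$ coordinate alone. Your observation that joint central symmetry $\rho(x)=\rho(-x)$ would not by itself yield $\mathbb{E}\{c_1\sin(c_2)\}=0$ is well taken, and the paper's own argument implicitly relies on the same coordinate-wise reading of ``symmetric'' that you make explicit.
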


\begin{proof}
Given that $\rho(m_0, c_1, c_2)$ is symmetric, the corresponding marginal probability density function $\rho(c_1, c_2)$ is also symmetric. For $\mathbb{E}\{c_1\cos(c_2)\}$, it follows that

\begin{align*}
    \mathbb{E}\{c_1\cos(c_2)\} &= \int_{-\infty}^{\infty}\int_{-\infty}^{\infty} \rho(c_1, c_2)c_1\cos(c_2) dc_1 dc_2 \\
    &= \int_{-\infty}^{\infty}\left\{\int_{0}^{\infty} \rho(c_1, c_2)c_1\cos(c_2) dc_1 + \int_{-\infty}^{0} \rho(c_1, c_2)c_1\cos(c_2) dc_1\right\} dc_2 \\
    &= \int_{-\infty}^{\infty}\left\{\int_{0}^{\infty} \rho(c_1, c_2)c_1\cos(c_2) dc_1 - \int_{0}^{\infty} \rho(-c_1, c_2)c_1\cos(c_2) dc_1\right\} dc_2 \\
    &= \int_{-\infty}^{\infty}\left\{\int_{0}^{\infty} \rho(c_1, c_2)c_1\cos(c_2) dc_1 - \int_{0}^{\infty} \rho(c_1, c_2)c_1\cos(c_2) dc_1\right\} dc_2 \\
    &= 0.
\end{align*} 
The derivation is equivalent for $\mathbb{E}\{c_{1}\sin(c_{2})\}$.
\end{proof}

\begin{lemma} \label{lem:2}
Suppose each $n_i = n$, $\Psi^{(g)}_i= \diag(\psi^{(g)}_1, \psi^{(g)}_2, \psi^{(g)}_3)$ is a $3\times 3$ diagonal matrix, and $X_{i,j} = 24(j-1)/n$. Further, define 
\begin{align*}
    \boldsymbol{W_i} &= \begin{bmatrix}1 & \sin\left(\frac{\pi X_{i,1}}{12}\right) & \cos\left(\frac{\pi X_{i,1}}{12}\right) \\
    \vdots & \vdots & \vdots \\
    1 & \sin\left(\frac{\pi X_{i,n}}{12}\right) & \cos\left(\frac{\pi X_{i,n}}{12}\right)
    \end{bmatrix}.
\end{align*} 
Then each element of the matrix $\boldsymbol{V^{(g)}_i}$ can be expressed as
\begin{align*}
    \left\{(\boldsymbol{V^{(g)}_i})^{-1}\right\}_{j,k} &= \left[\left\{(\sigma^{(g)})^2I_{n} + \boldsymbol{W_i}\Psi^{(g)} \boldsymbol{W_i}^T\right\}^{-1}\right]_{j,k} \\
     &= \frac{\mathbbm{1}\{j=k\}}{\sigma^2} \\
     &\quad \quad - \frac{\psi_1}{\sigma^2(N\psi_1+\sigma^2)} \\
     & \quad \quad - \frac{2\psi_2}{\sigma^2(N\psi_2+2\sigma^2)}\sin\left(\frac{\pi X_{i,j}}{12}\right)\sin\left(\frac{\pi X_{i,k}}{12}\right) \\
     & \quad \quad - \frac{2\psi_3}{\sigma^2(N\psi_3+2\sigma^2)}\cos\left(\frac{\pi X_{i,j}}{12}\right)\cos\left(\frac{\pi X_{i,k}}{12}\right),
\end{align*} 
where 
\begin{equation*}
    \mathbbm{1}\{j=k\} =  \begin{cases} 
      1 & j=k, \\
      0 & j\neq k.
   \end{cases}
\end{equation*}
\end{lemma}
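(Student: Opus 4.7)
The plan is to apply the Woodbury (Sherman--Morrison) matrix identity with $A = (\sigma^{(g)})^2 I_n$, $U = V = \boldsymbol{W_i}$, and inner matrix $B = \Psi^{(g)}$. This gives
\[
(\boldsymbol{V_i^{(g)}})^{-1} \;=\; \tfrac{1}{(\sigma^{(g)})^2}I_n \;-\; \tfrac{1}{(\sigma^{(g)})^4}\,\boldsymbol{W_i}\!\left(\,(\Psi^{(g)})^{-1} + \tfrac{1}{(\sigma^{(g)})^2}\boldsymbol{W_i}^T\boldsymbol{W_i}\,\right)^{-1}\!\boldsymbol{W_i}^T,
\]
so the entire task reduces to computing the $3\times 3$ inner inverse and then reading off the $(j,k)$ entry of the resulting rank-three correction.

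The second step is to exploit the equispaced design $X_{i,j}=24(j-1)/n$, which makes the columns of $\boldsymbol{W_i}$ mutually orthogonal. Using the standard discrete trigonometric sum identities $\sum_{j=0}^{n-1}\sin(2\pi j/n) = \sum_{j=0}^{n-1}\cos(2\pi j/n) = \sum_{j=0}^{n-1}\sin(2\pi j/n)\cos(2\pi j/n) = 0$ together with $\sum_{j=0}^{n-1}\sin^2(2\pi j/n) = \sum_{j=0}^{n-1}\cos^2(2\pi j/n) = n/2$, I would show that
\[
\boldsymbol{W_i}^T\boldsymbol{W_i} \;=\; \diag(n,\;n/2,\;n/2).
\]
Since $\Psi^{(g)}$ is assumed diagonal, the matrix $(\Psi^{(g)})^{-1} + \boldsymbol{W_i}^T\boldsymbol{W_i}/(\sigma^{(g)})^2$ is itself diagonal, so its inverse is trivially diagonal with entries $\sigma^2\psi_1/(\sigma^2+n\psi_1)$, $2\sigma^2\psi_2/(2\sigma^2+n\psi_2)$, and $2\sigma^2\psi_3/(2\sigma^2+n\psi_3)$ (suppressing the $(g)$ superscript).

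The third step is to evaluate the $(j,k)$ entry of the correction. Because the inner inverse is diagonal, the quadratic form $\boldsymbol{W_i}(\cdot)^{-1}\boldsymbol{W_i}^T$ reduces to a sum of three rank-one outer products, one for each column of $\boldsymbol{W_i}$. Combining with the prefactor $1/(\sigma^{(g)})^4$ and subtracting from $\mathbbm{1}\{j=k\}/(\sigma^{(g)})^2$ yields exactly the claimed expression (with $N=n$).

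The main obstacle, and really the only non-bookkeeping part of the argument, is verifying the orthogonality identities for the columns of $\boldsymbol{W_i}$; everything after that is a mechanical consequence of $\Psi^{(g)}$ being diagonal and the Woodbury identity. I would present the trigonometric sums as a short preliminary computation (or cite a standard reference for discrete Fourier orthogonality), and then the Woodbury step plus a one-line entry-wise expansion completes the proof.
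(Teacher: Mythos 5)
Your proposal is correct and follows essentially the same route as the paper: the paper's \ref{app:A} derivation also applies the Woodbury-type identity $(R+Z\Psi Z^T)^{-1}=R^{-1}-R^{-1}Z(Z^TR^{-1}Z+\Psi^{-1})^{-1}Z^TR^{-1}$ with $R=\sigma^2 I_n$, uses discrete trigonometric orthogonality to get $\boldsymbol{W_i}^T\boldsymbol{W_i}=\diag(n,n/2,n/2)$, inverts the resulting diagonal inner matrix, and expands the rank-three correction entrywise. The only cosmetic difference is that the paper cites the orthogonality identities rather than proving them, and the $N$ appearing in the lemma statement is indeed just $n$.
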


\begin{proof}
To simplify presentation, the superscript $(g)$ is omitted. Recall the identity presented in \citet[page 78]{Davidian1995},
\begin{align*}
    (R + Z\Psi Z^T)^{-1} &= R^{-1} - R^{-1}Z(Z^TR^{-1}Z + \Psi^{-1})^{-1}Z^TR^{-1},
\end{align*} 
which yields
\begin{align*}
    \boldsymbol{V_i}^{-1} &= (\sigma^2I_{n} + \boldsymbol{W_i}\Psi \boldsymbol{W_i}^T)^{-1} \\
    &= \frac{1}{\sigma^2}I_{n} - \frac{1}{\sigma^4}\boldsymbol{W_i}\left(\frac{1}{\sigma^2}\boldsymbol{W_i}^T\boldsymbol{W_i}+\Psi^{-1}\right)^{-1}\boldsymbol{W_i}^T.
\end{align*} 
To compute this matrix, first note
\begin{align}
    \frac{1}{\sigma^2}\boldsymbol{W_i}^T\boldsymbol{W_i} &= \frac{1}{\sigma^2}\begin{bmatrix}
    1 & \cdots & 1 \\ 
    \sin\left(\frac{\pi X_{i,1}}{12}\right) & \cdots & \sin\left(\frac{\pi X_{i,n}}{12}\right) \\ 
    \cos\left(\frac{\pi X_{i,1}}{12}\right) & \cdots & \cos\left(\frac{\pi X_{i,n}}{12}\right)   \end{bmatrix} 
    \begin{bmatrix}1 & \sin\left(\frac{\pi X_{i,1}}{12}\right) & \cos\left(\frac{\pi X_{i,1}}{12}\right) \\
    \vdots & \vdots & \vdots \\
    1 & \sin\left(\frac{\pi X_{i,n}}{12}\right) & \cos\left(\frac{\pi X_{i,n}}{12}\right)
    \end{bmatrix} \nonumber \\
    &= \frac{1}{\sigma^2}
    \begin{bmatrix} n & \sum_{j=1}^n\sin(X_{i,j}) & \sum_{j=1}^n\cos(X_{i,j}) \\
    \sum_{j=1}^n\sin(X_{i,j}) & \sum_{j=1}^n\sin^2(X_{i,j}) & \sum_{j=1}^n\sin(X_{i,j})\cos(X_{i,j}) \\
    \sum_{j=1}^n\cos(X_{i,j}) & \sum_{j=1}^n\sin(X_{i,j})\cos(X_{i,j}) & \sum_{j=1}^n\cos^2(X_{i,j})
    \end{bmatrix} \nonumber \\
    &= \frac{n}{\sigma^2}\begin{bmatrix}1 & 0 & 0 \\
    0 & \frac{1}{2} & 0 \\
    0 & 0 & \frac{1}{2}
    \end{bmatrix}, \label{eq:lem_2_1}
\end{align} 
where (\ref{eq:lem_2_1}) is due to orthogonality of distinct terms in a trigonometric basis \citep[Lemma 1.7]{Tsybakov2009}. It follows that
\begin{align}
    \left(\frac{1}{\sigma^2}\boldsymbol{W_i}^T\boldsymbol{W_i}+\Psi^{-1}\right)^{-1} &= \left( \begin{bmatrix}\frac{n}{\sigma^2} & 0 & 0 \\
    0 & \frac{n}{2\sigma^2} & 0 \\
    0 & 0 & \frac{n}{2\sigma^2}
    \end{bmatrix} + \begin{bmatrix} \frac{1}{\psi_1} & 0 & 0 \\
    0 & \frac{1}{\psi_2} & 0 \\
    0 & 0 & \frac{1}{\psi_3}
    \end{bmatrix}\right)^{-1} \nonumber \\
    &= \begin{bmatrix}\frac{\psi_1\sigma^2}{n\psi_1+\sigma^2} & 0 & 0 \\
    0 & \frac{2\psi_2\sigma^2}{n\psi_2+2\sigma^2} & 0 \\
    0 & 0 &  \frac{2\psi_3\sigma^2}{n\psi_3+2\sigma^2} 
    \end{bmatrix}. \label{eq:comp_1}
\end{align} 
The identity in (\ref{eq:comp_1}) can be utilized to obtain
\begin{align*}
\boldsymbol{W_i}\left(\frac{1}{\sigma^2}\boldsymbol{W_i}^T\boldsymbol{W_i}+\Psi^{-1}\right)^{-1}\boldsymbol{W_i}^T 
&= \begin{bmatrix}1 & \sin\left(\frac{\pi X_{i,1}}{12}\right) & \cos\left(\frac{\pi X_{i,1}}{12}\right) \\
    \vdots & \vdots & \vdots \\
    1 & \sin\left(\frac{\pi X_{i,n}}{12}\right) & \cos\left(\frac{\pi X_{i,n}}{12}\right) \end{bmatrix}\begin{bmatrix}\frac{\psi_1\sigma^2}{n\psi_1+\sigma^2} & 0 & 0 \\
    0 & \frac{2\psi_2\sigma^2}{n\psi_2+2\sigma^2} & 0 \\
    0 & 0 &  \frac{2\psi_3\sigma^2}{n\psi_3+2\sigma^2}
    \end{bmatrix} \\
    & \quad \quad \times \begin{bmatrix}
    1 & \cdots & 1 \\ 
    \sin\left(\frac{\pi X_{i,1}}{12}\right) & \cdots & \sin\left(\frac{\pi X_{i,n}}{12}\right) \\ 
    \cos\left(\frac{\pi X_{i,1}}{12}\right) & \cdots & \cos\left(\frac{\pi X_{i,n}}{12}\right)   \end{bmatrix} \\
    &= \begin{bmatrix}\frac{\psi_1\sigma^2}{n\psi_1+\sigma^2} & \frac{2\psi_2\sigma^2}{n\psi_2+2\sigma^2}\sin\left(\frac{\pi X_{i,1}}{12}\right) & \frac{2\psi_3\sigma^2}{n\psi_3+2\sigma^2}\cos\left(\frac{\pi X_{i,1}}{12}\right) \\
    \vdots & \vdots & \vdots \\
    \frac{\psi_1\sigma^2}{n\psi_1+\sigma^2} & \frac{2\psi_2\sigma^2}{n\psi_2+2\sigma^2}\sin\left(\frac{\pi X_{i,n}}{12}\right) & \frac{2\psi_3\sigma^2}{n\psi_3+2\sigma^2}\cos\left(\frac{\pi X_{i,n}}{12}\right) \end{bmatrix} \\
    & \quad \quad \times \begin{bmatrix}
    1 & \cdots & 1 \\ 
    \sin\left(\frac{\pi X_{i,1}}{12}\right) & \cdots & \sin\left(\frac{\pi X_{i,n}}{12}\right) \\ 
    \cos\left(\frac{\pi X_{i,1}}{12}\right) & \cdots & \cos\left(\frac{\pi X_{i,n}}{12}\right)   \end{bmatrix},
\end{align*} 
which implies
\begin{align*}
\left\{\boldsymbol{W_i}\left(\frac{1}{\sigma^2}\boldsymbol{W_i}^T\boldsymbol{W_i}+\Psi^{-1}\right)^{-1}\boldsymbol{W_i}^T \right\}_{j,k} &= \frac{\psi_1\sigma^2}{n\psi_1+\sigma^2} \\
& \quad \quad + \frac{2\psi_2\sigma^2}{n\psi_2+2\sigma^2}\sin\left(\frac{\pi X_{i,j}}{12}\right)\sin\left(\frac{\pi X_{i,k}}{12}\right) \\
& \quad \quad + \frac{2\psi_3\sigma^2}{n\psi_3+2\sigma^2}\cos\left(\frac{\pi X_{i,j}}{12}\right)\cos\left(\frac{\pi X_{i,k}}{12}\right).
\end{align*} 
To conclude,
\begin{align*}
 \left(\boldsymbol{V_i}^{-1}\right)_{j,k} &= \left\{(\sigma^2I_{n} + \boldsymbol{W_i}\Psi \boldsymbol{W_i}^T)^{-1}\right\}_{j,k} \\
    &= \left\{\frac{1}{\sigma^2}I_{n} - \frac{1}{\sigma^4}\boldsymbol{W_i}\left(\frac{1}{\sigma^2}\boldsymbol{W_i}^T\boldsymbol{W_i}+\Psi^{-1}\right)^{-1}\boldsymbol{W_i}^T\right\}_{j,k} \\
    &= \frac{\mathbbm{1}\{j=k\}}{\sigma^2} \\
    &\quad \quad - \frac{\psi_1}{\sigma^2(n\psi_1+\sigma^2)} \\
    & \quad \quad - \frac{2\psi_2}{\sigma^2(n\psi_2+2\sigma^2)}\sin\left(\frac{\pi X_{i,j}}{12}\right)\sin\left(\frac{\pi X_{i,k}}{12}\right) \\
    &\quad \quad - \frac{2\psi_3}{\sigma^2(n\psi_3+2\sigma^2)}\cos\left(\frac{\pi X_{i,j}}{12}\right)\cos\left(\frac{\pi X_{i,k}}{12}\right).
\end{align*} 
\end{proof}
\begin{lemma} \label{lem:3}
Suppose $[m^{(g)}_0, c^{(g)}_1, c^{(g)}_2]^T \sim P^{(g)}$ such that the probability density function of $P^{(g)}$, denoted as $\rho(m^{(g)}_0, c^{(g)}_1, c^{(g)}_2)$, is symmetric with a mean of zero. Then
\begin{align*}
    \mathbb{E}(Y^{(g)} \mid X=X_{i,j}) &= \mu^{(g)}_0 + \phi_{c^{(g)}_2}(1)\theta^{(g)}_1\left\{- \sin(\theta^{(g)}_2)\sin\left(\frac{\pi X_{i,j}}{12} \right) + \cos(\theta^{(g)}_2)\cos\left(\frac{\pi X_{i,j}}{12} \right)\right\}.
\end{align*} 
\end{lemma}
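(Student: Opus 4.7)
The plan is to start from the nonlinear mixed effects representation in (\ref{eq:cos_m}), expand the cosine using the angle-addition identity to isolate dependence on the random effect $c^{(g)}_{i,2}$, and then take the conditional expectation, leveraging the symmetry of $P^{(g)}$ together with Lemma~\ref{lem:1}.

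First, I would write
\begin{align*}
\cos\!\left(\tfrac{\pi X_{i,j}}{12} + \theta^{(g)}_2 + c^{(g)}_{i,2}\right)
&= \cos\!\left(\tfrac{\pi X_{i,j}}{12} + \theta^{(g)}_2\right)\cos(c^{(g)}_{i,2})
 - \sin\!\left(\tfrac{\pi X_{i,j}}{12} + \theta^{(g)}_2\right)\sin(c^{(g)}_{i,2}),
\end{align*}
so that the whole random component of $Y^{(g)}_{i,j}-\mu_0^{(g)}$ decomposes into terms of the form $m^{(g)}_{i,0}$, $\epsilon^{(g)}_{i,j}$, $\theta^{(g)}_1\cos(c^{(g)}_{i,2})$, $\theta^{(g)}_1\sin(c^{(g)}_{i,2})$, $c^{(g)}_{i,1}\cos(c^{(g)}_{i,2})$, and $c^{(g)}_{i,1}\sin(c^{(g)}_{i,2})$, each multiplied by a deterministic function of $X_{i,j}$ and $\theta^{(g)}_2$.

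Next I would take expectations term by term. The random-intercept mean $\mathbb{E}(m^{(g)}_{i,0})$ and the noise mean $\mathbb{E}(\epsilon^{(g)}_{i,j})$ are zero by assumption. Symmetry of the joint density $\rho$ implies symmetry of the marginal of $c^{(g)}_{i,2}$, so $\mathbb{E}\{\sin(c^{(g)}_{i,2})\}=0$, which in turn gives the identification $\phi_{c^{(g)}_2}(1)=\mathbb{E}\{\cos(c^{(g)}_{i,2})\}$ because the characteristic function at $t=1$ satisfies $\phi_{c^{(g)}_2}(1)=\mathbb{E}\{\cos(c^{(g)}_{i,2})\}+i\,\mathbb{E}\{\sin(c^{(g)}_{i,2})\}$. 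Finally, Lemma~\ref{lem:1} directly eliminates both cross terms $\mathbb{E}\{c^{(g)}_{i,1}\cos(c^{(g)}_{i,2})\}$ and $\mathbb{E}\{c^{(g)}_{i,1}\sin(c^{(g)}_{i,2})\}$. Collecting what survives yields
\begin{align*}
\mathbb{E}(Y^{(g)}\mid X=X_{i,j}) = \mu^{(g)}_0 + \theta^{(g)}_1\,\phi_{c^{(g)}_2}(1)\cos\!\left(\tfrac{\pi X_{i,j}}{12}+\theta^{(g)}_2\right),
\end{align*}
and a final application of $\cos(a+b)=\cos(a)\cos(b)-\sin(a)\sin(b)$ with $a=\pi X_{i,j}/12$ and $b=\theta^{(g)}_2$ produces the displayed form.

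I do not anticipate any real obstacle: the argument is essentially bookkeeping on trigonometric expansions combined with two zero-expectation facts. The only place to be careful is to verify that the surviving coefficient of the oscillatory term is exactly $\phi_{c^{(g)}_2}(1)$ rather than $|\phi_{c^{(g)}_2}(1)|$ or its complex conjugate; this is settled by the observation that symmetry makes $\phi_{c^{(g)}_2}(1)$ real-valued and equal to $\mathbb{E}\{\cos(c^{(g)}_{i,2})\}$, which is precisely the coefficient that emerges after taking expectations.
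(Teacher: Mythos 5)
Your proof is correct and follows essentially the same route as the paper's: expand the cosine via the angle-addition identity, kill the $\mathbb{E}\{c^{(g)}_{i,1}\cos(c^{(g)}_{i,2})\}$ and $\mathbb{E}\{c^{(g)}_{i,1}\sin(c^{(g)}_{i,2})\}$ cross terms with Lemma~\ref{lem:1}, use symmetry to get $\mathbb{E}\{\sin(c^{(g)}_{i,2})\}=0$ and identify $\mathbb{E}\{\cos(c^{(g)}_{i,2})\}$ with $\phi_{c^{(g)}_2}(1)$, then re-expand to reach the displayed form. The only cosmetic difference is ordering (you expand fully before taking expectations, whereas the paper takes expectations in stages), and your closing remark about $\phi_{c^{(g)}_2}(1)$ being real-valued under symmetry matches the paper's justification exactly.
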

\begin{proof}
The superscript $(g)$ is omitted. The result follows the derivation of Theorem 1 in \cite{Gorczycaa2024}, with
\begin{align}
\mathbb{E}(Y_{i,j} \mid X=X_{i,j}) &= \mathbb{E}\left\{\mu_0 + m_{i,0} +  (\theta_1+c_{i,1})\cos\left(\frac{\pi X}{12} + \theta_2 + c_{i,2}\right)+\epsilon\mid X=X_{i,j} \right\} \nonumber \\
&= \mathbb{E}\left\{\mu_0 +  (\theta_1+c_{i,1})\cos\left(\frac{\pi X}{12} + \theta_2 + c_2\right) \mid X=X_{i,j} \right\} \nonumber \\
&= \mu_0 + \theta_1\mathbb{E}\left\{\cos\left(\frac{\pi X}{12} + \theta_2 + c_2\right) \mid X=X_{i,j} \right\} \nonumber \\
& \quad \quad + \mathbb{E}\left\{c_1\cos\left(\frac{\pi X}{12} + \theta_2 + c_2\right) \mid X=X_{i,j} \right\} \nonumber  \\
&= \mu_0 + \theta_1\mathbb{E}\left\{\cos\left(\frac{\pi X}{12} + \theta_2 + c_2\right) \mid X=X_{i,j}\right\}. \label{eq:lem_1_app}
\end{align} 
Here, (\ref{eq:lem_1_app}) is due to Lemma \ref{lem:1}. Now, note that
\begin{align}
    \mathbb{E}\left\{\cos\left(\frac{\pi X}{12} + \theta_2 + c_2\right) \mid X=X_{i,j}\right\} &= \mathbb{E}\left\{\cos\left(c_2\right)\cos\left(\frac{\pi X}{12} + \theta_2\right) \mid X=X_{i,j}\right\} \nonumber \\
    & \quad \quad - \mathbb{E}\left\{\sin\left(c_2\right)\sin\left(\frac{\pi X}{12} + \theta_2\right) \mid X=X_{i,j}\right\} \label{eq:app_3_1} \\
    &= \mathbb{E}\left\{\cos\left(c_2\right)\right\}\mathbb{E}\left\{\cos\left(\frac{\pi X}{12} + \theta_2\right) \mid X=X_{i,j}\right\} \nonumber \\
    & \quad \quad - \mathbb{E}\left\{\sin\left(c_2\right)\right\}\mathbb{E}\left\{\sin\left(\frac{\pi X}{12} + \theta_2\right) \mid X=X_{i,j}\right\} \label{eq:ind_c} \\
    &= \mathbb{E}\left\{\cos\left(c_2\right)\right\}\mathbb{E}\left\{\cos\left(\frac{\pi X}{12} + \theta_2\right) \mid X=X_{i,j}\right\} \label{eq:eval_sin} \\
    &= \phi_{c^{ }_2}(1)\mathbb{E}\left\{\cos\left(\frac{\pi X}{12} + \theta_2\right) \mid X=X_{i,j}\right\}, \label{eq:Four_ser}
\end{align} 
where (\ref{eq:app_3_1}) is by application of the identities in (\ref{eq:alt_to_orig}); (\ref{eq:ind_c}) is due to the fact that $X$ and $c_2$ are independent under the assumptions of a linear mixed effects model; (\ref{eq:eval_sin}) is due to the assumption that the marginal density of $c_2$ is symmetric around zero, which implies $\mathbb{E}\left\{\sin\left(c_2\right)\right\}=0$; and (\ref{eq:Four_ser}) is by definition of the characteristic function for a symmetric probability distribution, or
\begin{align*}
    \phi_{c_2}(t) = \int_{-\infty}^{\infty} \rho(c_2)\exp(itc_2)dc_2 = \int_{-\infty}^{\infty} \rho(c_2)\{\cos(tc_2)+i\sin(tc_2)\}dc_2 = \mathbb{E}\{\cos(tc_2)\}
\end{align*} 
given an argument $t$. We can conclude
\begin{align}
\mathbb{E}(Y_{i,j} \mid X=X_{i,j}) &= \mu_0 + \phi_{c^{ }_2}(1)\theta_1\cos\left(\frac{\pi X_{i,j}}{12} + \theta_2\right) \nonumber \\
&= \mu_0 + \phi_{c^{ }_2}(1)\theta_1\left\{- \sin(\theta_2)\sin\left(\frac{\pi X_{i,j}}{12} \right) + \cos(\theta_2)\cos\left(\frac{\pi X_{i,j}}{12} \right)\right\}, \label{eq:id_app}
\end{align} 
where (\ref{eq:id_app}) is by application of the identities in (\ref{eq:alt_to_orig}).
\end{proof}

\section{Derivation of Proposition \ref{prop:1}} \label{app:B}
\begin{proof}
The superscript $(g)$ is again omitted to simplify presentation. The expected parameter estimates are first obtained. Given that $\boldsymbol{V_i}$ is known, estimation of $\beta$ is analogous to solving the normal equation for generalized least squares estimation, or identifying the quantity $\hat{\beta}$ that satisfies the equality
\begin{align*}
    \left(\sum_{i=1}^M\boldsymbol{W_i}^T\boldsymbol{V_i}^{-1}\boldsymbol{W_i}\right)\hat{\beta} - \left(\sum_{i=1}^M\boldsymbol{W_i}^T\boldsymbol{V_i}^{-1}\boldsymbol{Y_i}\right) &= 0,
\end{align*} 
which under expectation is expressed as
\begin{align*}
    \left(\sum_{i=1}^M\boldsymbol{W_i}^T\boldsymbol{V_i}^{-1}\boldsymbol{W_i}\right)\mathbb{E}(\hat{\beta}) - \left\{\sum_{i=1}^M\boldsymbol{W_i}^T\boldsymbol{V_i}^{-1}\mathbb{E}(\boldsymbol{Y_i} \mid \boldsymbol{X_i})\right\} &= 0
\end{align*} 
given $\boldsymbol{V_i}$, $\boldsymbol{W_i}$, and $\boldsymbol{X_i}$ are non-random quantities. First, note that application of Lemma \ref{lem:2} yields
\begin{align*}
    \left(\boldsymbol{W_i}^T\boldsymbol{V_i}^{-1}\right)_{1,k} &= \frac{1}{\sigma^2} - \sum_{k=1}^n\left\{\frac{\psi_1}{\sigma^2(n\psi_1+\sigma^2)} \right. \\
    & \left. \quad \quad - \frac{2\psi_2}{\sigma^2(n\psi_2+2\sigma^2)}\sin\left(\frac{\pi X_{i,j}}{12}\right)\sin\left(\frac{\pi X_{i,k}}{12}\right) \right. \\
    &\left. \quad \quad - \frac{2\psi_3}{\sigma^2(n\psi_3+2\sigma^2)}\cos\left(\frac{\pi X_{i,j}}{12}\right)\cos\left(\frac{\pi X_{i,k}}{12}\right)\right\} \\
    &= \frac{1}{\sigma^2} - \frac{n\psi_1}{\sigma^2(n\psi_1+\sigma^2)}
\end{align*} 
for elements in the first row of $\boldsymbol{W_i}^T\boldsymbol{V_i}^{-1}$,
\begin{align*}
    \left(\boldsymbol{W_i}^T\boldsymbol{V_i}^{-1}\right)_{2,k} &= \left[\frac{1}{\sigma^2} - \sum_{k=1}^n \left\{\frac{\psi_1}{\sigma^2(n\psi_1+\sigma^2)} \right.  \right. \\
    & \left. \left. \quad \quad + \frac{2\psi_2}{\sigma^2(n\psi_2+2\sigma^2)}\sin\left(\frac{\pi X_{i,j}}{12}\right)\sin\left(\frac{\pi X_{i,k}}{12}\right) \right. \right. \\
    & \left. \left. \quad \quad + \frac{2\psi_3}{\sigma^2(n\psi_3+2\sigma^2)}\cos\left(\frac{\pi X_{i,j}}{12}\right)\cos\left(\frac{\pi X_{i,k}}{12}\right)\right\}\right]\sin\left( \frac{\pi X_{i,k}}{12}\right) \\
    &= \left\{\frac{1}{\sigma^2} - \frac{n\psi_2}{\sigma^2(n\psi_2+2\sigma^2)}\right\}\sin\left(\frac{\pi X_{i,j}}{12}\right)
\end{align*} 
for elements in the second row of $\boldsymbol{W_i}^T\boldsymbol{V_i}^{-1}$, and
\begin{align*}
    \left(\boldsymbol{W_i}^T\boldsymbol{V_i}^{-1}\right)_{3,k} &= \left[\frac{1}{\sigma^2} -\sum_{k=1}^n\left\{ \frac{\psi_1}{\sigma^2(n\psi_1+\sigma^2)} \right. \right. \\
    & \left. \left. \quad \quad + \frac{2\psi_2}{\sigma^2(n\psi_2+2\sigma^2)}\sin\left(\frac{\pi X_{i,j}}{12}\right)\sin\left(\frac{\pi X_{i,k}}{12}\right) \right. \right. \\
    & \left. \left. \quad \quad + \frac{2\psi_3}{\sigma^2(n\psi_3+2\sigma^2)}\cos\left(\frac{\pi X_{i,j}}{12}\right)\cos\left(\frac{\pi X_{i,k}}{12}\right)\right\}\right]\cos\left( \frac{\pi X_{i,k}}{12}\right) \\
    &= \left\{\frac{1}{\sigma^2} - \frac{n\psi_3}{\sigma^2(n\psi_3+2\sigma^2)}\right\}\cos\left(\frac{\pi X_{i,j}}{12}\right)
\end{align*} 
for elements in the third row of $\boldsymbol{W_i}^T\boldsymbol{V_i}^{-1}$. From these three expressions, it follows that
\begin{align}
\boldsymbol{W_i}^T\boldsymbol{V_i}^{-1}\boldsymbol{W_i} &= \begin{bmatrix} \frac{1}{\sigma^2} - \frac{n\psi_1}{\sigma^2(n\psi_1+\sigma^2)} & \cdots & \frac{1}{\sigma^2} - \frac{n\psi_1}{\sigma^2(n\psi_1+\sigma^2)} \\
 \left\{\frac{1}{\sigma^2} - \frac{\psi_2}{\sigma^2(n\psi_2+2\sigma^2)}\right\}\sin\left(\frac{\pi X_{i,1}}{12}\right) & \cdots & \left\{\frac{1}{\sigma^2} - \frac{\psi_2}{\sigma^2(n\psi_2+2\sigma^2)}\right\}\sin\left(\frac{\pi X_{i,n}}{12}\right) \\
 \left\{\frac{1}{\sigma^2} - \frac{\psi_3}{\sigma^2(n\psi_3+2\sigma^2)}\right\}\cos\left(\frac{\pi X_{i,1}}{12}\right) & \cdots & \left\{\frac{1}{\sigma^2} - \frac{\psi_3}{\sigma^2(n\psi_3+2\sigma^2)}\right\}\cos\left(\frac{\pi X_{i,n}}{12}\right)
\end{bmatrix} \nonumber \\
& \quad \quad \times 
\begin{bmatrix}1 & \sin\left(\frac{\pi X_{i,1}}{12}\right) & \cos\left(\frac{\pi X_{i,1}}{12}\right) \\
    \vdots & \vdots & \vdots \\
    1 & \sin\left(\frac{\pi X_{i,n}}{12}\right) & \cos\left(\frac{\pi X_{i,n}}{12}\right) 
    \end{bmatrix} \nonumber \\
    &= \begin{bmatrix}\frac{n}{\sigma^2} - \frac{n^2\psi_1}{\sigma^2(n\psi_1+\sigma^2)} & 0 & 0 \\
    0 & \frac{n}{2}\left\{\frac{1}{\sigma^2} - \frac{n\psi_2}{\sigma^2(n\psi_2+2\sigma^2)}\right\} & 0 \\
    0 & 0 & \frac{n}{2}\left\{\frac{1}{\sigma^2} - \frac{n\psi_3}{\sigma^2(n\psi_3+2\sigma^2)}\right\}  \end{bmatrix}, \label{ortho_2}
\end{align} 
where (\ref{ortho_2}) is due to orthogonality of distinct terms in a trigonometric basis noted in Lemma \ref{lem:2} \citep[Lemma 1.7]{Tsybakov2009}. Now, by application of Lemma \ref{lem:3}, we find
\begin{align}
\boldsymbol{W_i}^T\boldsymbol{V_i}^{-1}\mathbb{E}(\boldsymbol{Y_i}\mid \boldsymbol{X_i}) 
&= \begin{bmatrix} \frac{1}{\sigma^2} - \frac{n\psi_1}{\sigma^2(n\psi_1+\sigma^2)} & \cdots & \frac{1}{\sigma^2} - \frac{n\psi_1}{\sigma^2(n\psi_1+\sigma^2)} \\
 \left\{\frac{1}{\sigma^2} - \frac{n\psi_2}{\sigma^2(n\psi_2+2\sigma^2)}\right\}\sin\left(\frac{\pi X_{i,1}}{12}\right) & \cdots & \left\{\frac{1}{\sigma^2} - \frac{n\psi_2}{\sigma^2(n\psi_2+2\sigma^2)}\right\}\sin\left(\frac{\pi X_{i,n}}{12}\right)\\
 \left\{\frac{1}{\sigma^2} - \frac{n\psi_3}{\sigma^2(n\psi_3+2\sigma^2)}\right\}\cos\left(\frac{\pi X_{i,1}}{12}\right) & \cdots & \left\{\frac{1}{\sigma^2} - \frac{n\psi_3}{\sigma^2(n\psi_3+2\sigma^2)}\right\}\cos\left(\frac{\pi X_{i,n}}{12}\right)
\end{bmatrix} \nonumber \\
&\quad \quad \times 
\begin{bmatrix}
\mu_0 + \phi_{c^{ }_2}(1)\theta_1\left\{- \sin(\theta_2)\sin\left(\frac{\pi X_{i,1}}{12} \right) + \cos(\theta_2)\cos\left(\frac{\pi X_{i,1}}{12} \right)\right\} \\
\vdots \\
\mu_0 + \phi_{c^{ }_2}(1)\theta_1\left\{- \sin(\theta_2)\sin\left(\frac{\pi X_{i,n}}{12} \right) + \cos(\theta_2)\cos\left(\frac{\pi X_{i,n}}{12} \right)\right\}
\end{bmatrix}, \nonumber 
\end{align} 
which yields
\begin{align*}
    \left\{\boldsymbol{W_i}^T\boldsymbol{V_i}^{-1}\mathbb{E}(\boldsymbol{Y_i}\mid \boldsymbol{X_i})\right\}_{1,1} &= \left\{\frac{1}{\sigma^2} - \frac{n\psi_1}{\sigma^2(n\psi_1+\sigma^2)}\right\} \\
    & \quad \quad \times \sum_{j=1}^n\left[ \mu_0  + \phi_{c^{ }_2}(1)\theta_1\left\{- \sin(\theta_2)\sin\left(\frac{\pi X_{i,j}}{12} \right) \right. \right.  \\
    & \quad \quad \quad \quad \left. \left. + \cos(\theta_2)\cos\left(\frac{\pi X_{i,j}}{12} \right)\right\}\right] \\
    &= \mu_0\left\{\frac{n}{\sigma^2} - \frac{n^2\psi_1}{\sigma^2(n\psi_1+\sigma^2)}\right\}
\end{align*} 
for the first element of $\boldsymbol{W_i}^T\boldsymbol{V_i}^{-1}\mathbb{E}(\boldsymbol{Y_i}\mid \boldsymbol{X_i})$,
\begin{align*}
    \left\{\boldsymbol{W_i}^T\boldsymbol{V_i}^{-1}\mathbb{E}(\boldsymbol{Y_i}\mid \boldsymbol{X_i})\right\}_{2,1} &= \left\{\frac{1}{\sigma^2} - \frac{n\psi_2}{\sigma^2(n\psi_2+2\sigma^2)}\right\} \\
    & \quad \quad \times \sum_{j=1}^n\sin\left(\frac{\pi X_{i,j}}{12}\right)\left[ \mu_0 + \phi_{c^{ }_2}(1)\theta_1\left\{- \sin(\theta_2)\sin\left(\frac{\pi X_{i,j}}{12} \right) \right. \right. \\
    &\left. \left. \quad \quad \quad \quad + \cos(\theta_2)\cos\left(\frac{\pi X_{i,j}}{12} \right)\right\} \right] \\
    &= -\frac{n\phi_{c^{ }_2}(1)\theta_1\sin(\theta_2)}{2}\left\{\frac{1}{\sigma^2} - \frac{n\psi_2}{\sigma^2(n\psi_2+2\sigma^2)}\right\}
\end{align*} 
for the second element of $\boldsymbol{W_i}^T\boldsymbol{V_i}^{-1}\mathbb{E}(\boldsymbol{Y_i}\mid \boldsymbol{X_i})$, and
\begin{align*}
    \left\{\boldsymbol{W_i}^T\boldsymbol{V_i}^{-1}\mathbb{E}(\boldsymbol{Y_i}\mid \boldsymbol{X_i})\right\}_{3,1} &= \left\{\frac{1}{\sigma^2} - \frac{\psi_3}{\sigma^2(n\psi_3+2\sigma^2)}\right\} \\
    & \quad \quad \times \sum_{j=1}^n\cos\left(\frac{\pi X_{i,j}}{12}\right)\left[ \mu_0 + \phi_{c^{ }_2}(1)\theta_1\left\{- \sin(\theta_2)\sin\left(\frac{\pi X_{i,j}}{12} \right) \right. \right. \\
    &\left. \left. \quad \quad \quad \quad + \cos(\theta_2)\cos\left(\frac{\pi X_{i,j}}{12} \right)\right\} \right] \\
    &= \frac{n\phi_{c^{ }_2}(1)\theta_1\cos(\theta_2)}{2}\left\{\frac{1}{\sigma^2} - \frac{n\psi_3}{\sigma^2(n\psi_3+2\sigma^2)}\right\}
\end{align*} 
for the third element of $\boldsymbol{W_i}^T\boldsymbol{V_i}^{-1}\mathbb{E}(\boldsymbol{Y_i}\mid \boldsymbol{X_i})$. The expected fixed parameter estimates can subsequently be expressed as
\begin{align}
    \mathbb{E}(\hat{\beta}) &= \begin{bmatrix}
    \mathbb{E}(\hat{\mu}_0) \\
    \mathbb{E}(\hat{\beta}_1) \\
    \mathbb{E}(\hat{\beta}_2)
    \end{bmatrix} \nonumber \\
    &= \left(\sum_{i=1}^M\boldsymbol{W_i}^T\boldsymbol{V_i}^{-1}\boldsymbol{W_i}\right)^{-1} \left(\sum_{i=1}^M\boldsymbol{W_i}^T\boldsymbol{V_i}^{-1}\mathbb{E}(\boldsymbol{Y_i} \mid \boldsymbol{X_i})\right) \nonumber \\
    &= \left(M\begin{bmatrix}\frac{n}{\sigma^2} - \frac{n^2\psi_1}{\sigma^2(n\psi_1+\sigma^2)} & 0 & 0 \\
    0 & \frac{n}{2}\left\{\frac{1}{\sigma^2} - \frac{n\psi_2}{\sigma^2(n\psi_2+2\sigma^2)}\right\} & 0 \\
    0 & 0 & \frac{n}{2}\left\{\frac{1}{\sigma^2} - \frac{n\psi_3}{\sigma^2(n\psi_3+2\sigma^2)}\right\}  \end{bmatrix}\right)^{-1} 
\nonumber \\
& \quad \quad \times M\begin{bmatrix} \mu_0\left\{\frac{n}{\sigma^2} - \frac{n^2\psi_1}{\sigma^2(n\psi_1+\sigma^2)}\right\}   \\
  -\frac{n\phi_{c^{ }_2}(1)\theta_1\sin(\theta_2)}{2}\left\{\frac{1}{\sigma^2} - \frac{n\psi_2}{\sigma^2(n\psi_2+2\sigma^2)}\right\} \\
 \frac{n\phi_{c^{ }_2}(1)\theta_1\cos(\theta_2)}{2}\left\{\frac{1}{\sigma^2} - \frac{n\psi_3}{\sigma^2(n\psi_3+2\sigma^2)}\right\}
\end{bmatrix} \nonumber \\
    &= \begin{bmatrix}
    \mu_0 \\
    -\phi_{c^{ }_2}(1)\theta_1\sin(\theta_2) \\
    \phi_{c^{ }_2}(1)\theta_1\cos(\theta_2)
    \end{bmatrix} \nonumber \\
    &= \begin{bmatrix}
    \mu_0 \\
    \phi_{c^{ }_2}(1)\beta_1 \\
    \phi_{c^{ }_2}(1)\beta_2
    \end{bmatrix}, \label{eq:eq_3_res}
\end{align} 
where (\ref{eq:eq_3_res}) is due to the identities in (\ref{eq:alt_to_orig}).

Consideration is now given towards computing the $\tau$ in (\ref{eq:wald}) given these parameter estimates. First, note that
\begin{align*}
L(\hat{\beta} - \hat{\beta}_{\text{Null}}) &= \begin{bmatrix}
        0 & 1 & 0 \\
        0 & 0 & 1
    \end{bmatrix}\left(\begin{bmatrix}
    \mu_0 \\
    \phi_{c^{ }_2}(1)\beta_1 \\
    \phi_{c^{ }_2}(1)\beta_2
    \end{bmatrix} - \begin{bmatrix}
    0 \\
    0 \\
    0
    \end{bmatrix} \right) \\
    &= \begin{bmatrix}
    \phi_{c^{ }_2}(1)\beta_1 \\
    \phi_{c^{ }_2}(1)\beta_2
    \end{bmatrix}.
\end{align*} 
Further,
\begin{align*}
    L\left\{\sum_{i=1}^M\boldsymbol{W_i}^T\boldsymbol{V_i}^{-1}\boldsymbol{W_i}\right\}^{-1}L^T &= M^{-1}\begin{bmatrix}
        0 & 1 & 0 \\
        0 & 0 & 1
    \end{bmatrix} \\
    & \quad \quad \times \begin{bmatrix}\frac{n}{\sigma^2} - \frac{n^2\psi_1}{\sigma^2(n\psi_1+\sigma^2)} & 0 & 0 \\
    0 & \frac{n}{2}\left\{\frac{1}{\sigma^2} - \frac{n\psi_2}{\sigma^2(n\psi_2+2\sigma^2)}\right\} & 0 \\
    0 & 0 & \frac{n}{2}\left\{\frac{1}{\sigma^2} - \frac{n\psi_3}{\sigma^2(n\psi_3+2\sigma^2)}\right\}  \end{bmatrix}^{-1} \\
    & \quad \quad \times \begin{bmatrix}
        0 & 0 \\
        1 & 0 \\
        0 & 1
    \end{bmatrix} \\
    &= M^{-1}\begin{bmatrix} \frac{2}{n}\left\{\frac{1}{\sigma^2} - \frac{n\psi_2}{\sigma^2(n\psi_2+2\sigma^2)}\right\}^{-1} & 0 \\
     0 & \frac{2}{n}\left\{\frac{1}{\sigma^2} - \frac{n\psi_3}{\sigma^2(n\psi_3+2\sigma^2)}\right\}^{-1}  \end{bmatrix},
\end{align*} 
which implies that
\begin{align*}
\left[L\left\{\sum_{i=1}^M\boldsymbol{W_i}^T\boldsymbol{V_i}^{-1}\boldsymbol{W_i}\right\}^{-1}L^T\right]^{-1} &= M\begin{bmatrix} \frac{n}{2}\left\{\frac{1}{\sigma^2} - \frac{n\psi_2}{\sigma^2(n\psi_2+2\sigma^2)}\right\} & 0 \\
     0 & \frac{n}{2}\left\{\frac{1}{\sigma^2} - \frac{n\psi_3}{\sigma^2(n\psi_3+2\sigma^2)}\right\}  \end{bmatrix}.
\end{align*} 
To conclude, the Wald test statistic can be expressed as
\begin{align*}
    \tau &= M\begin{bmatrix}
    \phi_{c^{ }_2}(1)\beta_1 &
    \phi_{c^{ }_2}(1)\beta_2
    \end{bmatrix}\begin{bmatrix} \frac{n}{2}\left\{\frac{1}{\sigma^2} - \frac{n\psi_2}{\sigma^2(n\psi_2+2\sigma^2)}\right\} & 0 \\
     0 & \frac{n}{2}\left\{\frac{1}{\sigma^2} - \frac{n\psi_3}{\sigma^2(n\psi_3+2\sigma^2)}\right\}  \end{bmatrix}\begin{bmatrix}
    \phi_{c^{ }_2}(1)\beta_1 \\
    \phi_{c^{ }_2}(1)\beta_2
    \end{bmatrix} \\
    &= \frac{Mn\phi^2_{c_2}(1)}{2}\left[\left\{\frac{1}{\sigma^2} - \frac{n\psi_2}{\sigma^2(n\psi_2+2\sigma^2)}\right\}\beta_1^2 + \left\{\frac{1}{\sigma^2} - \frac{n\psi_3}{\sigma^2(n\psi_3+2\sigma^2)}\right\}\beta_2^2 \right].
\end{align*} 

\end{proof}

\section{Derivation of Proposition \ref{prop:2}} \label{app:C}
\begin{proof}
Omitting the superscript $(g)$, the response $Y_{i,j}$ can be expressed as
\begin{align}
Y_{i,j} &= (\mu_0 + m_{i, 0}) + \theta_1\cos(X_{i,j}+\theta_{2}) + c_{i,1}\cos(X_{i,j}+c_{i,2}) + \epsilon_{i,j} \nonumber \\
&= (\mu_0 + m_{i, 0}) -\theta_1\sin(\theta_2)\sin(X_{i,j}) + \theta_1\cos(\theta_2)\cos(X_{i,j}) \nonumber \\
& \quad \quad - c_{i,1}\sin(c_{i,2})\sin(X_{i,j}) + c_{i,1}\cos(c_{i,2})\cos(X_{i,j}) + \epsilon_{i,j} \nonumber \\
&= (\mu_0 + m_{i, 0}) -\left\{\theta_1\sin(\theta_2) + c_{i,1}\sin(c_{i,2})\right\}\sin(X_{i,j}) \nonumber \\
& \quad \quad + \left\{\theta_1\cos(\theta_2) + c_{i,1}\cos(c_{i,2})\right\}\cos(X_{i,j}) + \epsilon_{i,j} \nonumber \\ 
&= (\mu_0 + m_{i, 0}) + \sqrt{\left\{\theta_1\sin(\theta_2) + c_{i,1}\sin(c_{i,2})\right\}^2 + \left\{\theta_1\cos(\theta_2) + c_{i,1}\cos(c_{i,2})\right\}^2} \nonumber \\
    & \quad \quad \times \cos\left[\frac{\pi X_{i,j}}{12} + \atan\{-\theta_1\sin(\theta_2) - c_{i,1}\sin(c_{i,2}), \theta_1\cos(\theta_2) + c_{i,1}\cos(c_{i,2})\}\right] \nonumber \\
    & \quad \quad + \epsilon_{i,j} \label{eq:res_p2},
\end{align} 
where (\ref{eq:res_p2}) is due to the identities presented in (\ref{eq:alt_to_orig}).

\end{proof}

\section{Derivation of Cosinor Model Phase-Shift Variance} \label{app:D}
Define $h(\mu_0, \beta_1, \beta_2) = \mathrm{atan2}(-\beta_1, \beta_2) = \theta_2$, which implies
\begin{align*}
    \nabla h(\mu_0, \beta_1, \beta_2) &= \begin{bmatrix} 0 & \frac{-\beta_2}{\sqrt{\beta_1^2+\beta_2^2}} & \frac{\beta_1}{\sqrt{\beta_1^2+\beta_2^2}}
    \end{bmatrix}.
\end{align*} 
Further, let $\hat{\Sigma}$ denote the estimated covariance matrix for the parameter vector $[\hat{\mu}_0, \hat{\beta}_1, \hat{\beta}_2]^T$. By application of the Delta method \citep[Theorem 1.3]{Boos2013},
\begin{align*}
    \mathrm{Var}(\hat{\theta}_2) &= \begin{bmatrix} 0 & \frac{-\hat{\beta}_2}{\sqrt{\hat{\beta}_1^2+\hat{\beta}_2^2}} & \frac{\hat{\beta}_1}{\sqrt{\hat{\beta}_1^2+\hat{\beta}_2^2}}
    \end{bmatrix} \begin{bmatrix}
    \hat{\Sigma}_{1,1} & \hat{\Sigma}_{1,2} & \hat{\Sigma}_{1,3} \\
    \hat{\Sigma}_{1,2} & \hat{\Sigma}_{2,2} & \hat{\Sigma}_{2,3} \\
    \hat{\Sigma}_{1,3} & \hat{\Sigma}_{2,3} & \hat{\Sigma}_{3,3}
    \end{bmatrix} \begin{bmatrix} 0 \\ \frac{-\hat{\beta}_2}{\sqrt{\hat{\beta}_1^2+\hat{\beta}_2^2}} \\ \frac{\hat{\beta}_1}{\sqrt{\hat{\beta}_1^2+\hat{\beta}_2^2}}
    \end{bmatrix}  \\
    &= \frac{\hat{\Sigma}_{2,2}\hat{\beta}_2^2+\hat{\Sigma}_{3,3}\hat{\beta}_1^2 - 2\hat{\Sigma}_{2,3}\hat{\beta}_1\hat{\beta}_2}{\hat{\beta}_1^2+\hat{\beta}_2^2}.
\end{align*} 

\section{Overview of Simulation Study Design from Section \ref{sec:3}} \label{app:E}
We present the following six simulation settings, where the names in parentheses are from the BioCycle$_{\text{Form}}$ software:
\begin{description}
    \item[Setting 1 (Cosine).] $\mu^{(g)}_0=6$, $\theta^{(g)}_1 = 0.5$, $\theta^{(g)}_2 = 0$, $m^{(g)}_{i,0} \sim \mathrm{N}(0, 1)$, $c^{(g)}_{i,1} \sim \mathrm{TN}(0, 0.5, -0.3, 0.3)$, $c^{(g)}_{i,2} \sim \mathrm{TN}(0, \pi^2/36, -\pi, \pi)$, $M = 10$, $n_i = 12$ (for all $i$), $X_{i,j}=2j$, $\epsilon^{(g)}_{i,j} \sim \text{N}(0, 0.25)$, and $$Y^{(g)}_{i,j} = (\mu^{(g)}_0+m^{(g)}_{i,0}) + (\theta^{(g)}_1+c^{(g)}_{i,1})\cos\left(\frac{\pi X_{i,j}}{12} + \theta^{(g)}_2 + c^{(g)}_{i,2} \right)+\epsilon^{(g)}_{i,j}.$$  
    \item[Setting 2 (Cosine + Outlier).] $\mu^{(g)}_0=6$, $\theta^{(g)}_1 = 0.5$, $\theta^{(g)}_2 = \pi/6$, $m^{(g)}_{i,0} \sim \mathrm{N}(0, 1)$, $c^{(g)}_{i,1} \sim \mathrm{TN}(0, 0.5, -0.3, 0.3)$, $c^{(g)}_{i,2} \sim \mathrm{TN}(0, \pi^2/36, -\pi, \pi)$, $M = 10$, $n_i = 8$ (for all $i$), $X_{i,j}=3j$, $\epsilon^{(g)}_{i,j} \sim \text{N}(0, 0.25)$, $P_{i,j} = \mathrm{Unif}(0, 1)$, and $$Y^{(g)}_{i,j} = g(P_{i,j})\left\{(\mu^{(g)}_0+m^{(g)}_{i,0}) + (\theta^{(g)}_1+c^{(g)}_{i,1})\cos\left(\frac{\pi X_{i,j}}{12} + \theta^{(g)}_2 + c^{(g)}_{i,2} \right)+\epsilon^{(g)}_{i,j}\right\},$$where $$g(P_{i,j}) =  \begin{cases} 
      1 & P_{i,j} \leq 0.95, \\
      1.5 & \text{otherwise.}
   \end{cases}
$$
    \item[Setting 3 (Cosine2).] $\mu^{(g)}_0=6$, $\theta^{(g)}_1 = 0.4$, $\theta^{(g)}_2 = \pi/3$, $m^{(g)}_{i,0} \sim \mathrm{N}(0, 1)$, $c^{(g)}_{i,1} \sim \mathrm{TN}(0, 0.5, -0.3, 0.3)$, $c^{(g)}_{i,2} \sim \mathrm{TN}(0, \pi^2/16, -\pi, \pi)$, $M = 10$, $n_i = 6$ (for all $i$), $X_{i,j}=4j$, $\epsilon^{(g)}_{i,j} \sim \text{N}(0, 0.25)$, and $$Y^{(g)}_{i,j} = (\mu^{(g)}_0+m^{(g)}_{i,0}) + (\theta^{(g)}_1+c^{(g)}_{i,1})\left\{\cos\left(\frac{\pi X_{i,j}}{12} - \theta^{(g)}_2 + c^{(g)}_{i,2} \right) + \frac{\cos\left(\frac{\pi X_{i,j}}{4} - \frac{\pi}{2} - \theta^{(g)}_2 + c^{(g)}_{i,2} \right)}{2}\right\}+\epsilon^{(g)}_{i,j}.$$
    \item[Setting 4 (Peak).] $\mu^{(g)}_0=6$, $\theta^{(g)}_1 = 0.4$, $\theta^{(g)}_2 = \pi/2$, $m^{(g)}_{i,0} \sim \mathrm{N}(0, 1)$, $c^{(g)}_{i,1} \sim \mathrm{TN}(0, 0.5, -0.3, 0.3)$, $c^{(g)}_{i,2} \sim \mathrm{TN}(0, \pi^2/16, -\pi, \pi)$, $M = 10$, $n_i = 12$ (for all $i$), $X_{i,j}=2j$, $\epsilon^{(g)}_{i,j} \sim \text{N}(0, 0.25)$, and $$Y_{i,j} = (\mu^{(g)}_0+m^{(g)}_{i,0}) + (\theta^{(g)}_1+c^{(g)}_{i,1})\left[-1+2\left\{\cos\left(\frac{\pi X_{i,j}}{24} + \frac{\theta^{(g)}_2}{2} + c^{(g)}_{i,2} \right)\right\}^{10} \right]+\epsilon^{(g)}_{i,j}.$$
    \item[Setting 5 (Triangle).] $\mu^{(g)}_0=6$, $\theta^{(g)}_1 = 0.3$, $\theta^{(g)}_2 = 2\pi/3$, $m^{(g)}_{i,0} \sim \mathrm{N}(0, 1)$, $c^{(g)}_{i,1} \sim \mathrm{TN}(0, 0.5, -0.3, 0.3)$, $c^{(g)}_{i,2} \sim \mathrm{TN}(0, \pi^2/9, -\pi, \pi)$, $M = 10$, $n_i = 8$ (for all $i$), $X_{i,j}=3j$, $\epsilon^{(g)}_{i,j} \sim \text{N}(0, 0.25)$, and 
    \begin{align*}
        Y^{(g)}_{i,j} &= (\mu^{(g)}_0+m^{(g)}_{i,0}) \\
        & \quad + \frac{8(\theta^{(g)}_1+c^{(g)}_{i,1})}{\pi^2}\left[\sin\left(\frac{\pi X_{i,j}}{12} - \frac{\pi}{2} - \theta^{(g)}_2 + c^{(g)}_{i,2} \right) \right. \\
        & \quad \quad \quad \left. - \left(\frac{1}{9}\right)\sin\left\{3\left(\frac{\pi X_{i,j}}{12} - \frac{\pi}{2} - \theta^{(g)}_2 + c^{(g)}_{i,2} \right)\right\} \right. \\
        & \quad \quad \quad \left. + \left(\frac{1}{25}\right)\sin\left\{5\left(\frac{\pi X_{i,j}}{12} - \frac{\pi}{2} - \theta^{(g)}_2 + c^{(g)}_{i,2} \right)\right\}\right] \\
        & \quad +\epsilon^{(g)}_{i,j}.
    \end{align*} 
    \item[Setting 6 (Square).] $\mu^{(g)}_0=6$, $\theta^{(g)}_1 = 0.3$, $\theta^{(g)}_2 = 5\pi/6$, $m^{(g)}_{i,0} \sim \mathrm{N}(0, 1)$, $c^{(g)}_{i,1} \sim \mathrm{TN}(0, 0.5, -0.3, 0.3)$, $c^{(g)}_{i,2} \sim \mathrm{TN}(0, \pi^2/9, -\pi, \pi)$, $M = 10$, $n_i = 6$ (for all $i$), $X_{i,j}=4j$, $\epsilon^{(g)}_{i,j} \sim \text{N}(0, 0.25)$, and \begin{align*}
        Y^{(g)}_{i,j} &= (\mu^{(g)}_0+m^{(g)}_{i,0}) \\
        & \quad + \frac{4(\theta^{(g)}_1+c^{(g)}_{i,1})}{\pi}\left[\sin\left(\frac{\pi X_{i,j}}{12} - \frac{\pi}{2} - \theta^{(g)}_2 + c^{(g)}_{i,2} \right) \right. \\
        & \quad \quad \quad \left. + \left(\frac{1}{3}\right)\sin\left\{3\left(\frac{\pi X_{i,j}}{12} - \frac{\pi}{2} - \theta^{(g)}_2 + c^{(g)}_{i,2} \right)\right\} \right. \\
        & \quad \quad \quad \left. + \left(\frac{1}{5}\right)\sin\left\{5\left(\frac{\pi X_{i,j}}{12} - \frac{\pi}{2} - \theta^{(g)}_2 + c^{(g)}_{i,2} \right)\right\}\right] \\
        & \quad +\epsilon^{(g)}_{i,j}.
    \end{align*} 
 \end{description}
Here, $\mathrm{N}(\mu, \sigma^2)$ denotes a normal distribution with mean $\mu$ and variance $\sigma^2$; $\mathrm{TN}(\mu, \sigma^2, a, b)$ a truncated normal distribution with mean $\mu$, variance $\sigma^2$, lower bound $a$, and upper bound $b$; and $\mathrm{Unif}(a,b)$ denotes a uniform distribution with support from $a$ to $b$. 

In each simulation setting, the true population amplitude parameter $\theta_1^{(g)} = 0.3$, which has been reported as being frequently observed across genes \citep{MllerLevet2013}. Further, for Settings 1 and 4, a sample is collected from each individual once every two hours over a 24 hour period; Settings 2 and 5 once every three hours over a 24 hour period; and Settings 3 and 6 once every four hours over a 24 hour period. The purpose of this design is to emulate scenarios where there is a limited budget for sample collection \citep{Brooks2023}.

\bibliographystyle{apalike} 
\bibliography{bibliography}

\end{document}